\documentclass[draftclsnofoot,onecolumn]{IEEEtran}



\usepackage{amsthm,amsmath,amssymb,cite}
\usepackage{eucal}
\usepackage{xifthen}
\usepackage{mathtools}
\usepackage{enumerate}
\usepackage{microtype}
\usepackage{xspace}
\usepackage{bm}
\usepackage{fancyhdr}
\usepackage{lastpage}
\usepackage[center]{caption}
\usepackage{xcolor}
\allowdisplaybreaks

\newcommand{\mbs}[1]{\bm{#1}}
\newcommand{\vect}[1]{{\lowercase{\mbs{#1}}}}
\newcommand{\mat}[1]{{\uppercase{\mbs{#1}}}}

\renewcommand{\Bmatrix}[1]{\begin{bmatrix}#1\end{bmatrix}}
\newcommand{\Pmatrix}[1]{\begin{array}{ll}#1\end{array}}

\newcommand{\T}{{\scriptscriptstyle\mathsf{T}}}

\renewcommand{\Re}[1][]{\ifthenelse{\isempty{#1}}{\operatorname{Re}}{\operatorname{Re}\left(#1\right)}}
\renewcommand{\Im}[1][]{\ifthenelse{\isempty{#1}}{\operatorname{Im}}{\operatorname{Im}\left(#1\right)}}

\newcommand{\vv}{\vect{v}}

\newcommand{\yv}{\vect{y}}

\newcommand{\Am}{\mat{a}}
\newcommand{\Bm}{\mat{b}}
\newcommand{\Cm}{\mat{c}}
\newcommand{\Dm}{\mat{d}}

\newcommand{\Hm}{\mat{h}}

\newcommand{\Pm}{\mat{p}}
\newcommand{\Qm}{\mat{q}}

\newcommand{\Vm}{\mat{V}}

\newcommand{\Xm}{\mat{x}}
\newcommand{\Ym}{\mat{y}}
\newcommand{\Zm}{\mat{z}}


\newcommand{\Ac}{{\mathcal A}}
\newcommand{\Bc}{{\mathcal B}}
\newcommand{\Cc}{{\mathcal C}}

\newcommand{\Ec}{{\mathcal E}}

\newcommand{\Gc}{{\mathcal G}}
\newcommand{\Hc}{{\mathcal H}}
\newcommand{\Ic}{{\mathcal I}}
\newcommand{\Jc}{{\mathcal J}}
\newcommand{\Kc}{{\mathcal K}}

\newcommand{\Pc}{{\mathcal P}}

\newcommand{\Rc}{{\mathcal R}}
\newcommand{\Sc}{{\mathcal S}}
\newcommand{\Tc}{{\mathcal T}}
\newcommand{\Uc}{{\mathcal U}}
\newcommand{\Wc}{{\mathcal W}}
\newcommand{\Vc}{{\mathcal V}}
\newcommand{\Xc}{{\mathcal X}}

\newcommand{\CC}{\mathbb{C}}

\newcommand{\Id}{\mat{\mathrm{I}}}

\newcommand{\CN}[1][]{\ifthenelse{\isempty{#1}}{\mathcal{N}_{\mathbb{C}}}{\mathcal{N}_{\mathbb{C}}\left(#1\right)}}

\renewcommand{\P}[1][]{\ifthenelse{\isempty{#1}}{\mathbb{P}}{\mathbb{P}\left(#1\right)}}
\newcommand{\E}[1][]{\ifthenelse{\isempty{#1}}{\mathbb{E}}{\mathbb{E}\left(#1\right)}}
\renewcommand{\det}[1][]{\ifthenelse{\isempty{#1}}{\mathrm{det}}{\mathrm{det}\left(#1\right)}}
\newcommand{\trace}[1][]{\ifthenelse{\isempty{#1}}{\mathrm{tr}}{\mathrm{tr}\left(#1\right)}}
\newcommand{\rank}[1][]{\ifthenelse{\isempty{#1}}{\mathrm{rank}}{\mathrm{rank}\left(#1\right)}}
\newcommand{\diag}[1][]{\ifthenelse{\isempty{#1}}{\mathrm{diag}}{\mathrm{diag}\left(#1\right)}}

\DeclarePairedDelimiter\abs{\lvert}{\rvert}
\DeclarePairedDelimiter\Abs{\lvert}{\rvert^2}


\newcommand{\defeq}{\triangleq}

\newtheorem{remark}{Remark}
\newtheorem{definition}{Definition}
\newtheorem{theorem}{Theorem}
\newtheorem{example}{Example}
\newtheorem{corollary}{Corollary}

\DeclareMathAlphabet{\mathcal}{OMS}{cmsy}{m}{n}
\newcommand{\sym}{\mathrm{sym}}

\usepackage{graphicx}
\usepackage[font=small]{caption}
\usepackage{subcaption}

\newcommand{\DS}{\mathrm{desired~signal}}
\newcommand{\AI}{\mathrm{aligned~interferences}}
\newcommand{\idx}{\mathrm{idx}}
\newcommand{\nn}{\nonumber}
\newcommand{\IC}{\mathrm{\bf IC}}

\begin{document}
\title{Topological Interference Management with Transmitter Cooperation}
\author{
\authorblockN{Xinping Yi, \emph{Student Member, IEEE},  David Gesbert, \emph{Fellow, IEEE}}
\thanks{The present work was carried out within the framework of Celtic-Plus SHARING project.}
\thanks{X.~Yi and D. Gesbert are with the Mobile Communications Dept., EURECOM, 06560 Sophia Antipolis, France (email: \{xinping.yi, david.gesbert\}@eurecom.fr).}
\thanks{This work was presented in part at ISIT 2014, Honolulu, HI, USA \cite{Yi:2014ISIT}.}
}

\maketitle

\begin{abstract}
Interference networks with no channel state information at the transmitter (CSIT) except for the knowledge of the connectivity graph have been recently studied under the topological interference management (TIM) framework. In this paper, we consider a similar problem with topological knowledge but in a distributed broadcast channel setting, i.e. a network where transmitter cooperation is enabled. We show that the topological information can also be exploited in this case to strictly improve the degrees of freedom (DoF) as long as the network is not fully connected, which is a reasonable assumption in practice.
Achievability schemes based on selective graph coloring, interference alignment, and hypergraph covering, are proposed. Together with outer bounds built upon generator sequence, the concept of compound channel settings, and the relation to index coding, we characterize the symmetric DoF for so-called regular networks with constant number of interfering links, and identify the sufficient and/or necessary conditions for the arbitrary network topologies to achieve a certain amount of symmetric DoF.
\end{abstract}

\section{Introduction}
The advancing interference management techniques have sharpened our understanding in the fundamental limits (e.g., channel capacity) of wireless networks with interference.
The degrees of freedom (DoF) characterization serves as the first-order capacity approximation for wireless networks, by which the obtained insights can be transferred to practical scenarios. The DoF indicates the system throughput scaling with the signal-to-noise ratio (SNR) in the high SNR regime. Although the DoF as a figure of merit has limitations \cite{Lozano:Coop}, it has proved useful in understanding the fundamental limits of several cooperative communication protocols, such as interference alignment (IA) \cite{Jafar:IA} and network MIMO \cite{Gesbert:2010JSAC} among many others. A common feature behind much of the analysis of cooperation benefits in either interference channels (IC) or broadcast channels (BC) has been the availability of instantaneous channel state information at the transmitters (CSIT), with exceptions dealing with so-called limited feedback schemes. Nevertheless, most efforts on limited \cite{Jindal:2006,Love:2008,Caire:2010}, imperfect \cite{Lapidoth:2006, Caire:2010}, or delayed feedback settings \cite{MAT,Vaze:2012IC,SalmanDelayed, Yang:2013MISO,Gou:2012Mixed,Yi:2014MIMO}, among others \cite{Jafar:BIA, Alternating, Lee:2012, Slock:2012}, rely on the assumption that the transmitters are endowed with an instantaneous form of channel information whose coherence time is similar to that of the actual fading channels, so that a good fraction or the totality of the DoF achieved with perfect CSIT can be obtained.  Such an assumption is hard to realize in many practical scenarios, such as cellular networks \cite{Caire:2010Rethinking}. Conversely, it has been reported in \cite{Jafar:noCSIT,Guo:2012noCSIT,Huang:2012noCSIT,Vaze:noCSIT} that a substantial DoF gain cannot be realized in IC or BC scenario without CSIT. A closer examination of these pessimistic results however reveals that many of the considered networks are fully connected, in that any transmitter interferes with any non-intended receiver in the network.

Owing to the nodes' random placement, the fact that power decays fast with distance, the existence of obstacles, and local shadowing effects, we may argue that certain interference links are unavoidably much weaker than others, suggesting the use of a partially-connected graph to model, at least approximately, the network topology.  An interesting question then arises as to whether the partial connectivity could be leveraged to allow the use of some relaxed form of CSIT while still achieving a substantial DoF performance. In particular the exploitation of topological information, simply indicating which of the interfering links are weak enough to be approximated by zero interference and which links are too strong to do so, is of great practical interest. The evidence that the topological information is beneficial can be traced back to \cite{graph_scheduling}, where some local topological information was exploited to improve network performance by some coloring schemes such as ``coded set scheduling''.

Most recently, this question was intensively addressed in \cite{Jafar:CBIA,Jafar:2013TIM,Avestimehr:2013TIM,Jafar:1D,Gou:2013TIM,Sun:2013TIM,Sezgin:TIM,Geng:TIM}, in the context of the interference channel and X channel with topology information, and focusing on the symmetric DoF.
These different topological interference management (TIM) approaches arrive at a common conclusion that the symmetric DoF can be significantly improved with the sole topology information, {\em provided that the network is partially connected}. In \cite{Jafar:2013TIM}, the TIM problem is bridged with the index coding problem \cite{ISCOD,Index2011,Jafar:2012Index,Sun:2013Index,Kim:2013Index,IndexLP}, stating that the optimal solution to the latter is the outer bound of the former, and the linear solution to the former is automatically transferrable to the latter. The ensuing extension in \cite{Sun:2013Index} that attacks the TIM problem from an index coding perspective, covers a wider class of network topologies, partly settling the problem for the sparse networks with each receiver interfered by at most two interfering links.

Given such promising results, a logical question is whether the TIM framework can somehow be exploited in the context of an interference network where a message exchange mechanism between transmitters pre-exists. For instance, in future LTE-A cellular networks, a backhaul routing mechanism ensures that base stations selected to cooperate under the
coordinated multi-point (CoMP) framework receive a copy of the messages to be transmitted. 
With perfect instantaneous CSIT, the benefit of transmitter cooperation was investigated in fully connected IC \cite{CoMP2012} and partially connected IC \cite{ICCoMP2012}.
Still, the exchange of timely CSI is challenging due to the rapid obsolescence of instantaneous CSI and the latency of backhaul signaling links. In this case, a broadcast channel over distributed transmitters (a.k.a.~network MIMO) ensues, with a lack of instantaneous CSIT. The problem raised by this paper concerns the use of topology information in this setting. We follow the same strategy as \cite{Jafar:2013TIM,Avestimehr:2013TIM} in targeting the symmetric DoF as a simple figure of merit. By resorting to interference avoidance and alignment techniques, we characterize the achievable and/or optimal symmetric DoF of the distributed BC with topology information in several scenarios of interest.

More specifically, our contributions are organized as follows:
\begin{itemize}
\item A graph theoretic perspective will be provided in Section III, in which we propose an interference avoidance approach built upon fractional selective graph coloring over the square of line graph of the original network topology.
In doing so, the optimal symmetric DoF of three-cell networks with all possible topologies is determined, by a new outer bound on the basis of the concept of generator sequence.
\item An interference alignment perspective will be also offered in Section IV by introducing an alignment-feasible graph to show the feasibility of interference alignment between any two messages.
The sufficient conditions for arbitrary network topologies to achieve a certain amount of symmetric DoF are identified with this graph, by which we also identify the achievable symmetric DoF of so-called {\em regular} networks (i.e., network topologies with same number of interfering links at all transmitters/receivers). Further, the optimality for the Wyner-type regular networks (i.e., with only one interfering link) is characterized with the aid of an outer bound based on an application of compound settings.
Lastly, the above alignment feasibility condition is generalized to arbitrary number of messages, leading us to a construction of a hypergraph, by which achievable symmetric DoF of arbitrary network topologies are consequently established via hypergraph covering.
\item In Section IV, we also bridge our problem to index coding problems, letting the outer bounds of the latter serve our problem as well, by which we identify the sufficient and necessary condition when time division is symmetric DoF optimal.
\end{itemize}

\underline{\bf Notation}: 
Throughout this paper, we define $\Kc \defeq \{1,2,\dots,K\}$, and $[n] \defeq \{1,2,\dots,n\}$ for any integer $n$. Let $A$, $\Ac$, and $\Am$ represent a variable, a set, and a matrix/vector, respectively. In addition, $\Ac^c$ is the complementary set of $\Ac$, and $\abs{\Ac}$ is the cardinality of the set $\Ac$. $\Am_{ij}$ or $[\Am]_{ij}$ presents the $ij$-th entry of the matrix $\Am$, and $\Am_{i}$ or $[\Am]_i$ is the $i$-th row of $\Am$. $A_{\Sc} \defeq \{A_i, i \in \Sc\}$, $\Ac_{\Sc} \defeq\cup_{i \in \Sc} \Ac_i$, and $\Am_{\Sc}$ denotes the submatrix of $\Am$ with the rows out of $\Sc$ removed. Define $\Ac \backslash a \defeq \{x| x \in \Ac, x \neq a\}$ and $\Ac_1 \backslash \Ac_2 \defeq \{x | x \in \Ac_1, x \notin \Ac_2\}$. We use $\Id_M$ to denote an $M \times M$ identity matrix where the dimension is omitted whenever the confusion is not probable. $\mathbf{1}(\cdot)$ is the indicator function with values 1 when the parameter is true and 0 otherwise. $O(\cdot)$ follows the standard Landau notation. Logarithms are in base 2.

\section{System Model}

\subsection{Channel Model}
We consider a $K$-cell partially connected cellular network, in which each transmitter (e.g. base station)  is equipped with one antenna and serves one single-antenna receiver (e.g., user).
This cellular network can be modeled by a partially connected interference channel. The received signal for Receiver $j$ at time instant $t$ can be modeled by
\begin{align}
Y_j(t) = \sum_{i \in \Tc_j} h_{ji}(t) X_i(t) + Z_j(t)
\end{align}
where $h_{ji}(t)$ is the channel coefficient between Transmitter $i$ and Receiver $j$ at time instant $t$ and the nonzero channel coefficients drawn from a continuous distribution are independent and identically distributed (i.i.d.), the transmitted signal $X_i(t)$ is subject to the average power constraint, i.e., $\E[\Abs{X_i(t)}] \le P$, with $P$ being the average transmit power, and $Z_j(t)$ is the additive white Gaussian noise with zero-mean and unit-variance and is independent of transmitted signals and channel coefficients.

We denote by $\Tc_k$ the transmit set containing the indices of transmitters that are {\em connected} to Receiver $k$, and by $\Rc_k$ the receive set consisting of the indices of receivers that are {\em connected} to Transmitter $k$, for $k\in \Kc\defeq \{1,2,\dots,K\}$.  In practice, the partial connectivity may be modeled by taking those interference links that are ``weak enough" (due to distance and/or shadowing) to zero. For instance  in \cite{Jafar:2013TIM}, a reasonable model is suggested whereby a link is disconnected if the received signal power falls below the effective noise floor. However, other models maybe envisioned and the study of how robust the derived schemes are with respect to modeling errors is an open problem beyond the scope of this paper.

Conforming with TIM framework, the actual channel realizations are not available at the transmitters, yet the network topology (i.e., $\Tc_k, \Rc_k, \forall k$) is known by all transmitters and receivers. A typical transmitter cooperation is enabled in the form of message sharing, where every transmitter is endowed the messages desired by its connected receivers, i.e., Transmitter $k$ has access to a subset of messages $W_{\Rc_k}$, where $W_j$ $(j \in \Rc_k)$ denotes the message desired by Receiver $j$. We refer hereafter to TIM problem with transmitter cooperation as ``TIM-CoMP'' problem. Each message may originate from multiple transmitters but is intended for one unique receiver. As such, the so-called direct links in TIM settings are not required to be present here. We consider a block fading channel, where the channel coefficients stay constant during a coherence time $\tau_c$ but vary to independent realizations in the next coherence time. 
The coherence time is $\tau_c=1$ by default unless otherwise specified.  For channel coefficients and transmitted signals, the time index $t$ is omitted during the coherence time for the sake of brevity. The network topology is fixed throughout the communication.

While message sharing creates the opportunity of transmitter cooperation, it also imposes some challenges. For the multiple-unicast TIM problem in partially connected IC or X networks \cite{Jafar:CBIA,Jafar:2013TIM,Avestimehr:2013TIM}, each message has a unique source and a unique destination that are determined \emph{a priori} such that the desired and interfering links are known. By contrast, with transmitter cooperation, the message can be sent from \emph{any} source that has access to this message. Consequently, the approaches developed for IC and X networks cannot be directly applied here, as the desired and interference links are not able to be predetermined.

For notational convenience, we define $\Hc \defeq \{h_{ji},\forall~i,j\}$ as the ensemble of channel coefficients, and denote by $\Gc$ the network topology known by all transmitters and receivers.

\subsection{Definitions}
Throughput this paper, we treat partially connected networks as bipartite graphs $\Gc=(\Uc,\Vc,\Ec)$, where the transmitters and receivers are two sets of vertices, denoted by $\Uc$ and $\Vc$, and the connectivities between transmitters and receivers are represented as edges, e.g., $e_{ij} \in \Ec$ where $i \in \Uc$ and $j \in \Vc$. 
\begin{definition}[Topology Matrix]
For a network topology, the topology matrix $\Bm$ is defined as
  \begin{align}
    [\Bm]_{ji} = \left\{ \Pmatrix{1, & e_{ij} \in \Ec\\ 0, & \text{otherwise}} \right..
  \end{align}
\end{definition}
\begin{definition} [Special Network Topologies]
A $(K,d)$-{\rm \bf regular network} refers to the $K$-cell network where each receiver will overhear the signals from the transmitter with the same index as well as the successive $d-1$ ones, i.e., $\Tc_j = \{j,j+1,\dots,j+d-1\}$, and any network whose topology graph is {\em similar} to this one. The network topologies except regular networks are referred to as irregular networks. One typical example of irregular networks is the {\rm \bf triangular network}, which refers to a category of cellular networks with $\Tc_j = \{1,\dots,j\}$ (i.e., topology matrix is lower triangular) or $\Tc_j = \{j,\dots,K\}$ (i.e., topology matrix is an upper triangular), as well as those whose topology graphs are similar to either one.
\end{definition}

A rate tuple $(R_1,\dots,R_K)$ is said to be {\em achievable} to TIM-CoMP problems, if these exists a $(2^{nR_1},\dots,2^{nR_K},n)$ code scheme including the following elements:
\begin{itemize}
\item $K$ message sets $\Wc_k \defeq [1:2^{nR_k}]$, from which the message $W_k$ is uniformly chosen, $\forall~k \in \Kc$;
\item one encoding function for Transmitter~$i$ $(\forall~i \in \Kc)$:
\begin{align}
X_i(t) = f_{i,t} \left( W_{\Rc_i}, \ \Gc \right),
\end{align}
where only a subset of messages $W_{\Rc_i}$ is available at Transmitter $i$ for encoding;
\item one decoding function for Receiver~$j$ $(\forall~j \in \Kc)$:
\begin{align}
\hat{W}_j = g_j \left(Y_j^n, \ \Hc^n, \ \Gc \right),
\end{align}
\end{itemize}
such that the average decoding error probability is vanishing as the code length $n$ tends to infinity. The capacity region $\Cc$ is defined as the set of all achievable rate tuples.

In this work, we follow the strategy of \cite{Jafar:CBIA,Jafar:2013TIM,Avestimehr:2013TIM,Jafar:2012Index,Sun:2013Index,Gou:2013TIM} and set the symmetric DoF (i.e., the DoF which can be achieved by all users simultaneously) as our main figure of merit.
\begin{definition} [Symmetric DoF]
\begin{align}
d_{\sym} = \limsup_{P \to \infty}  \sup_{(R_{\sym},\dots,R_{\sym}) \in \Cc} \frac{R_{\sym}}{\log P}
\end{align}
where $P$ is the average transmit power.
\end{definition}

\section{A Graph Theoretical Perspective}
As a baseline, an interference avoidance approach (also known as orthogonal access \cite{Jafar:1D}) is first presented in Theorem~\ref{theorem:IAvoid} for general topologies with the aid of graph coloring, followed by an outer bound in Theorem~\ref{theorem:Gen} built upon the concept of generator, by which we are able to characterize the optimality for three-cell networks with arbitrary topologies and triangular networks.

\subsection{Interference Avoidance via Selective Graph Coloring}

Before proceeding further, we introduce the following definition generalized from the standard graph coloring. Some basic graph theoretic definitions are recalled in Appendix A.

\begin{definition} [Fractional Selective Graph Coloring]
\label{def:dis-color}
Consider an undirected graph $\Gc=(\Vc,\Ec)$ with a vertex partition $\mathbb{V}=\{\Vc_1,\Vc_2,\dots,\Vc_p\}$ where $\cup_{i=1}^p \Vc_i=\Vc$ and $\Vc_i \cap \Vc_j = \emptyset$, $\forall~i\ne j$. The portion $\Vc_i$ $(i \in [p] \defeq \{1,2,\dots,p\})$ is called a cluster.
A graph with the partition $\mathbb{V}$ is said to be {\bf selectively $n:m$-colorable}, if
\begin{itemize}
\item each cluster $\Vc_i$ $(\forall~i)$ is assigned a set of $m$ colors drawn from a palette of $n$ colors, no matter which vertex in the cluster receives;
\item any two adjacent vertices have no colors in common.
\end{itemize}
Denote by $s\chi_f (\Gc,\mathbb{V})$ the {\bf fractional selective chromatic number} of the above selective coloring over the graph $\Gc$ with the partition $\mathbb{V}$, which is defined as
\begin{align}
s\chi_f(\Gc,\mathbb{V}) = \lim_{m \to \infty} \frac{s\chi_m(\Gc,\mathbb{V})}{m} = \inf_m \frac{s\chi_m(\Gc,\mathbb{V})}{m}
\end{align}
where $s\chi_m(\Gc,\mathbb{V})$ is the minimum $n$ for the selective $n:m$-coloring associated with the partition $\mathbb{V}$.
\end{definition}

\begin{remark}\normalfont
If $m=1$, fractional selective graph coloring boils down to standard selective graph coloring (a.k.a.~partition coloring) \cite{partcoloring,selcoloring}. If $\abs{\Vc_i}=1$ $(\forall~i \in [p])$, then fractional selective graph coloring will be reduced to standard fractional graph coloring.
\end{remark}

\begin{theorem} [Achievable DoF via Graph Coloring]
\label{theorem:IAvoid}
For TIM-CoMP problems with arbitrary topologies, the symmetric DoF
\begin{align}
d_{\sym} = \frac{1}{s\chi_f(\Gc_{e}^2, \mathbb{V}_e)}
\end{align}
can be achieved by interference avoidance (i.e., orthogonal access) built upon fractional selective graph coloring, where
\begin{itemize}
\item $\Gc_e$: the line graph of network topology $\Gc$, where the vertices in $\Gc_e$ are edges of $\Gc$;
\item $\mathbb{V}_e$: a vertex partition of $\Gc_e$, and specifically vertices in $\Gc_e$ whose corresponding edges in $\Gc$ have a common receiver form a cluster;
\item $\Gc_{e}^2$: the square of $\Gc_e$, in which any two vertices in $\Gc_e$ with distance no more than 2 are joint with an edge;
\item $s\chi_f$: fractional selective chromatic number as defined in Definition \ref{def:dis-color}.
\end{itemize}
\end{theorem}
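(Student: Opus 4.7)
My approach is to interpret a selective $n{:}m$ coloring of $(\Gc_e^2,\mathbb{V}_e)$ as a time-sharing schedule with per-message transmitter selection. I would operate over $n$ channel uses, one per color. If color $c$ is assigned, inside the cluster associated with receiver $k$, to the vertex representing edge $e_{ik}$ of $\Gc$, then in slot $c$ I let transmitter $i\in\Tc_k$ emit a fresh Gaussian symbol carrying a sub-message of $W_k$. The partition structure of $\mathbb{V}_e$ guarantees that each receiver is served in exactly $m$ of the $n$ slots, and that in every slot at most one transmitter in $\Tc_k$ is active for $W_k$, because all vertices in a single cluster share a receiver and are therefore pairwise adjacent in $\Gc_e$ and hence in $\Gc_e^2$.

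The technical heart of the argument is to verify that adjacency in $\Gc_e^2$ captures precisely the pairs of scheduled links that cannot coexist. For two edges $e_{ik}$ and $e_{i'k'}$ of $\Gc$ with $i\ne i'$ and $k\ne k'$, distance exactly two in $\Gc_e$ forces the intermediate line-graph vertex to correspond to either the edge $e_{ik'}$ or $e_{i'k}$, equivalently $i\in\Tc_{k'}$ or $i'\in\Tc_k$. These are exactly the two cross-interference configurations: transmitter $i$ spilling onto receiver $k'$, or transmitter $i'$ spilling onto receiver $k$. Combined with the distance-one adjacencies (shared transmitter $i=i'$, or shared receiver $k=k'$), this enumeration exhausts all physical conflicts. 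Consequently, a legal coloring guarantees that in every slot the active links form an interference-free matching and that each served receiver observes a clean scalar AWGN channel.

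A standard DoF count then closes the argument: an interference-free Gaussian link attains rate $\log P + o(\log P)$, so each receiver accumulates a symmetric rate of $(m/n)\log P + o(\log P)$, yielding per-user DoF $m/n$. Letting $m$ and $n$ scale and invoking the infimum in Definition~\ref{def:dis-color} delivers $d_{\sym}\ge 1/s\chi_f(\Gc_e^2,\mathbb{V}_e)$. The main obstacle I anticipate is the adjacency-versus-interference correspondence sketched above: one must enumerate length-at-most-two walks in $\Gc_e$ carefully and check that each models a genuine obstruction in $\Gc$, and conversely that every physical conflict surfaces within distance two in $\Gc_e$. Once this combinatorial bijection is pinned down, the orthogonal-access coding scheme and the limit turning $s\chi_m/m$ into $s\chi_f$ are entirely routine.
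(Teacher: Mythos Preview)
Your proposal is correct and follows essentially the same route as the paper's proof: interpret a selective $n{:}m$ coloring of $(\Gc_e^2,\mathbb{V}_e)$ as a link-scheduling (orthogonal-access) protocol over $n$ slots, argue that the distance-$\le 2$ constraint in $\Gc_e$ is exactly the condition for two scheduled links to be mutually non-interfering, and then pass to the infimum $s\chi_m/m\to s\chi_f$. Your explicit case analysis of length-one and length-two paths in $\Gc_e$ (shared transmitter, shared receiver, or intermediate edge $e_{ik'}$/$e_{i'k}$ witnessing cross-interference) is a slightly sharper version of the paper's more narrative ``two-hop'' discussion, but the content is the same.
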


\begin{proof}
See Appendix \ref{proof:IAvoid}.
\end{proof}

By connecting the achievable symmetric DoF of TIM-CoMP problem to fractional selective chromatic number, we are able to calculate the former by computing the latter with rich toolboxes developed in graph theory. The connection will be illustrated by the following example whose network topology was studied in~\cite{Avestimehr:2013TIM} (as shown in Fig.~\ref{fig:5}) with no transmitter cooperation.

\begin{example}
\normalfont
For the network topology shown in Fig.~\ref{fig:5}(a), the optimal symmetric DoF value is pessimistically $\frac{1}{3}$ without message sharing \cite{Jafar:2013TIM,Avestimehr:2013TIM,Sun:2013Index}. In contrast, if transmitter cooperation is allowed, the achievable symmetric DoF can be remarkably improved to $\frac{2}{5}$ even with orthogonal access according to Theorem \ref{theorem:IAvoid}.

Without message sharing, the interference avoidance scheme consists in scheduling transmitters to avoid mutual interference.
For instance, by delivering $W_1$, Transmitter 1 will cause interferences to Receivers 2 and 3, and consequently Transmitters 2 and 3 should be deactivated, because $W_2$ and $W_3$ cannot be delivered to Receivers 2 and 3 free of interference.
In contrast, with message sharing, the desired message $W_1$ can be sent either from Transmitter 1 or 4.
Hence, scheduling can be done across links rather than across transmitters.
For instance, if the link Transmitter 4 $\to$ Receiver 1 (denoted by $e_{41}$) is scheduled, the links adjacent to $e_{41}$ (i.e., $e_{11}$, $e_{42}$, and $e_{44}$) as well as the links adjacent to $e_{11}$, $e_{42}$ and $e_{44}$ (i.e., $e_{12},e_{13}$, $e_{22},e_{32}$, $e_{34}$ and $e_{54}$) should not be scheduled, because activating Transmitter 1 will interfere Receiver 1, and Receivers 2 and 4 will overhear interferences from Transmitter 4 such that any delivery from Transmitter 1 or to Receivers 2 and 4 causes mutual interference.
A possible link scheduling is shown in Table~\ref{tab:link-sch}. It can be found that each message is able to be independently delivered twice during five time slots, and hence symmetric DoF of $\frac{2}{5}$ are achievable.
\begin{table} [htb]
 \caption{Link Scheduling }
  \centering
  \begin{tabular}{@{} c|c|c @{}} \hline  \hline
  Slot & Scheduled Links ($e_{ij}$: TX $i$ $\to$ RX $j$) & Delivered Messages \\ \hline \hline
  A & $e_{41}$, $e_{55}$, $e_{66}$ & $W_1,W_5,W_6$ \\ \hline
  B & $e_{12}$, $e_{54}$, $e_{66}$ & $W_2,W_4,W_6$ \\ \hline
  C & $e_{13}$, $e_{54}$ & $W_3,W_4$ \\ \hline
  D & $e_{41}$, $e_{33}$ & $W_1,W_3$ \\ \hline
  E & $e_{12}$, $e_{55}$ & $W_2,W_5$ \\ \hline
   \hline
  \end{tabular}
  \label{tab:link-sch}
 \end{table}

Although the above link scheduling solution provides an achievable scheme for the topology in Fig.~\ref{fig:5}(a), the generalization is best undertaken by reinterpreting the link scheduling into a graph coloring problem, such that the rich graph theoretic toolboxes can be directly utilized to solve our problem.  In what follows, we reinterpret the link scheduling from a fractional selective graph coloring perspective.
To ease presentation, we translate graph edge-coloring into graph vertex-coloring of its line graph.

As shown in Fig.~\ref{fig:5}, we first transform the topology graph $\Gc$ (left) into its line graph $\Gc_e$ (right) and map the links connected to each receiver in $\Gc$ to the vertices in $\Gc_e$. For instance, the four links to Receiver 2 in $\Gc$ are mapped to Vertices $e_{12},e_{22},e_{32},e_{42}$ in $\Gc_e$. Then, we group relevant vertices in $\Gc_e$ as clusters, e.g., Vertices $e_{12},e_{22},e_{32},e_{42}$ in $\Gc_e$ corresponding to the links to Receiver 2 are grouped as one cluster. By now, a clustered-graph is generated with $\mathbb{V}_e=\{ \{e_{11},e_{41}\},\{e_{12},e_{22},e_{32},e_{42}\},\{e_{13},e_{33}\},\{e_{34},e_{44},e_{54}\},\{e_{35},e_{55}\},\{e_{36},e_{66}\}\}$. The selective graph coloring can be performed as follows. For the sake of brevity, the color assignment is performed over the line graph $\Gc_e$ in which any two vertices with distance less than 2 should receive different colors. This is equivalent to assign colors to square of line graph $\Gc_e^2$ where any two adjacent vertices receive distinct colors. For instance, if Vertex $e_{41}$ in $\Gc_e$ receives a color indicated by `A', then Vertices $e_{55}$ and $e_{66}$ can receive the same color, because the distance between any two of them is no less than 2 in $\Gc_e$ and hence any two of them are nonadjacent in $\Gc_e^2$. Try any possible color assignment until we obtain a proper one, where each cluster receives $m$ distinct colors out of total $n$ ones, such that any two vertices with distance less than 2 receive distinct colors. There may exist many proper color assignments. 

\begin{figure}[htb]
 \centering
\includegraphics[width=0.8\columnwidth]{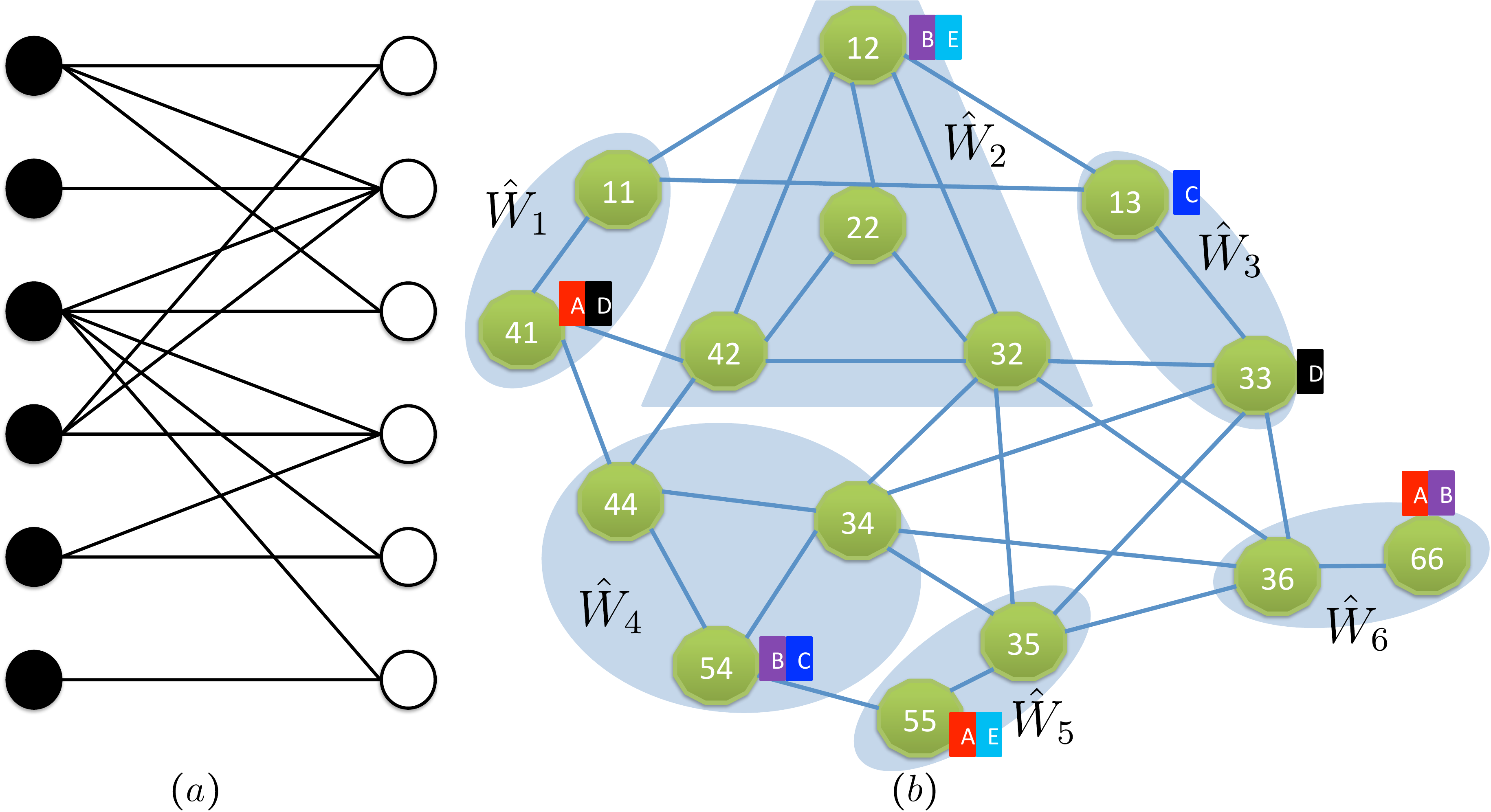}
\caption{An instance of TIM-CoMP problem $(K=6)$. (a) The network topology graph $\Gc$, and (b) its line graph $\Gc_e$. The fractional selective coloring is performed to offer each cluster two out of in total five colors, where any two vertices that receives the same color are set apart with distance no less than 2.}
\label{fig:5}
\end{figure}

The fractional selective chromatic number $s\chi_f(\Gc_{e}^2,\mathbb{V})$ refers to the minimum of $\frac{n}{m}$ among all proper color assignments. In this example, we have $m=2$ and $n=5$. The vertices (i.e., links in $\Gc$) with the same color can be scheduled in the same time slot. Accordingly, each cluster receives two out of five colors means every message is scheduled twice during five time slots, yielding the symmetric DoF of $\frac{2}{5}$. According to this connection between link scheduling and graph coloring, the inverse of the fractional selective chromatic number, i.e., $\frac{1}{s\chi_f(\Gc_{e}^2,\mathbb{V})}$, can serve as the achievable symmetric DoF of TIM-CoMP problems, although its computation is still NP-hard. \hfill$\square$
\end{example}

\subsection{Outer Bound via Generator Sequence}
To see how tight this interference avoidance scheme is, we provide an outer bound based on the concept of generator~\cite{Avestimehr:2013TIM}. 
For simplicity of presentation, we introduce an index function $\mathbf f_{\idx}$, which is defined as $\mathbf f_{\idx}: \Bc \mapsto \{0,1\}^K$, to map the position indicated by $\Bc \subseteq \Kc$ to a $K \times 1$ binary vector with the corresponding position being 1, and 0 otherwise, e.g., $\mathbf f_{\idx}(\{1,3,5\}) = [1 \, 0 \, 1 \, 0 \, 1 \, 0 ]^\T$ with $K=6$. Thus, we have the following definition.

\begin{definition} [Generator Sequence]
\label{def:gen}
Given $\Sc \subseteq \Kc$, a sequence $\{\Ic_0, \Ic_1, \dots, \Ic_S\}$ is called a {\bf generator sequence}, if it is a partition of $\Sc$ (i.e., $\cup_{s=0}^S \Ic_s = \Sc$ and $\Ic_i \cap \Ic_j = \emptyset$, $\forall~i \neq j$), such that
\begin{align}
\Bm_{\Ic_s} \subseteq^{\pm} {rowspan} \left\{ \Bm_{\Ic_0}, \Id_{\Ac_{s}} \right\}, \quad \forall~s=1,\dots,S
\end{align}
where $\Bm_{\Ic}$ is the submatrix of $\Bm$ with rows of indices in $\Ic$ selected, $\Ac_{s} \defeq \{i | [\Bm^\T]_i \cdot \mathbf f_{\idx}(\cup_{r=0}^{s-1} \Ic_r) = \abs{\Rc_i \backslash \Sc^c}\}$ with $[\Bm^\T]_i$ being the $i$-th row of $\Bm^\T$ (i.e., $i$-th column of $\Bm$), and $\Id_{\Ac_{s}}$ denotes a submatrix of $\Id_K$ with the rows in $\Ac_{s}$ selected. $\Am_1 \subseteq^{\pm} rowspan \{\Am_2\}$ is such that two matrices $\Am_1 \in \CC^{m_1 \times n}$ and $\Am_2 \in \CC^{m_2 \times n}$ satisfy $\Am_1=\Cm \Am_2 \Id^{\pm}$, where $\Cm \in \CC^{m_1 \times m_2}$ can be any full rank matrix, $\Id^{\pm}$ is as same as the identity matrix up to the sign of elements. This implies that the row of $\Am_1$ can be represented by the rows of $\Am_2$ with possible difference of signs of elements.
We refer to $\Ic_0$ as the initial generator with regard to $\Sc$, and denote by $\Jc(\Sc)$ all the possible initial generators.
\end{definition}

\begin{theorem} [Outer Bound via Generator Sequence]
\label{theorem:Gen}
The symmetric DoF of the $K$-cell TIM-CoMP problem are upper bounded by
\begin{align}
d_{\sym} \le \min_{\Sc \subseteq \Kc} \min_{\Ic_0 \subseteq \Jc(\Sc)} \frac{|\Ic_0|}{|\Sc|}
\end{align}
where $\Ic_0$ is the initial generator, from which a sequence can be initiated and generated subsequently as defined in Definition~\ref{def:gen}.
\end{theorem}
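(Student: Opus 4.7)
The plan is to establish, for each subset $\Sc \subseteq \Kc$ and each valid initial generator $\Ic_0 \in \Jc(\Sc)$, that $\sum_{j \in \Sc} R_j \le |\Ic_0|\log P + o(\log P)$; dividing by $|\Sc|\log P$ and letting $P\to\infty$ then yields $d_{\sym} \le |\Ic_0|/|\Sc|$, and minimization over $(\Sc,\Ic_0)$ completes the theorem. The sum-rate bound will come from a genie-aided converse: hand a super-receiver the side information $(Y^n_{\Ic_0}, W_{\Sc^c})$ and argue that it can reconstruct every message in $W_\Sc$ with vanishing error probability. Fano's inequality then gives $H(W_\Sc) \le I(W_\Sc;Y^n_{\Ic_0},W_{\Sc^c}) + n\epsilon_n \le h(Y^n_{\Ic_0}) - h(Z^n_{\Ic_0}) + n\epsilon_n \le n|\Ic_0|\log P + n\epsilon_n'$, which is exactly the desired bound.

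To make the reconstruction concrete I will first specialize the channel. Because there is no CSIT, the encoders depend only on the messages and on $\Gc$, not on $\Hc$, so for the converse I may pick a worst-case realization lying in the support of the continuous distribution, for example $h_{ji} = 1$ whenever $i \in \Tc_j$. Under this choice $Y_j = \Bm_j\Xm + Z_j$, so signal-level linear combinations mirror row-level combinations of $\Bm$ exactly. I then decode sequentially along the generator sequence $\Ic_0, \Ic_1, \ldots, \Ic_S$. Messages $W_{\Ic_0}$ are decoded by running the original receivers in $\Ic_0$ on $Y^n_{\Ic_0}$. Inductively, once $W_{\cup_{r<s}\Ic_r}$ have been produced, the condition $|\Rc_i \cap \Sc| = |\Rc_i \cap \cup_{r<s}\Ic_r|$ defining $\Ac_s$ guarantees that for every $i \in \Ac_s$ the full message list $W_{\Rc_i}$ is now available, so the deterministic encoders can be rerun to reconstruct $X^n_i$ for all $i \in \Ac_s$. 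The row-span condition $\Bm_{\Ic_s} \subseteq^{\pm} \mathrm{rowspan}\{\Bm_{\Ic_0}, \Id_{\Ac_s}\}$ then expresses each row $\Bm_j$ ($j\in\Ic_s$) as a $\{-1,0,+1\}$-combination $\sum_{k\in\Ic_0} c_{jk}\Bm_k + \sum_{i\in\Ac_s} d_{ji}e_i$, so applying this same combination to the signals yields $Y^n_j = \sum_k c_{jk} Y^n_k + \sum_i d_{ji} X^n_i + \tilde Z^n_j$, where $\tilde Z^n_j$ is a finite sum of i.i.d.\ Gaussians of bounded variance. Feeding this reconstructed $Y^n_j$ into the original decoder $g_j$ then recovers $W_j$, closing the induction.

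I expect the main obstacle to lie in making the inductive step fully rigorous, for two reasons. First, the row-span condition is phrased on the $\{0,1\}$ topology matrix but has to act on actual signals with channel coefficients; this is precisely where the channel specialization pays off, because $h_{ji}=1$ turns the topology algebra into signal algebra verbatim, and the $\pm$ flexibility in the definition of $\subseteq^{\pm}$ absorbs the sign of each contribution. Second, I must argue that running the channel-aware decoder $g_j$ on a noise-inflated surrogate of $Y^n_j$ still decodes $W_j$ with vanishing error at the original symmetric rate; this follows from the standard DoF-invisibility of constant-variance Gaussian noise additions (the extra noise only costs $O(1)$ bits and hence $0$ DoF), but will require a short justification. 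Once those two points are pinned down, the Fano step is routine and taking $\min_\Sc \min_{\Ic_0 \in \Jc(\Sc)}$ delivers the stated upper bound.
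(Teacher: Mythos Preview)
Your overall architecture—the genie-aided converse with sequential decoding along the generator sequence $\Ic_0,\Ic_1,\ldots,\Ic_S$, reconstructing $X_i^n$ for $i\in\Ac_s$ once $W_{\cup_{r<s}\Ic_r}$ is available, then synthesizing surrogates for $Y_j^n$ ($j\in\Ic_s$) via the row-span relation and invoking Fano—matches the paper's proof. The side-information device (conditioning on $W_{\Sc^c}$) and the paper's device (setting $W_{\Sc^c}$ deterministic) are equivalent for this purpose.

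The gap is the channel specialization. You propose to set $h_{ji}=1$ for all connected pairs so that the topology-matrix algebra becomes signal algebra verbatim. But the achievable symmetric rate is defined through an error probability averaged over the random channel $\Hc^n$; the realization $h_{ji}\equiv 1$ is a measure-zero event under any continuous fading law, and you have no guarantee that the original decoder $g_j$ succeeds on it. More concretely, for two receivers $j_1,j_2\in\Ic_0$ that both connect to transmitter $i$, the actual observations carry independent gains $h_{j_1 i}\neq h_{j_2 i}$, so no linear combination of $Y_{j_1}^n,Y_{j_2}^n$ produces a valid surrogate for a third receiver's observation under generic $\Hc$. Saying ``no CSIT, so pick a worst-case realization in the support'' conflates encoder-side obliviousness with decoder-side validity: the encoders are indeed channel-independent, but Fano's inequality in the converse must still be applied to a signal that is distributed like the actual $Y_j^n$, not to a handpicked degenerate one.

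The paper closes exactly this gap by introducing \emph{virtual} received signals $\tilde Y_j^n=\sum_{i\in\Tc_j} h_i^n X_i^n+\tilde Z_j^n$, where the $h_i^n$ are fresh i.i.d.\ draws from the channel-gain distribution, \emph{shared across all $j$}. Marginally each $\tilde Y_j^n$ has the same law as $Y_j^n$ (because the encoders are channel-oblivious), so $H(W_j\mid \tilde Y_j^n,\Hc^n,\Gc)\le n\epsilon_n$ by the same Fano step; and now the shared $h_i$ makes the row-span identity $\Bm_{\Ic_s}=(\Cm_s\Bm_{\Ic_0}+\Dm_s\Id_{\Ac_s})\Id^\pm$ act directly on the virtual signals, yielding $\bar Y_{\Ic_s}^n=\Cm_s\tilde Y_{\Ic_0}^n+\Dm_s\tilde\Xm_{\Ac_s}^n-\bar Z_s^n$, with $\bar Y_{\Ic_s}$ again statistically equivalent to $Y_{\Ic_s}$ (the $\pm$ in $\Id^\pm$ is absorbed by symmetry of the channel law around zero). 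Replacing your ``set $h_{ji}=1$'' by this virtual-signal construction is the missing ingredient; once that is in place, the rest of your argument goes through exactly as you outline.
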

\begin{proof}
See Appendix \ref{proof:Gen}.
\end{proof}

Roughly speaking, the key of this outer bound is to first properly select a subset of receivers of interest, from which a smaller subset is carefully chosen then as an initial generator, such that statistically equivalent received signals of others can be gradually generated. To obtain a relatively tight bound, it is preferred an initial generator with a small cardinality to generate the rest of sequence with a large cardinality. Intuitively, irregular networks favor this generator sequence outer bound. The more irregular the topology is, the tighter the outer bound is expected to be, because it is likely to start with small initiator and generator a long sequence.
This point will be confirmed by one of the most irregular networks (triangular networks) in Corollary \ref{cor:tri-cell}.
In what follows, we illustrate the identification of a generator sequence for the irregular network studied in Example 1. 
\begin{example}
\normalfont
For the topology in Fig.~\ref{fig:5}(a), we have the transmit sets $\Tc_1=\{1,4\}, \ \Tc_2 = \{1,2,3,4\}, \ \Tc_3=\{1,3\}, \ \Tc_4=\{3,4,5\}, \ \Tc_5=\{3,5\}, \ \Tc_6=\{3,6\}$ and receive sets $\Rc_1=\{1,2,3\}, \ \Rc_2=\{2\}, \ \Rc_3=\{2,3,4,5,6\}, \ \Rc_4=\{1,2,4\}, \ \Rc_5=\{4,5\}, \ \Rc_6=\{6\}$. With the message sharing strategy mentioned earlier, the messages $W_{\Rc_i}$ are accessible at Transmitter $i$.

As symmetric DoF metric is considered, the DoF outer bound regarding any subset of messages serves as one candidate in general.
In what follows, we select a subset of receivers $\Sc=\{1,3,4,5\}$, from which $\{1,4\}$ are chosen as an initial generator, such that statistical equivalent signals at Receivers 3 and 5 can be subsequently generated.
Before proceeding further, we define the following virtual signals
\begin{align}
\tilde{Y}_1^n &\defeq h_1^n X_1^n + h_4^n X_4^n + \tilde{Z}_1^n\\
\tilde{Y}_4^n &\defeq h_3^n X_3^n + h_4^n X_4^n + h_5^n X_5^n + \tilde{Z}_4^n,
\end{align}
where $h_k^n~(k=1,\dots,6)$ is assumed to be independent and identically distributed as $h_{ji}^n$ when there is a strong link between Transmitter $i$ and Receiver $j$, and the noise terms $\{\tilde{Z}_1^n,\tilde{Z}_4^n\}$ are identically distributed  as $Z_j^n$ with zero-mean and unit-variance. Given the fact that the distribution of channel gain is symmetric around zero, it follows that $\{\tilde{Y}_1^n, \tilde{Y}_4^n\}$ are statistically equivalent to $\{Y_1^n, Y_4^n\}$, respectively. From both $\{\tilde{Y}_1^n, \tilde{Y}_4^n\}$ and $\{Y_1^n, Y_4^n\}$, the corresponding messages $\{\hat{W}_1, \hat{W}_4\}$ can be decoded with error probability tends to 0 as $n \to \infty$.

Let us focus on the subset of messages $W_\Sc$, where $W_i$ $(i\in {\Sc}^c=\{2,6\})$ are set to be deterministic. Note that eliminating some messages or setting them to be deterministic does not hurt the maximum achievable rate of remaining messages. Thus, the sum rate associated with the receivers in $\Sc$ can be upper bounded as
\begin{align}
n\sum_{i \in \Sc} R_i &= H(W_{\Sc} | \Hc^n, \Gc)\\
&= I(W_{\Sc}; \tilde{Y}_{1,4}^n | \Hc^n, \Gc) + H(W_{\Sc} | \tilde{Y}_{1,4}^n, \Hc^n, \Gc) \\
&= I(W_{\Sc}; \tilde{Y}_{1,4}^n| \Hc^n, \Gc) + H(W_{1,4} | \tilde{Y}_{1,4}^n, \Hc^n, \Gc) \nn \\ &+ H(W_{\Sc \backslash \{1,4\}} | W_{1,4}, \tilde{Y}_{1,4}^n,  \Hc^n, \Gc) \\
&\le 2n \log P + H(W_{\Sc \backslash \{1,4\}} | W_{1,4}, \tilde{Y}_{1,4}^n, \Hc^n, \Gc) + n \cdot O(1) + n \epsilon_n
\end{align}
where the last inequality is obtained by Fano's inequality, and $n\epsilon_n \defeq 1 + n R P_e^{(n)}$ tends to zero as $n \to \infty$ by the assumption that $\lim_{n \to \infty} P_e^{(n)} = 0$.

Since the transmitted signal $X_i^n$ is encoded from the messages $W_{\Rc_i}$ $(\forall~i)$, it suffices to reproduce $X_4^n$ and $X_5^n$ from $W_1,W_4$ and $W_4,W_5$, respectively, with $W_2,W_6$ switched off (i.e., being set to be deterministic). Thus, we have
\begin{align}
\MoveEqLeft H(W_{\Sc \backslash \{1,4\}} | W_{1,4}, \tilde{Y}_{1,4}^n, \Hc^n, \Gc) \\
&= H(W_{3,5} | W_{1,4}, X_4^n, \tilde{Y}_{1,4}^n, \Hc^n, \Gc) \label{ex-1-1}\\
&=H(W_5 | W_{1,4}, X_4^n, \tilde{Y}_{1,4,5}^n, \Hc^n, \Gc) + H(W_3 | W_{1,4,5}, X_4^n, \tilde{Y}_{1,4}^n, \Hc^n, \Gc) \label{ex-1-2}\\
&\le H(W_5 |  \tilde{Y}_5^n, \Hc^n) + H(W_3 | W_{1,4,5}, X_4^n, X_5^n, \tilde{Y}_{1,4}^n, \Hc^n, \Gc) \label{ex-1-3}\\
&\le n \epsilon_n + H(W_3 | W_{1,4,5}, X_4^n, X_5^n, \tilde{Y}_{1,4}^n, \Hc^n, \Gc) \label{ex-1-4}\\
&= n \epsilon_n + H(W_3 | W_{1,4,5}, X_4^n, X_5^n, \tilde{Y}_{1,3,4}^n, \Hc^n, \Gc) \label{ex-1-5}\\
&\le n \epsilon_n + H(W_3 | \tilde{Y}_3^n, \Hc^n, \Gc) \label{ex-1-6}\\
&\le n \epsilon_n \label{ex-1-7}
\end{align}
where \eqref{ex-1-1} is from the fact that $X_4^n$ is reproducible from $W_{1,4}$, \eqref{ex-1-2} is because of the chain rule of entropy and the fact that $\tilde{Y}_5^n = \tilde{Y}_4^n - h_4^n X_4^n =h_3^n X_3^n + h_5^n X_5^n + \tilde{Z}_4^n$ can be generated from $\tilde{Y}_4^n$ and $X_4^n$, \eqref{ex-1-3} is due to a) removing condition does not reduce entropy, and b) $X_5^n$ can be obtained given the messages $W_{4,5}$, \eqref{ex-1-5} comes from the generator sequence where $\tilde{Y}_3^n = \tilde{Y}_1^n - \tilde{Y}_4^n + h_5^n X_5^n = h_1^n X_1^n - h_3^n X_3^n + \tilde{Z}_1^n - \tilde{Z}_4^n$ can be generated from $\tilde{Y}_{1,4}^n$ and $X_5^n$, \eqref{ex-1-6} is due to removing condition does not decrease entropy, and inequality \eqref{ex-1-4} and the last inequalities are due to Fano's inequality, where $\tilde{Y}_5^n$ and $\tilde{Y}_3^n$ are statistically equivalent to $Y_5^n$ and $Y_3^n$ respectively, with bounded difference of noise variance, such that both $W_5$ and $W_3$ can be decoded respectively with negligible errors.
Hence, we have
\begin{align}
n\sum_{i \in \Sc} R_i  \le 2n \log P + n \cdot O(1) + n \epsilon_n
\end{align}
which leads to one possible outer bound for symmetric DoF
\begin{align}
d_{\sym} \le \frac{1}{2}.
\end{align}

To summarize, we first take $\{1,4\}$ as an initial generator, and generate two statistically equivalent signals $\{\tilde{Y}_1^n,\tilde{Y}_4^n\}$. With the messages $W_1,W_4$, we reconstruct $X_4^n$, and then generate $\tilde{Y}_5^n$ from $\tilde{Y}_4^n$. Finally, $\tilde{Y}_3^n$ can be generated from $\{\tilde{Y}_1^n,\tilde{Y}_4^n\}$ and $X_5^n$ encoded from $W_4,W_5$. As such, the generation sequence is $\{\{1,4\}, \{5\}, \{3\} \}$, initiated from $\Ic_0=\{1,4\}$. With $\Sc=\{1,3,4,5\}$, according to Definition~\ref{def:gen}, we have
\begin{align}
\Bm = \begin{bmatrix} 1 & 1 & 1 & 0 & 0 & 0 \\ 0 & 1 & 0 & 0 & 0 & 0 \\ 0 & 1 & 1 & 1 & 1 & 1\\ 1 & 1 & 0 & 1 & 0 & 0 \\ 0 & 0 & 0 & 1 & 1 & 0 \\ 0 & 0 & 0 & 0 & 0 & 1 \end{bmatrix}^\T,
\quad
\Bm_{\{1,4\}} = \begin{bmatrix} 1 & 0  \\ 0  & 0 \\ 0 & 1 \\1 & 1 \\ 0  & 1 \\ 0  & 0 \end{bmatrix}^\T
\quad
\Bm_5 = \begin{bmatrix} 0  \\ 0 \\ 1 \\ 0 \\ 1 \\ 0 \end{bmatrix}^\T
\quad
\Bm_3 = \begin{bmatrix} 1  \\ 0 \\ 1 \\ 0 \\ 0 \\ 0 \end{bmatrix}^\T
\end{align}
and $\Ac_1=\{4\}$, $\Ac_2=\{4,5\}$, and $\Ac_3=\{1,3,4,5\}$.
It is readily verified that $\Bm_5 \subseteq^{\pm}  rowspan \{\Bm_{\{1,4\}}, \Id_{\Ac_1}\}$ and $\Bm_3 \subseteq^{\pm}  rowspan \{\Bm_{\{1,4\}}, \Id_{\Ac_2}\}$.

One may notice that the above outer bound derivation has common properties as those in \cite{Avestimehr:2013TIM}, the differences however are two-fold: 1) due to transmitter cooperation (i.e., message sharing), the transmitted signal is encoded from multiple messages, instead of the single message in the TIM setting, and 2) when we switch off some messages (e.g., by setting them to be deterministic), we only eliminate them from the message set $\Rc_i$ of $X_i^n$, instead of switching off $X_i^n$ as did in \cite{Avestimehr:2013TIM}.\hfill$\square$
\end{example}

\subsection{The Optimality of Interference Avoidance}
By interference avoidance and the above outer bound, we characterize the optimal symmetric DoF of some special networks below.
\begin{corollary} [Optimal DoF for Three-cell Networks]
\label{cor:3-cell}
The optimal symmetric DoF of the three-cell TIM-CoMP problem can be achieved by interference avoidance (i.e., orthogonal access).
\end{corollary}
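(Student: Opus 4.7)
The plan is to proceed by finite case enumeration. Up to relabeling transmitters and receivers, there are only finitely many three-cell topology graphs $\Gc$ (equivalently, topology matrices $\Bm \in \{0,1\}^{3\times 3}$ with no all-zero column, since every receiver must have at least one connected transmitter for its message to be deliverable). I would first quotient out the symmetries (independent permutations of transmitter labels and of receiver labels) and organize the remaining representatives by the number of edges and by coarse structural features, e.g.\ whether some receiver is served by a unique transmitter, whether the topology is triangular, whether it is complete bipartite, and whether it splits into disconnected components.

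For each representative topology I would carry out two steps in parallel. On the achievability side, I would invoke Theorem~\ref{theorem:IAvoid}: construct the line graph $\Gc_e$ of $\Gc$, form the vertex partition $\mathbb{V}_e$ by grouping the edges incident to the same receiver, take the square $\Gc_e^2$, and exhibit an explicit fractional selective coloring whose ratio $n/m$ equals the target value $1/d_{\sym}$. Because the number of edges is at most nine, these colorings can be written down by inspection (time-sharing when the topology is dense, and spatial reuse across non-adjacent edges when it is sparse). On the converse side I would invoke Theorem~\ref{theorem:Gen}: for each topology pick a subset $\Sc \subseteq \{1,2,3\}$ and an initial generator $\Ic_0 \subseteq \Sc$ minimizing $|\Ic_0|/|\Sc|$, and verify the row-span condition $\Bm_{\Ic_s} \subseteq^{\pm} \mathrm{rowspan}\{\Bm_{\Ic_0},\Id_{\Ac_s}\}$ at each step. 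In every case the appropriate $\Ic_0$ is small (often a singleton, or the whole of $\Sc$ when $\Sc$ itself is the only viable choice), and the generator sequence that matches the coloring ratio is short.

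A few canonical subcases organize most of the work and illustrate the mechanism. When $\Gc$ is fully connected, orthogonal access trivially gives $d_{\sym}=1/3$; the converse follows by taking $\Sc=\{1,2,3\}$ with $\Ic_0=\{k\}$ for any single $k$, since then every other row of $\Bm$ lies in the span of $\Bm_{\Ic_0}$ up to identity rows indexed by $\Ac_s$. When $\Gc$ is a triangular topology, the same choice with $\Sc=\{1,2,3\}$ and $\Ic_0$ equal to the singleton receiver whose transmit set contains the ``corner'' transmitter gives the tight bound $1/3$, which is matched by time-sharing. In the intermediate, sparse cases (e.g.\ one or two interfering edges) the graph either decomposes into components whose coloring is straightforward, or else it contains a receiver with a unique transmitter connected only to it, which can then be scheduled in parallel with the other receivers; a generator sequence with $|\Ic_0|=1,2$ and $|\Sc|=2,3$ yields the matching $1/2$ or $2/3$ outer bound.

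I expect the principal obstacle to be the converse in the handful of topologies where the graph is neither fully connected nor easily decomposable and where several receivers share transmit sets in asymmetric ways: in those cases finding the right $\Sc$ and $\Ic_0$, and in particular locating the index set $\Ac_s$ that makes the row-span condition hold at every step of the sequence, requires some care. Once a reliable recipe is established---pick $\Sc$ to be the receivers whose transmit sets are ``spanned'' by a few others, and pick $\Ic_0$ to be the minimal such spanning set---the remaining cases can be handled uniformly, and matching with the corresponding selective coloring ratio yields $d_{\sym}=1/s\chi_f(\Gc_e^2,\mathbb{V}_e)$ throughout, proving that interference avoidance is optimal for every three-cell topology.
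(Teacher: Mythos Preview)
Your proposal is correct and follows essentially the same approach as the paper: enumerate all non-isomorphic three-cell topologies, establish achievability via the fractional selective coloring of Theorem~\ref{theorem:IAvoid}, and match each value with an outer bound from Theorem~\ref{theorem:Gen} by exhibiting a suitable generator sequence $(\Sc,\Ic_0)$. The paper's proof is slightly more streamlined in that it singles out just the two topologies where cooperation strictly improves over the non-cooperative TIM baseline and works those out explicitly (the remaining cases being immediate), but the logical structure is identical to yours.
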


\begin{proof}
See Appendix \ref{proof:3-cell}.
\end{proof}

\begin{corollary} [Optimal DoF for Triangular Networks]
\label{cor:tri-cell}
For the $K$-cell triangular networks, the optimal symmetric DoF value of the TIM-CoMP problem is $\frac{1}{K}$.
\end{corollary}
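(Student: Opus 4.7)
The plan is to match the trivial achievability $d_{\sym}\ge 1/K$ against a converse $d_{\sym}\le 1/K$ drawn from the generator-sequence outer bound of Theorem~\ref{theorem:Gen}. The inner bound is immediate: pure orthogonal access (TDMA) grants each of the $K$ messages its own time slot, so $d_{\sym}\ge 1/K$. Equivalently, Theorem~\ref{theorem:IAvoid} applied with the trivial coloring that assigns a distinct color to each of the $K$ clusters achieves the same rate, and no topological refinement is required on the inner-bound side.

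For the converse I work with the upper-triangular case $\Tc_j=\{j,j+1,\dots,K\}$ (the lower-triangular case being symmetric). Then $\Rc_i=\{1,2,\dots,i\}$, the row $\Bm_{\{j\}}$ of the topology matrix has a leading block of $j-1$ zeros followed by all ones, and Receiver~$1$ in particular overhears every transmitter, i.e.\ $\Bm_{\{1\}}=[1,1,\dots,1]$. I will apply Theorem~\ref{theorem:Gen} with $\Sc=\Kc$, initial generator $\Ic_0=\{1\}$, and subsequent singletons $\Ic_s=\{s+1\}$ for $s=1,\dots,K-1$; this yields a partition of $\Sc$ of length $K$ initiated by a singleton, so provided the generator condition is verified at every step, the bound reads $d_{\sym}\le |\Ic_0|/|\Sc|=1/K$.

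The only step that needs checking is $\Bm_{\Ic_s}\subseteq^{\pm}\mathrm{rowspan}\{\Bm_{\Ic_0},\Id_{\Ac_s}\}$ for every $s$. After $s$ generations the already-covered receiver set is $\cup_{r=0}^{s-1}\Ic_r=\{1,\dots,s\}$, and because $\Rc_i=\{1,\dots,i\}$ in this topology the activation set simplifies to $\Ac_s=\{i:\{1,\dots,i\}\subseteq\{1,\dots,s\}\}=\{1,\dots,s\}$. The target row $\Bm_{\{s+1\}}=[0,\dots,0,1,\dots,1]$ with $s$ leading zeros is then produced by the explicit linear combination $\Bm_{\{1\}}-\sum_{r=1}^{s}\Id_{\{r\}}$, which manifestly lies in the required rowspan with coefficients in $\{-1,0,1\}$ and without even needing the sign flips allowed by $\subseteq^{\pm}$.

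The main obstacle is really just identifying the right starting receiver rather than any ensuing algebra. Triangular networks are extremal in the sense that exactly one receiver (Receiver~$1$ in the upper-triangular case, Receiver~$K$ in the lower-triangular case) observes a ``complete'' linear combination of every transmitted signal; from that single vantage point one peels off one transmitter's contribution at a time using the identity rows supplied by the growing activation set $\Ac_s$. This realises, in its strongest form, the heuristic noted after Theorem~\ref{theorem:Gen} that the more irregular the topology, the smaller the initial generator one can afford relative to the length of the generated sequence, and hence the tighter the resulting outer bound.
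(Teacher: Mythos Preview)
Your proof is correct and follows essentially the same approach as the paper: invoke time division for achievability and the generator-sequence bound of Theorem~\ref{theorem:Gen} with $\Sc=\Kc$ and a singleton initiator for the converse. The only cosmetic difference is that the paper works with the lower-triangular representative and takes $\Ic_0=\{K\}$, peeling receivers in the order $K,K-1,\dots,1$, whereas you treat the upper-triangular case with $\Ic_0=\{1\}$ and peel in increasing order---these are mirror images, and you in fact supply the explicit verification of the rowspan condition that the paper leaves to the reader.
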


\begin{proof}
See Appendix \ref{proof:tri-cell}.
\end{proof}

\section{An Interference Alignment Perspective}

To gain further improvement, an interference alignment perspective is introduced with the alignment-feasible graph defined in Definition~\ref{def:afg}, by which the sufficient conditions achieving a certain amount of symmetric DoF is identified in Theorem~\ref{theorem:Arb}. Further, by these condition, in Theorem~\ref{theorem:Reg} we identify the achievable symmetric DoF of regular networks. To see the tightness of interference alignment, a new outer bound with the application of compound settings are derived in Theorem~\ref{theorem:Com}, with which the optimal symmetric DoF of Wyner-type networks with only one interfering link are characterized. The interference alignment feasibility condition is further generalized in Definition~\ref{def:pp}, which leads us to the construction of a hypergraph and hence an achievability scheme via hypergraph covering in Theorem~\ref{theorem:Cov}. 

\subsection{Interference Alignment with Alignment-Feasible Graph}
 In what follows, we introduce new notions of alignment-feasible graph and alignment non-conflict matrix, which indicate respectively the feasibility of interference alignment for any two messages, and the non-conflict of alignment feasibility of two messages to a third one, namely whether those two messages are aligned or not has no influence on the third one.
\begin{definition} [Alignment-Feasible Graph]
\label{def:afg}
  The {\bf alignment-feasible graph (AFG)}, denoted by $\Gc_{AFG}$, refers to a graph with vertices representing the messages and with edges between any two messages indicating if they are alignment-feasible. Two messages $W_i$ and $W_j$ are said to be alignment-feasible, denoted by $i \leftrightarrow j$, if
        \begin{align} \label{eq:afgcond}
          \Tc_i \nsubseteq \Tc_j, \quad \text{and} \quad \Tc_j \nsubseteq \Tc_i.
        \end{align}
\end{definition}

\begin{remark} \normalfont
The condition in \eqref{eq:afgcond} implies the alignment feasibility, that is, it is feasible to align these two messages $W_i$ and $W_j$ in the same subspace without causing mutual interference by choosing proper transmitting sources, such that the transmitted signal of one message will not interfere the intended receiver of the other message. A similar insight was also revealed in \cite{Jafar:2012Index} in the context of index coding. 
\end{remark}

 \begin{definition} [Alignment Non-Conflict Matrix]
\label{def:anm}
 Regarding a cycle $i_1 \leftrightarrow i_2 \leftrightarrow \dots \leftrightarrow i_K \leftrightarrow i_1$ in an alignment-feasible graph, we construct a $K \times K$ binary matrix $\Am$, referred to as {\bf alignment non-conflict matrix}, with element $\Am_{kj}=1$ ($j,k \in \Kc$), if
        \begin{align} \label{eq:non-conflict}
          \Tc_{i_j} \cap \Tc_{i_{j+1}}^c \nsubseteq \Tc_{i_k}, \quad \text{and} \quad \Tc_{i_{j+1}} \cap \Tc_{i_j}^c \nsubseteq \Tc_{i_k},
        \end{align}
and with $\Am_{kj}=0$ otherwise. Further, we reset $\Am_{kj}=0$ $(\forall~k)$, if
\begin{align} \label{eq:conflict}
 \Tc_{i_j} \bigcap \Tc_{i_{j+1}}^c \bigcap_{k:\Am_{kj}=1} \Tc_{i_k}^c = \emptyset, \quad \text{or} \quad \Tc_{i_{j+1}} \bigcap \Tc_{i_{j}}^c \bigcap_{k:\Am_{kj}=1} \Tc_{i_k}^c = \emptyset.
\end{align}
\end{definition}
\begin{remark}\normalfont
The elements in $\Tc_{i_j} \cap \Tc_{i_{j+1}}^c$ and $\Tc_{i_{j+1}} \cap \Tc_{i_j}^c$ represent the indices of potential transmitters (without loss of generality, we assume Transmitters $i_j$ and $i_{j+1}$) that carry $W_{i_j}$ and $W_{i_{j+1}}$, respectively. As such, the condition in \eqref{eq:non-conflict} indicates that both Transmitters $i_j$ and $i_{j+1}$ are not connected to Receiver $i_k$, and therefore the subspace occupied by the aligned signals $X_{i_j}(W_{i_j})$ and $X_{i_{j+1}}(W_{i_{j+1}})$ is absent at Receiver $i_k$ such that the total dimensions of required subspace are reduced. In contrast, the condition in \eqref{eq:conflict} indicates a conflict in which there do not exist any common elements in $\Tc_{i_j} \cap \Tc_{i_{j+1}}^c$ and $\Tc_{i_{j+1}} \cap \Tc_{i_j}^c$ satisfying \eqref{eq:non-conflict} for all $i_k$ when $\Am_{kj}=1$. Hence, the number of `1's in each row of $\Am$ indicates the number of dimensions associated with the cycle in alignment-feasible graph that can be absent to Receiver $i_k$. The minimum value among all rows gives the number of reducible dimensions (say $q$) for all receivers. As such, $K-q$ indicates the number of dimensions required by all receivers for a feasible interference alignment.
\end{remark}

Given the above alignment-feasible condition and alignment non-conflict matrix, we are able to identify the sufficient conditions to achieve a certain amount of symmetric DoF as follows.

\begin{theorem} [Achievable DoF with Alignment-Feasible Graph]
\label{theorem:Arb}
  For a $K$-cell TIM-CoMP problem with arbitrary topologies, the following symmetric DoF are achievable:
  \begin{itemize}
    \item $d_\sym = \frac{2}{K}$, if there exists a Hamiltonian cycle or a perfect matching in $\Gc_{AFG}$;
    \item $d_\sym = \frac{2}{K-q}$, if there exists a Hamiltonian cycle in $\Gc_{AFG}$, say $i_1 \leftrightarrow i_2 \leftrightarrow \dots \leftrightarrow i_K \leftrightarrow i_1$, associated with an alignment non-conflict matrix $\Am$, such that
        \begin{align}
          q \defeq \min_{k} \sum_{j} \Am_{kj}
        \end{align}
        when $\tau_c \ge K-q$.
  \end{itemize}
\end{theorem}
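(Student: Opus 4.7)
The plan is to build a linear interference-alignment scheme over a single coherence block of length $\tau_c$, exploiting the fact that within one block every channel coefficient is a scalar and therefore any transmit subspace is preserved (up to scaling) at every receiver. To each edge of the Hamiltonian cycle (or perfect matching) in $\Gc_{AFG}$ I attach one generic ``alignment direction'' $v_e \in \mathbb{C}^{\tau_c}$; for each message I carry its symbols along the directions of the edges incident to its vertex, and each such symbol is emitted by a transmitter carefully chosen from some $\Tc_i \setminus \Tc_j$, whose non-emptiness is guaranteed by the alignment-feasibility condition \eqref{eq:afgcond}. The whole scheme thus reduces to (i)~picking transmit sources meeting a list of ``not in $\Tc_{i_k}$'' constraints and (ii)~checking that, after the channel, the desired 2-dimensional subspace at each receiver escapes the interference subspace.

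For the perfect-matching case: set $\tau_c = K/2$ and pick $K/2$ generic vectors $\{v_{(a,b)}\}_{(a,b)\in M}$. For each pair $(a,b)$ transmit $s_a v_{(a,b)}$ from some $t_a \in \Tc_a \setminus \Tc_b$ and $s_b v_{(a,b)}$ from some $t_b \in \Tc_b \setminus \Tc_a$. Receiver $a$ sees $s_a$ in direction $v_{(a,b)}$ but not $s_b$ (its only source $t_b$ is disconnected), while interference from the other $K/2-1$ pairs occupies the remaining $K/2-1$ directions; generic $v_e$'s make zero-forcing possible, yielding one symbol per user in $K/2$ slots, i.e., $d_{\sym} = 2/K$. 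For the Hamiltonian-cycle case: set $\tau_c = K$ with generic $\{v_j\}_{j=1}^{K}$, one per cycle edge $(i_j, i_{j+1})$; give each $W_{i_j}$ two symbols $s_j^{-}, s_j^{+}$ carried respectively along $v_{j-1}$ and $v_j$, emitted from $t_j^{-}\in \Tc_{i_j}\setminus \Tc_{i_{j-1}}$ and $t_j^{+}\in \Tc_{i_j}\setminus \Tc_{i_{j+1}}$. At receiver $i_k$, direction $v_{k-1}$ carries only the desired $s_k^{-}$ (since $t_{k-1}^{+}\notin \Tc_{i_k}$ by construction) and $v_k$ only $s_k^{+}$, while every other direction carries at most a scalar interference mix of two edge-neighbours' symbols; the $K$ received components are generically linearly independent, so zero-forcing recovers $s_k^{-}, s_k^{+}$, giving $d_{\sym} = 2/K$ once more.

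For $d_{\sym} = 2/(K-q)$ I keep the cycle-based scheme but strengthen the source selection: for each edge $(i_j, i_{j+1})$ the transmitters $t_j^{+}$ and $t_{j+1}^{-}$ must additionally lie outside $\Tc_{i_k}$ for every $k$ with $\Am_{kj}=1$. The first non-conflict condition \eqref{eq:non-conflict} certifies feasibility for any single such $k$, and the reset rule \eqref{eq:conflict}, which forces $\Am_{kj}=0$ exactly when the constraints across different $k$'s cannot be met simultaneously, guarantees joint feasibility of the remaining ones after the pruning. With these sources, direction $v_j$ is entirely absent at receiver $i_k$ whenever $\Am_{kj}=1$, so the received signal at $i_k$ lives in a subspace of dimension at most $K - \sum_j \Am_{kj} \le K-q$. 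Taking $\tau_c = K-q$ therefore suffices; each user still sends two symbols, now in $K-q$ slots, yielding $d_{\sym} = 2/(K-q)$.

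The main obstacle is precisely this joint source-selection step: edge-wise alignment feasibility (codified by the edges of $\Gc_{AFG}$) and per-receiver non-visibility (the first half of Definition~\ref{def:anm}) are each straightforward on their own, but the \emph{same} transmitter must simultaneously avoid several $\Tc_{i_k}$'s along a single edge's use. The reset condition \eqref{eq:conflict} is the combinatorial certificate that turns pairwise non-conflicts into a globally realizable plan; once $\Am$ is defined with this pruning, the remainder is routine: genericity of the $v_e$'s delivers linearly independent received components almost surely, the two-dimensional desired subspace at each receiver is not swallowed by the interference subspace, and zero-forcing achieves the claimed symmetric DoF.
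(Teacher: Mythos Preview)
Your proposal is correct and matches the paper's construction essentially step for step: one generic direction per cycle edge (or matching edge), two symbols per message carried along its two incident edges from transmitters chosen in $\Tc_{i_j}\setminus\Tc_{i_{j\pm 1}}$, and for the $2/(K-q)$ part the strengthened source selection whose joint feasibility across all $k$ with $\Am_{kj}=1$ is exactly what the reset rule~\eqref{eq:conflict} certifies. The only point the paper adds that you omit is the $\tau_c=1$ case of the first bullet (which carries no coherence-time hypothesis in the statement, and the paper's default is $\tau_c=1$): there one simply takes the $v_j$'s to be columns of the identity matrix, collapsing the alignment scheme to orthogonal link scheduling.
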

\begin{proof}
See Appendix \ref{proof:Arb}.
\end{proof}

Let us consider again the network topology studied in Example 1 to show how Theorem~\ref{theorem:Arb} works with alignment-feasible graph and alignment non-conflict matrix.
\begin{example}
\normalfont
We first detail an interference alignment scheme, followed by the interpretation with alignment-feasible graph and alignment non-conflict matrix.

Recall that we have transmit and receive sets $\Tc_1=\{1,4\}, \ \Tc_2 = \{1,2,3,4\}, \ \Tc_3=\{1,3\}, \ \Tc_4=\{3,4,5\}, \ \Tc_5=\{3,5\}, \ \Tc_6=\{3,6\}$, $\Rc_1=\{1,2,3\}, \ \Rc_2=\{2\}, \ \Rc_3=\{2,3,4,5,6\}, \ \Rc_4=\{1,2,4\}, \ \Rc_5=\{4,5\}, \ \Rc_6=\{6\}$.
For notational convenience, we denote by $a,b,c,d,e,f$ the messages desired by six receivers, with the subscript distinguishing different symbols for the same receiver. We consider a multiple time-slotted protocol, in which a space is spanned such that the symbols will be sent in certain subspaces.
Given six random vectors $\Vm_1, \Vm_2, \Vm_3, \Vm_4, \Vm_5, \Vm_6  \in  \CC^{5 \times 1}$, any five of which are linearly independent, the transmitters send signals with precoding
\begin{gather}
\Xm_1 = \Vm_2 b_1 + \Vm_3 c_2 + \Vm_4 a_1,\quad
\Xm_2 = \Vm_6 b_2\\
\Xm_3 = \Vm_4 d_2 + \Vm_5 c_1,\quad
\Xm_4 = \Vm_5 a_2\\
\Xm_5 = \Vm_1 d_1 + \Vm_3 e_2 + \Vm_6 e_1, \quad
\Xm_6 = \Vm_1 f_1 + \Vm_2 f_2
\end{gather}
within five time slots, where $\Xm_i \in \CC^{5 \times 1}$ is the vector of the concatenated transmit signals from Transmitter $i$, with each element being the transmitted signal at each corresponding time slot.

We assume the coherence time $\tau_c \ge 5$, during which the channel coefficients keep constant. The received signal at Receiver 2 for example within five time slots, with $\Tc_2=\{1,2,3,4\}$, can be written as
\begin{align}
\Ym_2
&= h_{21} \Xm_1 + h_{22} \Xm_2 + h_{23} \Xm_3 + h_{24} \Xm_4 + \Zm_2 \\
&= \underbrace{h_{21} \Vm_2 b_1 + h_{22} \Vm_6 b_2}_{\DS} + \underbrace{\Vm_3 h_{21} c_2 + \Vm_4(h_{21} a_1+h_{23} d_2) + \Vm_5 (h_{23} c_1 + h_{24} a_2)}_{\AI} + \Zm_2.
\end{align}

Recall that $\{\Vm_i, i=1,\dots,6\}$ are $5 \times 1$ linearly independent vectors spanning five-dimensional space, by which it follows that the interferences are aligned in the three-dimensional subspace spanned by $\Vm_3$, $\Vm_4$ and $\Vm_5$, leaving two-dimensional interference-free subspace spanned by $\Vm_2$ and $\Vm_6$ to the desired symbols $b_1,b_2$. Note that the subspace spanned by $\Vm_1$ is absent to Receiver 2. Hence, the desired messages of Receiver 2 can be successfully recovered, almost surely. In doing so, all receivers can decode two messages within five slots, yielding the symmetric DoF of $\frac{2}{5}$, which coincides with those achieved by fractional selective graph coloring. 

\begin{figure}[h]
\centering
\includegraphics[width=0.9\columnwidth]{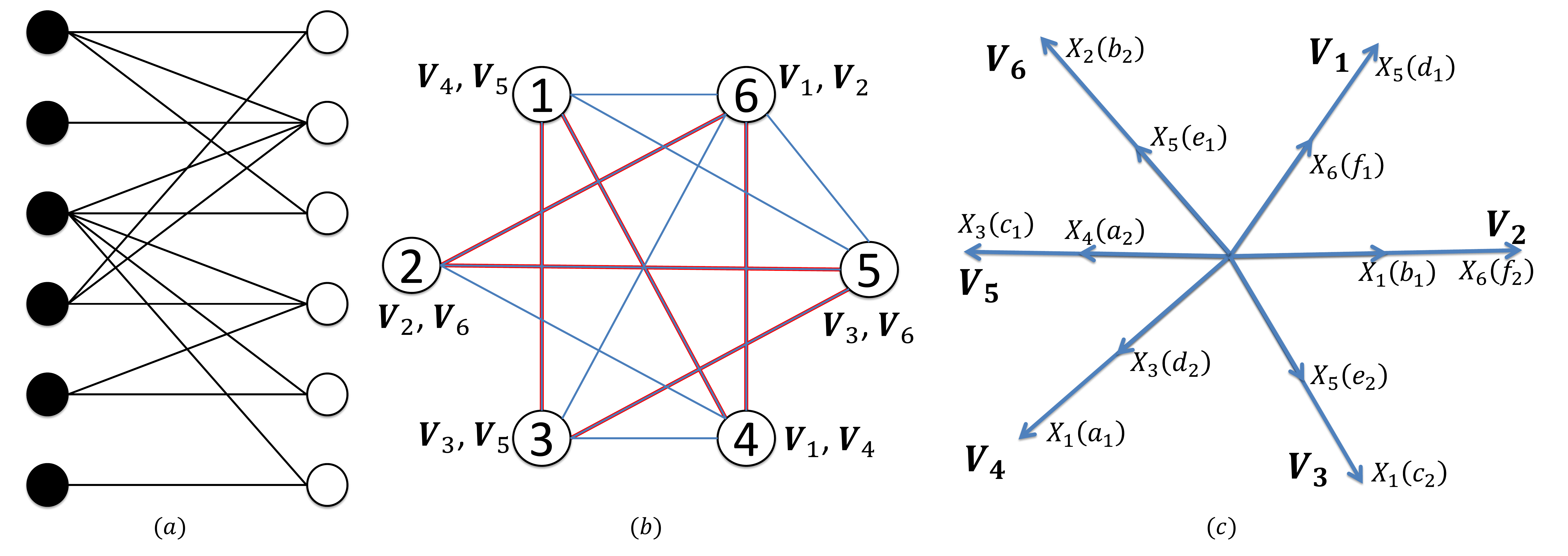}
\caption{ (a) An instance of TIM-CoMP problem $(K=6)$, and (b) the alignment-feasible graph $\Gc_{AFG}$, in which there exists a Hamiltonian cycles with edges in red. (c) An interference alignment scheme, where for example $X_5(d_1)$ denotes a signal sent from Transmitter 5 carrying a symbol $d_1$ desired by Receiver 4. Overall, every message appears twice, and for each receiver there exists at least one absent subspace ($q=1$).}
\label{fig:hamil_q}
\end{figure}

Let us see how Theorem~\ref{theorem:Arb} works.
Based on the transmit sets and the definition of alignment-feasible graph, we construct $\Gc_{AFG}$ as shown in Fig.~\ref{fig:hamil_q}(b). The vertices correspond to messages, and any two messages are joint with an edge if their transmit sets are not the subset of one another.
Notably, there exist a Hamiltonian cycle $1 \leftrightarrow 3 \leftrightarrow 5 \leftrightarrow 2 \leftrightarrow 6 \leftrightarrow 4 \leftrightarrow 1$, and the corresponding alignment non-conflict matrix
\begin{align}
\Am = \begin{bmatrix} 0 & 0 & 1 & 1& 1 & 0 \\ 0 & 0 & 1 & 1 & 1 & 0 \\ 0 & 0 & 0 & 1 & 1 &0 \\ 0 & 0 & 0 & 0 & 1 &0 \\ 0 & 1 & 1 & 0 & 0 & 1 \\ 0 & 0 & 0 & 1 & 0 &0  \end{bmatrix}
\end{align}
where $q=1$. As such, according to Theorem~\ref{theorem:Arb}, we conclude that symmetric DoF of $\frac{2}{5}$ are achievable. It is shown in Fig.~\ref{fig:hamil_q}(c) an interference alignment solution. For each message, two symbols are sent, each of which are along with one direction spanned by a $5 \times 1$ vector $\Vm_i$. Two adjacent messages in the Hamiltonian cycle in Fig.~\ref{fig:hamil_q}(b) are aligned in one direction in Fig.~\ref{fig:hamil_q}(c), e.g., messages $W_1$ and $W_3$ are joint with an edge in $\Gc_{AFG}$, such that two symbols $X_4(a_2)$ and $X_3(c_1)$ are aligned in subspace spanned by $\Vm_5$. Due to $\Tc_2=\{1,2,3,4\}$, Receiver 2 will not hear signals from Transmitters 5 and 6, such that the linear independence of $\Vm_1$ is not necessary. So, five-dimensional subspace is sufficient for Receiver 2. The similar phenomenon  can be observed at all receivers. As such, only five vectors in $\{\Vm_i,i=1,\dots,6\}$ are required to be linearly independent, that is $q =1$. The feasible solution in Fig.~\ref{fig:hamil_q}(c) can be interpreted as vector assignment in Fig.~\ref{fig:hamil_q}(b), where the adjacent vertices in the Hamiltonian cycle are with some vectors shared. \hfill$\square$

\end{example}

There is a very interesting observation. The alignment-feasible condition in \eqref{eq:afgcond} also implies the feasibility of selective graph coloring on $\Gc_e^2$. The fact that two messages satisfy \eqref{eq:afgcond} means there exist two vertices in two clusters $i$ and $j$ of $\Gc_e^2$ are not adjacent and hence can be assigned the same color. It follows that interference alignment is a general form of interference avoidance, in agreement with the observation in \cite{Jafar:CBIA}. 
Thus, interference alignment provides at least the same performance as interference avoidance. Even better, one advantage of interference alignment over interference avoidance is that, the number of dimensions of the subspace to make interference alignment feasible could be less than the total number of colors (i.e., the total number of time slots to schedule links), as some subspaces may be absent at some receivers (according to the alignment non-conflict matrix) so as to decrease the number of required dimensions. 

The advantage of interference alignment over interference avoidance becomes more evident when it comes to regular networks. Specifically, by the above interference alignment approach, we could identify the achievable symmetric DoF of regular networks as follows.

\begin{theorem} [Achievable DoF for Regular Networks]
\label{theorem:Reg}
For a $(K,d)$-regular network, the symmetric DoF
\begin{align}
d_{\sym}(K,d) = \left\{\Pmatrix{\frac{2}{d+1}, & d \le K-1 \\ \frac{1}{K}, & d = K} \right.
\end{align}
are achievable, when channel coherence time satisfies $\tau_c \ge d+1$.
\end{theorem}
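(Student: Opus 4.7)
The plan is to split into two cases by connectivity level. For the fully connected regime $d=K$, every transmitter holds every message and all receivers are statistically identical, so simple round-robin orthogonal access over $K$ time slots trivially achieves $d_\sym=1/K$. The substantive case is $d\le K-1$, where I will construct an explicit $(d+1)$-slot interference alignment scheme that delivers two symbols per user. The coherence-time hypothesis $\tau_c\ge d+1$ is used only so that the channel coefficients are constant across those slots, enabling a finite-dimensional alignment in a space of dimension $d+1$.

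For $d\le K-1$, draw $K$ vectors $\vv_1,\dots,\vv_K\in\CC^{d+1}$ in generic position, so that any $d+1$ of them are linearly independent (all indices modulo $K$). Split each message $W_j$ into two independent symbols $a_j,b_j$ and route $a_j$ through Transmitter~$j$ with precoder $\vv_j$, and $b_j$ through Transmitter~$j+d-1$ with precoder $\vv_{j+d}$. Both source transmitters lie in $\Tc_j$, so message sharing is respected. Since $\Rc_\tau=\{\tau-d+1,\dots,\tau\}$, the scheme prescribes at Transmitter~$\tau$ the compound signal
\[
X_\tau \;=\; \vv_\tau\, a_\tau \;+\; \vv_{\tau+1}\, b_{\tau-d+1},
\]
which involves only the two messages $W_\tau$ and $W_{\tau-d+1}$ available to it.

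The crux is the superposition at a typical Receiver~$j$: summing contributions from $\Tc_j=\{j,\dots,j+d-1\}$ and regrouping by precoder yields
\[
Y_j \,=\, h_{jj}\vv_j\, a_j \,+\, h_{j,j+d-1}\vv_{j+d}\, b_j \,+\, \sum_{i=1}^{d-1} \vv_{j+i}\bigl(h_{j,j+i}\, a_{j+i}+h_{j,j+i-1}\, b_{j+i-d}\bigr).
\]
The alignment identity ``the $b$-precoder of $W_k$ coincides with the $a$-precoder of $W_{k+d}$,'' built into the construction, collapses the $2(d-1)$ interfering symbols onto the $d-1$ intermediate directions $\vv_{j+1},\dots,\vv_{j+d-1}$. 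Because the $d+1$ consecutive vectors $\vv_j,\vv_{j+1},\dots,\vv_{j+d}$ are linearly independent almost surely, zero-forcing the $(d-1)$-dimensional interference subspace recovers both $a_j$ and $b_j$.

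The main technical point to watch is global consistency: no transmitter is asked to double-book a precoder, and no receiver's two-dimensional desired subspace overlaps its $(d-1)$-dimensional interference subspace. Both reduce to the cyclic symmetry of the regular topology together with the generic-position assumption on the $\vv_k$. Since each user decodes two symbols per $(d+1)$-slot block, the symmetric DoF $2/(d+1)$ is achieved for $d\le K-1$, completing the proof together with the trivial $d=K$ case.
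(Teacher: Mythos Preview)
Your proposal is correct and follows essentially the same approach as the paper's own proof: both construct an explicit $(d+1)$-slot alignment scheme on the reference regular topology where Transmitter $\tau$ carries two symbols along two consecutive precoding directions, so that at each receiver the $2(d-1)$ interfering symbols collapse onto $d-1$ intermediate directions while the two desired symbols occupy the remaining two. The only difference is a cosmetic index shift---the paper assigns $\Vm_{\tau+1},\Vm_{\tau+2}$ at Transmitter~$\tau$ whereas you use $\vv_\tau,\vv_{\tau+1}$---and your added discussion of global consistency (no double-booking, no overlap of desired and interference subspaces) is a nice explicit sanity check that the paper leaves implicit.
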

\begin{proof}
See Appendix \ref{proof:Reg}.
\end{proof}
\begin{remark}
For a regular network, the alignment-feasible graph $\Gc_{AFG}$ is a complete graph, and there always exists an alignment non-conflict matrix with $q=K-d-1$ for any Hamiltonian cycle in $\Gc_{AFG}$.
\end{remark}

In what follows, we present a detailed transmission scheme with interference alignment, followed by an interpretation with the concepts of alignment-feasible graph and alignment non-conflict matrix. Remarkably, we also offer a transmission scheme for fast fading channel ($\tau_c=1$) achieving the same symmetric DoF by using retransmission (or so-called repetition coding).
\begin{example}
\normalfont
Let us consider a $(5,3)$-regular network as shown in Fig.~\ref{fig:15}(a). By enabling transmitter cooperation, the achievable symmetric DoF are improved from $\frac{2}{5}$ (as reported in~\cite{Jafar:CBIA}) to $\frac{1}{2}$ according to Theorem \ref{theorem:Reg}. In what follows, we will show two interference alignment schemes to achieve this, with channel coherence time $\tau_c \ge 4$ and $\tau_c=1$, respectively.

According to the network topology, we have transmit and receive sets $\Tc_1=\Rc_1=\{1,3,4\},  \Tc_2 = \Rc_2=\{2,4,5\},  \Tc_3=\Rc_3=\{1,3,5\},   \Tc_4=\Rc_4=\{1,2,4\},  \Tc_5=\Rc_5=\{2,3,5\}$.
Similarly, $a,b,c,d,e$ are symbols desired by five receivers. We consider a four time-slotted protocol, in which the symbols are sent as
\begin{gather}
\Xm_1 = \Vm_1 c_1 + \Vm_3 d_1,\quad
\Xm_2 = \Vm_2 d_2 + \Vm_4 e_1\\
\Xm_3 = \Vm_5 a_1 + \Vm_3 e_2,\quad
\Xm_4 = \Vm_4 a_2 + \Vm_1 b_2\\
\Xm_5 = \Vm_5 b_1 + \Vm_2 c_2
\end{gather}
where $\Vm_1, \Vm_2, \Vm_3, \Vm_4, \Vm_5  \in  \CC^{4 \times 1}$ and any four of thme are linearly independent, and $\Xm_i \in \CC^{4 \times 1}$.

To illustrate the interference alignment, we describe the transmitted signals geometrically as shown in Fig.~\ref{fig:15}. In this figure, we depict the subspace spanned by $\{\Vm_i, i=1,\dots,5\}$ as a four-dimensional space, where any four of them suffice to represent this space. We also denote by $X_i(W_j)$ the message $W_j$ sent from Transmitter $i$. Let us still take Receiver 1 for example. Because of $\Tc_1=\{1,3,4\}$, the transmitted signals from the transmitters that do not belong to $\Tc_1$ will not reach Receiver 1, and hence the vector $\Vm_2$ is absent to Receiver 1. In addition, we have the interference-free signals in the directions of $\Vm_4$ and $\Vm_5$, and the aligned interferences carrying messages other than $a_1,a_2$ in the subspace spanned by $\Vm_1$ and $\Vm_3$. Recall that vectors $\{\Vm_1,\Vm_3,\Vm_4,\Vm_5\}$ are linearly independent, almost surely, so that the interference alignment is feasible at Receiver 1, and it can also be checked to be feasible at other receivers.

As shown in Fig.~\ref{fig:15}(b), the corresponding alignment-feasible graph is a complete graph. Given for example a Hamiltonian cycle $1 \leftrightarrow 2 \leftrightarrow 3 \leftrightarrow 4 \leftrightarrow 5 \leftrightarrow 1$, the associated alignment non-conflict matrix is
\begin{align}
\Am = \begin{bmatrix} 0 & 1 & 0 & 0 & 0 \\ 0 & 0 & 1 & 0 & 0 \\ 0 & 0 & 0 & 1 & 0 \\0 & 0 & 0 & 0 & 1 \\1 & 0 & 0 & 0 & 0  \end{bmatrix}
\end{align}
which gives $q=1$ and thus $d_\sym = \frac{2}{K-q}=\frac{1}{2}$.

\begin{figure}[htb]
 \centering
\includegraphics[width=1\columnwidth]{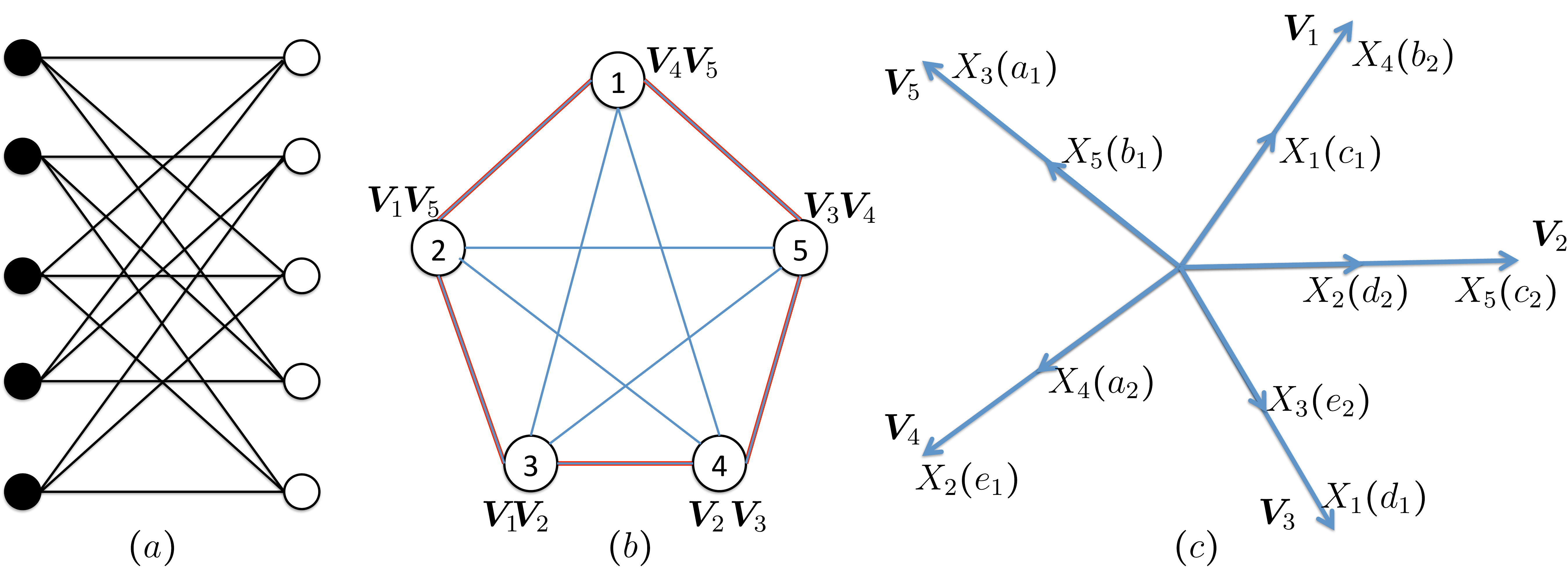}
\caption{(a) topology graph of a $(5,3)$-regular cellular network, (b) alignment-feasible graph as a complete graph with a Hamiltonian cycle in red, and (c) an interference alignment scheme with $\Vm_i$ being a four-dimensional vector, where channel coherence time $\tau_c\ge 4$.}
\label{fig:15}
\end{figure}

When it comes to the case with channel coherence time $\tau_c=1$, the above interference alignment scheme does not work. The symmetric DoF by interference avoidance are $\frac{2}{5}$ according the Theorem~\ref{theorem:IAvoid}. However, it can be improved to $\frac{1}{2}$ as well by an new scheme combining interference alignment and repetition coding as below.

Differently from the above transmission protocol with four time slots, here we use ten time slots to send
\begin{gather}
\Xm_1 = \Vm_{1} c_1 + \Vm_{2} c_3 + \Vm_{5} d_1 + \Vm_{6} d_3 + \Vm_{7}a_5 + \Vm_{9} d_5, \\
\Xm_2 = \Vm_{3} d_2 + \Vm_{4} d_4 + \Vm_{7} e_1 + \Vm_{8} e_3 + \Vm_{2} e_5 + \Vm_{10} b_5,\\
\Xm_3 = \Vm_{5} e_2 + \Vm_{6} e_4 + \Vm_{9} a_1+ \Vm_{10} a_3 + \Vm_{1} c_5 + \Vm_{4} a_5,\\
\Xm_4 =  \Vm_{1} b_2 + \Vm_{2} b_4 + \Vm_{7} a_2  + \Vm_{8} a_4  + \Vm_{3} d_5  + \Vm_{5} b_5,\\
\Xm_5 = \Vm_{3} c_2  + \Vm_{4} c_4  + \Vm_{9} b_1  + \Vm_{10} b_3  + \Vm_{6} e_5  + \Vm_{8} c_5
\end{gather}
where $\Vm_j$ can be chosen as the $j$-th column of identity matrix $\Id_{10}$. Note that the symbols $\{a_5,b_5,c_5,d_5,e_5\}$ are repeatedly sent twice.
Let us look at the decoding at Receiver 1 for example, and the similar procedure holds for other receivers as well. By the above transmission protocol, the signal at Receiver 1 becomes
\begin{align}
\yv_1 &= \Hm_{11} \Xm_1 + \Hm_{13} \Xm_3 + \Hm_{14} \Xm_4 + \Zm_1 \\
&= (c_1 \Hm_{11} + c_5 \Hm_{13} + b_2 \Hm_{14}) \Vm_1 + (c_3 \Hm_{11} + b_4 \Hm_{14}) \Vm_2 \\
&\quad + d_5 \Hm_{14} \Vm_3 + a_5 \Hm_{13} \Vm_4 + (d_1 \Hm_{11} + e_2 \Hm_{13} + b_5 \Hm_{14}) \Vm_5 \\
&\quad+ (d_3 \Hm_{11} + e_4 \Hm_{13}) \Vm_6 + (a_5 \Hm_{11} + a_2 \Hm_{14}) \Vm_7\\
&\quad+ a_4 \Hm_{14} \Vm_8 + (d_5 \Hm_{11} + a_1\Hm_{13}) \Vm_9 + a_3 \Hm_{13} \Vm_{10} + \Zm_1
\end{align}
where $\Hm_{ij} = \diag \{h_{ij}(1), \dots, h_{ij}(10)\}$ is a diagonal matrix. By setting $\Vm_j$ as the $j$-th column of $\Id_{10}$, we have
\begin{gather}
y_1(1) = c_1 h_{11}(1) + c_5 h_{13}(1) + b_2 h_{14}(1), \quad y_1(2) = c_3 h_{11}(2) + b_4 h_{14}(2),\\
y_1(3) = d_5 h_{14}(3), \quad y_1(4) = a_5 h_{13}(4), \quad y_1(5) = d_1 h_{11}(5) + e_2 h_{13}(5) + b_5 h_{14}(5),\\
y_1(6) = d_3 h_{11}(6) + e_4 h_{13}(6), \quad y_1(7) = a_5 h_{11}(7) + a_2 h_{14}(7),\\
y_1(8) = a_4 h_{14}(8), \quad y_1(9) = d_5 h_{11}(9) + a_1 h_{13}(9), \quad y_1(10) = a_3 h_{13}(10)
\end{gather}
with noise terms omitted.

Clearly, the interested symbols $\{a_1,a_2,\dots,a_5\}$ can be recovered from $\{y_1(3),y_1(4),y_1(7),y_1(8),y_1(9),y_1(10)\}$.
Based on the similar analysis and network symmetry, we conclude that 5 symbols per user are delivered within 10 time slots, which gives symmetric DoF $\frac{1}{2}$. 

\begin{figure}[htb]
 \centering
\includegraphics[width=0.78\columnwidth]{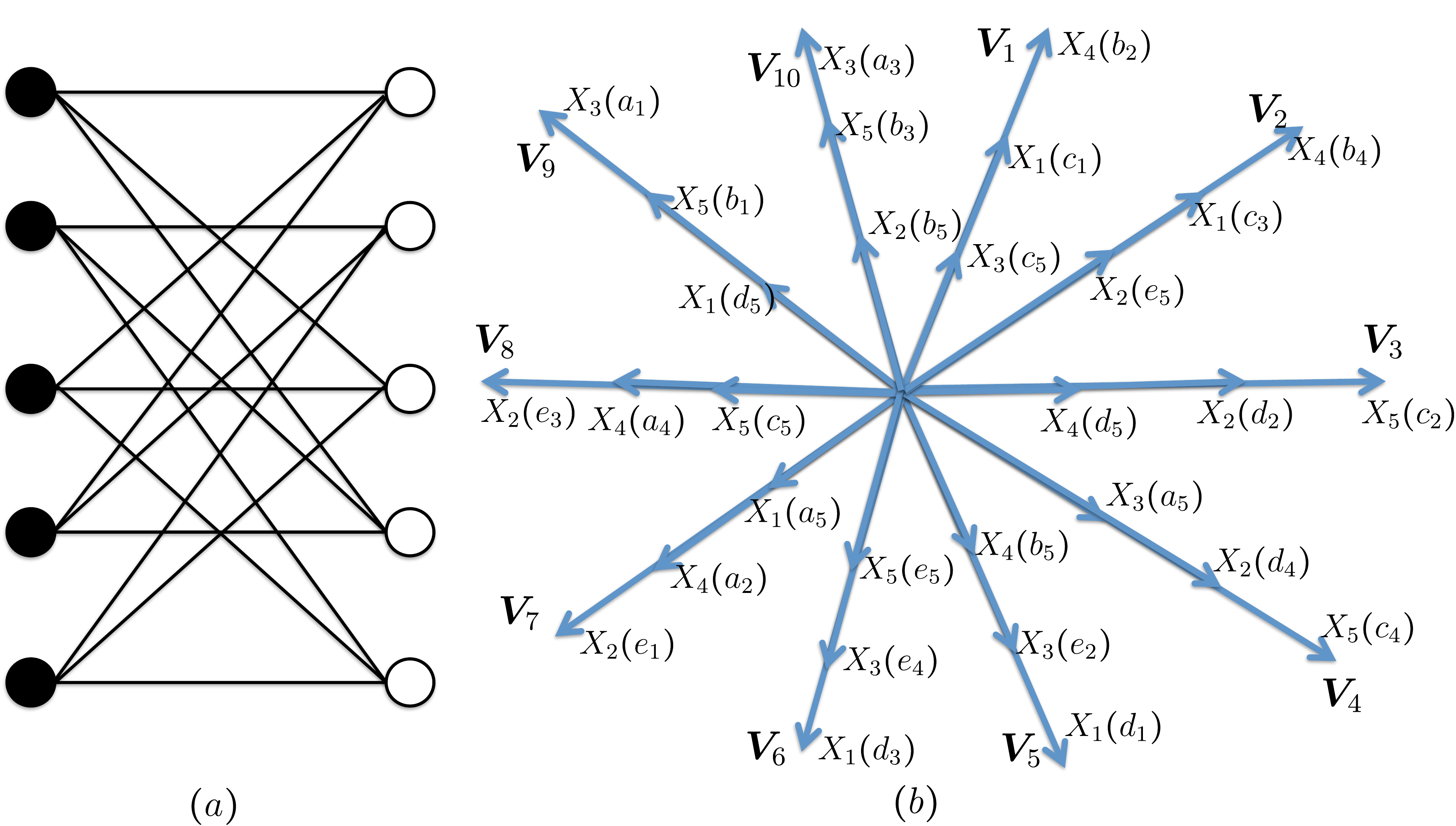}
\caption{(a) topology graph of a $(5,3)$-regular cellular network, (b) an interference alignment scheme with $\Vm_i$ being $i$-th column of $\Id_{10}$, with channel coherence time $\tau_c=1$.}
\label{fig:15-coh}
\end{figure}

It is convenient to look at the transmission/decoding from an interference alignment perspective, as shown in Fig.~\ref{fig:15-coh}(b), although interference alignment here is reduced to interference avoidance. By symbol extension with ten time slots, the transmitter signals span a ten-dimensional subspace. For Receiver 1, the transmitted signals $X_4(d_5),X_3(a_5),X_4(a_4)$, and $X_3(a_3)$ lie in the subspaces spanned by $\Vm_3,\Vm_4,\Vm_8$, and $\Vm_{10}$, respectively, and are free of interference, such that the symbols $\{d_5,a_5,a_4,a_3\}$ can be recovered almost surely. There are two subspaces spanned by $\Vm_7$ and $\Vm_9$ respectively, where the desired signals $X_2(a_2)$ and $X_3(a_1)$ are contaminated respectively by interfering signals $X_1(a_5)$ and $X_1(d_5)$. With the already recovered symbols $a_5$ and $d_5$, the interferences are reconstructed and subtracted at the receiver, so that the desired symbols $\{a_2,a_1\}$ can be recovered almost surely. As such, all desired symbols $\{a_1,a_2,a_3,a_4,a_5\}$ can be recovered within ten time slots, yielding $\frac{1}{2}$ DoF. This applies to all other receivers and symmetric DoF of $\frac{1}{2}$ is achievable even in a fast fading channel.  

This demonstrates that interference alignment together with repetition coding can be beneficial over interference avoidance even in fast fading channel $(\tau_c=1)$. This scheme is inspired by the interference alignment approach in \cite{Jafar:CBIA}, and the repetition coding approach in \cite{Avestimehr:2013TIM}.\hfill$\square$ 
\end{example}

\subsection{Outer Bound via Compound Settings}
For the regular networks, the outer bound via generator sequence becomes loose. This urges us to find another bounding techniques. By generalizing and extending the idea in \cite{Gou:2012Mixed,Jafar:CBIA}, we obtain in what follows a new outer bound with the aid of compound settings.

\begin{theorem} [Outer Bound via Compound Settings]
\label{theorem:Com}
The symmetric DoF of $K$-cell TIM-CoMP problems are upper bounded by the solution of the following optimization problem:
\begin{align}
 \min_{\Sc \subseteq \Kc} & \quad \frac{K-\abs{\Sc'}}{2K-\abs{\Sc'} - \abs{\Sc}}\\
 s.t.& \quad \Sc' = \{i | \Rc_i \subseteq \Sc \}\\
 &\quad \cup_{j \in \Sc} \Tc_j =\Kc
\end{align}
where $[\Bm^\T]_i$ is the $i$-th row of $\Bm^\T$ (i.e., $i$-th column of $\Bm$).
\end{theorem}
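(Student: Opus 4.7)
The plan is to establish this outer bound by a compound-channel argument in the spirit of \cite{Gou:2012Mixed,Jafar:CBIA}. I would replicate each receiver in a carefully chosen subset $\Sc$ with independent channel realizations; because the transmitters only know the topology $\Gc$ and have no CSIT, a single transmit signal $X_i(W_{\Rc_i})$ must simultaneously serve every compound realization. This no-CSIT consistency condition is what lifts the trivial transmitter-side estimate $\frac{K-|\Sc'|}{K-|\Sc|}$ into the claimed $\frac{K-|\Sc'|}{2K-|\Sc'|-|\Sc|}$.

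For any feasible $\Sc$, I would first extract two structural consequences of the constraints. Since $\Sc'=\{i:\Rc_i\subseteq\Sc\}$, for any $j\in\Sc^c$ and any $i\in\Tc_j$ one has $j\in\Rc_i$ and hence $\Rc_i\not\subseteq\Sc$, i.e., $\Tc_j\subseteq(\Sc')^c$. Therefore all $K-|\Sc|$ messages of $W_{\Sc^c}$ are carried exclusively by the $K-|\Sc'|$ transmitters in $(\Sc')^c$, while the $|\Sc'|$ transmitters in $\Sc'$ encode only parts of $W_\Sc$. The covering condition $\cup_{j\in\Sc}\Tc_j=\Kc$ complements this by ensuring that the receivers in $\Sc$ collectively hear linear combinations of every $X_i$, $i\in\Kc$.

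Next, I would build the compound construction. For each $k\in\Sc$ attach $L$ independent channel realizations drawn from the same continuous distribution with $L\ge\max_{k\in\Sc}|\Tc_k|$, demanding that every compound copy of receiver $k$ still decode $W_k$. Generic full-rankness of the stacked compound channel matrix then allows the compound group to invert and recover each $X_i$ with $i\in\Tc_k$ up to bounded-DoF noise; unioning over $k\in\Sc$ and using the covering condition yields access at the compound receivers to every $X_i$, and in particular to the $K-|\Sc'|$ signals from $(\Sc')^c$ that, by the first structural fact, already carry the entirety of $W_{\Sc^c}$.

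Finally, I would close with a Fano-type sum-rate calculation. The $K-|\Sc|$ original receivers in $\Sc^c$ decode $W_{\Sc^c}$ at rate $(K-|\Sc|)R$ from the $K-|\Sc'|$ signal dimensions emitted by $(\Sc')^c$; the no-CSIT compound consistency then forces those same $K-|\Sc'|$ signals to additionally support a rate of $(K-|\Sc'|)R$, so that every compound copy of every receiver in $\Sc$ can still decode its $W_k$ from the $(\Sc')^c$-portion of its observation. Bounding the combined demand $[(K-|\Sc|)+(K-|\Sc'|)]R$ by the $(K-|\Sc'|)\log P$ capacity of the $(\Sc')^c$ signals and letting $P\to\infty$ yields $(2K-|\Sc'|-|\Sc|)\,d_{\sym}\le K-|\Sc'|$, which rearranges to the claim; minimizing over feasible $\Sc$ completes the proof. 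The main obstacle is the rigorous derivation of the ``additional $(K-|\Sc'|)R$ demand'' from the compound consistency condition: because in TIM-CoMP the $(\Sc')^c$ transmitters cooperatively hold pieces of both $W_\Sc$ and $W_{\Sc^c}$, one cannot simply silence $W_{\Sc^c}$ as in the standard TIM outer bound to isolate their contribution, and the compound accounting must be performed directly on the mixed signals while respecting the message-sharing constraint. This is precisely where the argument of \cite{Jafar:CBIA,Gou:2012Mixed} must be adapted to the transmitter-cooperation setting, and where most of the bookkeeping effort lies.
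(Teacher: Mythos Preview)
Your overall strategy---compound receivers for $\Sc$, exploiting $\Sc'=\{i:\Rc_i\subseteq\Sc\}$ to reconstruct $X_{\Sc'}$ from $W_\Sc$, and then inverting to recover $X_{(\Sc')^c}$---is exactly the paper's approach. The structural observations you make about $\Sc'$ and the covering condition are correct and are used in the same way. The gap is in your final sum-rate accounting, and the obstacle you flag is not the real one.

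The concrete problem is that you attach $L\ge\max_{k\in\Sc}|\Tc_k|$ compound copies to \emph{every} receiver in $\Sc$. If you then sum Fano's inequality over all $L|\Sc|$ copies, the left-hand side is $L|\Sc|\,R_{\sym}$ and the right-hand side is at most $L|\Sc|\log P-h(\{Y^n\}\mid W_\Sc)$; with $h(\{Y^n\}\mid W_\Sc)\ge n(K-|\Sc|)R_{\sym}$ this yields only $d_{\sym}\le L|\Sc|/(L|\Sc|+K-|\Sc|)$, which deteriorates toward $1$ as $L$ grows and never reaches the claimed bound. Your alternative route---bounding a ``combined demand'' by the $(K-|\Sc'|)\log P$ ``capacity of the $(\Sc')^c$ signals''---does not work either: receivers in $\Sc$ do not decode from the $(\Sc')^c$ portion of their observation alone (they also hear $X_{\Sc'}$, which carries pieces of $W_\Sc$), so there is no clean way to isolate an additional $(K-|\Sc'|)R_{\sym}$ load on $X_{(\Sc')^c}$ from that direction.

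The fix, and what the paper actually does, is to take the \emph{minimum} total number of compound-plus-original receivers in $\Sc$ needed to invert for $X_{(\Sc')^c}$ once $X_{\Sc'}$ has been subtracted---namely exactly $K-|\Sc'|$ of them, so that the stacked reduced channel matrix is square and generically invertible. The term $(K-|\Sc'|)R_{\sym}$ then appears on the \emph{left} of the summed Fano inequalities simply because that is how many inequalities you are adding; the term $(K-|\Sc|)R_{\sym}$ enters on the right via $h(\{Y_{j,i}^n\}\mid W_\Sc)\ge n\sum_{j\in\Sc^c}R_j$, since conditioning on $W_\Sc$ gives $X_{\Sc'}$, subtraction plus inversion gives $X_{(\Sc')^c}$ up to bounded noise, and hence all of $W_{\Sc^c}$ is recoverable. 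There is no difficulty with ``mixed signals'' or with silencing $W_{\Sc^c}$: the conditioning is on $W_\Sc$ throughout, $W_{\Sc^c}$ is never silenced, and its rate is exactly what lower-bounds the residual entropy.
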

\begin{proof}
See Appendix~\ref{proof:Com}.
\end{proof}

In general, with transmitter cooperation, the interference channel form a virtual broadcast channel, such that it enables us to obtain a not-too-loose outer bound by mimicking the compound channel setting with quite limited knowledge of channel uncertainty \cite{Gou:Compound,Gou:2012Mixed}.
For each receiver, we introduce a number of compound receivers, each of which is statistically equivalent to the original one and requires the same message. So, with TIM setting, it looks as if the transmitter in this virtual BC has only knowledge of linearly independent channel realizations, which put us in a finite-state compound BC setting \cite{Gou:2012Mixed}. The corresponding outer bound can therefore serve as a outer bound of our problem, because above procedure does not reduce capacity. Nevertheless, the particularity of our problem calls for some specific treatments.
Due to partial connectivity, to enable linear independence of channel realizations of compound receivers (i.e., states), it needs at most $\abs{\Tc_j}-1$ compound receivers for Receiver $j$. The message mapping relation, which reflect the network topology, further reduces the required states, because the presence of a certain set of messages makes some transmitters transparent in compound BC settings, such that $\abs{\Tc_j}$ can be further reduced. Intuitively, regular or semi-regular (i.e., nearly regular) networks would prefer this compound setting outer bound, because it makes the numbers of required states with linear independence more balanced across receivers.

In what follows, we derive an outer bound with compound settings for a regular topology. A more general version will be presented in Appendix~\ref{proof:Com}.

\begin{example} \label{ex:compound}
\normalfont
We take the (5,3)-regular cellular network studied in Example 4 into account. By Fano's inequality, we have
\begin{align}
n (R_1 - \epsilon_n) &\le I(W_1, Y_1^n | \Hc^n, \Gc) \\
&= h(Y_1^n | \Hc^n, \Gc) - h(Y_1^n | W_1, \Hc^n, \Gc)\\
&\le n \log P - h(Y_1^n | W_1, \Hc^n, \Gc) + n \cdot O(1).
\end{align}

Assuming there are two compound receivers demanding the same message $W_1$, we have two compound signals $Y_1',Y_1''$, which are also the linear combinations of $X_1,X_3,X_4$ as $Y_1$, yet with independent channel coefficients. Thus, these three received signals are linearly independent with regard to $X_1,X_3,X_4$, almost surely, and are statistically equivalent, which results in the same achievable rate $R_1$. Similarly, we have
\begin{align}
n (R_1 -\epsilon_n)  &\le n \log P - h(Y_1'^n | W_1, \Hc^n, \Gc) + n \cdot O(1) \\
n (R_1 -\epsilon_n)  &\le n \log P - h(Y_1''^n | W_1, \Hc^n, \Gc) + n \cdot O(1).
\end{align}
For Receiver 2, we consider the statistically equivalent received signals $Y_2$ by itself and $Y_2'$ by a compound receiver, and have
\begin{align}
n (R_2 -\epsilon_n)  &\le n \log P - h(Y_2^n | W_2, \Hc^n, \Gc) + n \cdot O(1)\\
n (R_2 -\epsilon_n)  &\le n \log P - h(Y_2'^n | W_2, \Hc^n, \Gc) +n \cdot O(1).
\end{align}

Combining all above inequalities, we have
\begin{align}
\MoveEqLeft n(3R_1 + 2R_2 - \epsilon_n) \\&\le 5n \log P - h(Y_1^n, Y_1'^n, Y_1''^n, Y_2^n, Y_2'^n | W_1, W_2, \Hc^n, \Gc) + n \cdot O(1) \\
&= 5n \log P - h(\{X_i^n+\bar{Z}_i^n, i=1,\dots,5\} | W_1, W_2, \Hc^n, \Gc) + n \cdot O(1) \\
&= 5n \log P -n (R_3+R_4+R_5) + n \cdot O(1)
\end{align}
where $Y_1, Y_1', Y_1'', Y_2, Y_2'$ are linearly independent with regard to $\{X_i, i=1,2,3,4,5\}$, by which the noisy versions of $\{X_i, i=1,2,3,4,5\}$, i.e., $X_i^n+\bar{Z}_i^n$ with $\bar{Z}_i$ being bounded noise term, can be recovered, almost surely; the last equality due to
\begin{align}
n (R_3 + R_4 + R_5) &=H(W_3,W_4,W_5)\\
&=H(W_3,W_4,W_5) - H(W_3,W_4,W_5 | \{X_i^n,i=1,\dots,5\},W_1,W_2, \Hc^n, \Gc) \label{eq:codebooks}\\
&= I(W_3,W_4,W_5; \{X_i^n, i=1,\dots,5\} | W_1, W_2, \Hc^n, \Gc)\\
&= I(W_3,W_4,W_5; \{X_i^n+\bar{Z}_i^n, i=1,\dots,5\} | W_1, W_2, \Hc^n, \Gc) + n \cdot O(1) \\
&= h(\{X_i^n+\bar{Z}_i^n, i=1,\dots,5\} | W_1, W_2, \Hc^n, \Gc) + n \cdot O(1) \label{eq:decodable}.
\end{align}
where the second term in \eqref{eq:codebooks} is zero because $\{X_i^n,i=1,\dots,5\}$ are encoded from $W_{1:5}$ and the encoding process (or mapping) is invertible, such that the knowledge/uncertainty of $\{X_i^n,i=1,\dots,5\}$ is equivalent to the knowledge/uncertainty of $W_{1:5}$.
By now, according to the definition of symmetric DoF, it follows that
\begin{align}
d_\sym \le \frac{5}{8}.
\end{align}

In contrast, by generator bound, the best possible outer bound is $d_\sym \le \frac{4}{5}$, which is looser. On the other hand, if this compound setting bound applies to the irregular network in Example 1, then the best possible outer bound will be $d_\sym \le \frac{4}{7}$, which is looser than that by generator bound. This confirms that compound setting bound is more suitable to regular networks, while generator sequence bound is more preferable to irregular networks. \hfill$\square$ 
\end{example}

\subsection{The optimality of Interference Alignment}
By the above outer bound, we are able to characterize the optimal symmetric  DoF of a subset of regular networks.

\begin{corollary} [Optimal DoF of Cyclic Wyner-type Networks]
\label{cor:Wyner}
For a $(K,2)$-regular network, e.g., a cyclic Wyner-type network, the optimal symmetric DoF are
\begin{align}
  d_{\sym}(K,2) = \left\{\Pmatrix{ \frac{1}{2}, & K = 2 \\ \frac{2}{3}, & K \ge 3} \right.
\end{align}
if the coherence time $\tau_c \ge 3$ when $K \ge 3$.
\end{corollary}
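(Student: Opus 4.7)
The claim splits into achievability and a matching converse, with the two regimes $K=2$ and $K\geq 3$ treated separately. Achievability is immediate from Theorem~\ref{theorem:Reg}: when $K\geq 3$ we are in the regime $d=2\leq K-1$, so $\tfrac{2}{d+1}=\tfrac{2}{3}$ is achievable under the stated coherence-time requirement $\tau_c\geq d+1=3$; when $K=2$ we are in the regime $d=K$, so $\tfrac{1}{K}=\tfrac{1}{2}$ is achievable by plain time-sharing with no coherence requirement.

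For the converse in the regime $K\geq 3$, the plan is to apply Theorem~\ref{theorem:Com} (the compound-settings bound) and exhibit a single feasible $\Sc$ attaining the ratio $\tfrac{2}{3}$. The natural choice is a minimum dominating set of the length-$K$ cycle: take $\Sc=\{1,3,5,\ldots,K-1\}$ when $K$ is even, and $\Sc=\{1,3,5,\ldots,K\}$ when $K$ is odd. Since $\Tc_j=\{j,j+1\}$ (indices mod $K$), one checks immediately that $\bigcup_{j\in\Sc}\Tc_j=\Kc$, so the feasibility constraint of Theorem~\ref{theorem:Com} holds. The auxiliary set $\Sc'=\{i:\Rc_i\subseteq\Sc\}$ counts receivers whose neighborhood $\Rc_i=\{i-1,i\}$ lies entirely in $\Sc$, i.e., cyclically consecutive pairs both present in $\Sc$. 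For even $K$ there are none, so $|\Sc'|=0$, $|\Sc|=K/2$, and the ratio becomes $\tfrac{K}{2K-K/2}=\tfrac{2}{3}$. For odd $K$ the only consecutive pair forced by the wrap-around is $(K,1)$, so $|\Sc'|=1$, $|\Sc|=(K+1)/2$, and the ratio becomes $\tfrac{K-1}{2K-1-(K+1)/2}=\tfrac{2(K-1)}{3(K-1)}=\tfrac{2}{3}$. Either way $d_{\sym}\leq\tfrac{2}{3}$, matching the achievable value.

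The case $K=2$ needs a separate converse because the cycle collapses and the compound-settings computation above becomes loose. Here I would invoke Theorem~\ref{theorem:Gen} instead: for the fully-connected two-cell network the topology matrix has $\Bm_1=\Bm_2=[1,1]$, so choosing $\Sc=\{1,2\}$ and initial generator $\Ic_0=\{1\}$ makes the containment $\Bm_2\in\mathrm{rowspan}\{\Bm_1\}$ trivial and $\{\{1\},\{2\}\}$ a valid generator sequence, yielding the desired bound $d_{\sym}\leq|\Ic_0|/|\Sc|=\tfrac{1}{2}$. The only subtle step in the whole outline is the bookkeeping for the odd-$K$ compound bound, where the cyclic wrap-around forces $\Sc'$ to contain a single index; the arithmetic happens to conspire so that the $\tfrac{2}{3}$ target is preserved, and no structure beyond the minimum-dominating-set choice of $\Sc$ is required.
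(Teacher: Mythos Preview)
Your proposal is correct and follows essentially the same approach as the paper: Theorem~\ref{theorem:Com} with the alternating set $\Sc$ for the converse when $K\ge 3$, and the achievability is the content of Theorem~\ref{theorem:Reg}. The only cosmetic differences are that the paper writes out the $d=2$ alignment scheme explicitly rather than citing Theorem~\ref{theorem:Reg}, and it uses the labeling $\Tc_j=\{j-1,j\}$ so that $\Sc'=\{K\}$ rather than your $\Sc'=\{1\}$; for $K=2$ the paper simply asserts optimality of $\tfrac12$ for the fully connected network, whereas you supply the generator-sequence verification.
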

\begin{proof}
See Appendix \ref{proof:Wyner}.
\end{proof}

\subsection{Interference Alignment with Proper Partition and Hypergraph Covering}

The alignment feasibility condition in Defenitions~\ref{def:afg} and \ref{def:anm} can also be generalized to more than two messages, as shown in the following definitions.

\begin{definition} [Proper Partition]
\label{def:pp}
A partition $\Kc=\{\Pc_1, \Pc_2,\dots,\Pc_\kappa \}$ with size $\kappa$, where $\cup_{i=1}^\kappa \Pc_i = \Kc$ and $\Pc_i \cap \Pc_j=\emptyset$ $\forall~i\neq j$, is called a {\bf proper partition}, if for every portion $\Pc_i = \{i_1,i_2,\dots,i_{p_i}\}$ with $p_i \defeq \abs{\Pc_i}$ $(i \in [\kappa])$, we have
        \begin{align} \label{eq:ppcond}
          \Tc_{i_k} \bigcap \left(\bigcup_{i_j \in \Pc_i \backslash i_k} \Tc_{i_j}\right)^c \neq \emptyset, \quad \forall~i_k \in \Pc_i.
        \end{align}
 \end{definition}

 \begin{definition} [Alignment Non-Conflict Matrix]
 \label{def:pp-anm}
For a proper partition  $\{\Pc_{1},\dots,\Pc_{\kappa}\}$, we construct a $K \times \kappa$ binary matrix $\Am$, with $\Am_{ij}=1$ ($j \in [\kappa], i\in \Kc)$, if
        \begin{align}
          \Tc_{j_{t}} \bigcap \left(\bigcup_{j_{s} \in \Pc_{j} \backslash j_{t}} \Tc_{j_{s}}\right)^c \nsubseteq \Tc_i, \quad \forall~j_t \in \Pc_j
        \end{align}
and with $\Am_{ij}=0$ otherwise. Further, we reset $\Am_{ij}=0$, if there exist $j_t \in \Pc_j$ and $i \in \Kc$, such that
\begin{align}
 \Tc_{j_{t}} \bigcap \left(\bigcup_{j_{s} \in \Pc_{j} \backslash j_{t}} \Tc_{j_{s}}\right)^c \bigcap_{i:\Am_{ij}=1} \Tc_{i}^c = \emptyset.
\end{align}
\end{definition}

The elements in each portion of proper partition imply that the corresponding messages are able to align in the same subspace, whereas the alignment non-conflict matrix identifies if this subspace is absent to some receivers. As such, relying on these definitions, the sufficient conditions to achieve a certain amount of symmetric DoF are presented as follows.

\begin{theorem} [Achievable DoF with Proper Partition]
\label{theorem:Arb-more}
  For a $K$-cell cellular network with arbitrary topologies, the following symmetric DoF are achievable:
  \begin{itemize}
    \item $d_\sym = \frac{1}{\kappa}$, if there exists a proper partition with size $\kappa$;
    \item $d_\sym = \frac{1}{\kappa-q}$ with $\tau_c \ge \kappa-q$,  if there exists a proper partition with size $\kappa$, say $\{\Pc_{1},\dots,\Pc_{\kappa}\}$, associated with an $K \times \kappa$ alignment non-conflict matrix $\Am$, such that
    \begin{align}
          q \defeq \min_{i} \sum_{j} \Am_{ij}.
        \end{align}
  \end{itemize}
\end{theorem}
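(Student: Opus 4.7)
The plan is to build on the construction in the proof of Theorem~\ref{theorem:Arb}, generalizing pairwise alignment along edges of a Hamiltonian cycle to group alignment along the portions of the proper partition. The key structural fact is that condition~\eqref{eq:ppcond} in Definition~\ref{def:pp} is exactly what guarantees, for each portion $\Pc_j$, an assignment $j_t \mapsto \tau(j_t) \in \Tc_{j_t}\cap(\cup_{j_s \in \Pc_j \setminus j_t}\Tc_{j_s})^c$ of every message in $\Pc_j$ to a transmitter that serves no other message in the same portion. Consequently, if the $|\Pc_j|$ signals $\{X_{\tau(j_t)}(W_{j_t}): j_t \in \Pc_j\}$ are all cast along a single common direction vector $\Vm_j$, they do not interfere with one another at any receiver in $\Pc_j$: each such receiver sees along $\Vm_j$ only its own desired contribution, while receivers outside $\Pc_j$ see an aligned combination that consumes only one dimension.

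For the first claim, I would take $\kappa$ direction vectors $\Vm_1, \ldots, \Vm_\kappa \in \CC^{\kappa \times 1}$ drawn i.i.d.\ from a continuous distribution, and have each transmitter send $\sum_{j_t: \tau(j_t)=i} \Vm_j s_{j_t}$ over $\tau_c = \kappa$ slots, where $s_{j_t}$ is the scalar data symbol for $W_{j_t}$. At Receiver $i \in \Pc_{j^\ast}$, the contribution of portion $\Pc_j$ (for any $j$) lies entirely along $\Vm_j$ by construction, so the received observation has mixing matrix $[\Vm_1 \mid \cdots \mid \Vm_\kappa]$, which is almost surely invertible since the $\Vm_j$ are generic. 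Zero-forcing recovers the desired symbol along $\Vm_{j^\ast}$, delivering one DoF per user per $\kappa$ channel uses, i.e.\ $d_\sym = 1/\kappa$.

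For the refined claim, I would shrink the ambient dimension from $\kappa$ to $\kappa - q$ by exploiting portions whose aligned subspace is absent at the receiver of interest. For Receiver $i$, the entries $\Am_{ij}=1$ in Definition~\ref{def:pp-anm} mark portions $\Pc_j$ for which $\tau(j_t)$ can additionally be chosen outside $\Tc_i$ for every $j_t \in \Pc_j$; the reset rule additionally enforces that these out-of-$\Tc_i$ selections can be made simultaneously across all rows $i$ with $\Am_{ij}=1$ under a single global assignment. Hence after picking any compatible global $\tau(\cdot)$, at least $q$ of the $\kappa$ portions contribute nothing to Receiver $i$'s observation, and the desired symbol together with the at most $\kappa - q$ non-vanishing aligned interferers lives in a $(\kappa-q)$-dimensional subspace. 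Choosing $\Vm_1,\ldots,\Vm_\kappa \in \CC^{(\kappa-q)\times 1}$ generically, any $\kappa-q$ of them are almost surely linearly independent, and zero-forcing recovers the desired symbol, giving $d_\sym = 1/(\kappa - q)$; the symbol extension requires $\tau_c \ge \kappa - q$.

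The crux of the argument, and the main obstacle, is the combinatorial feasibility of a \emph{single} transmitter assignment $\tau(\cdot)$ realizing both in-portion non-interference and the claimed cross-receiver absences simultaneously. Verifying this amounts to showing that for every portion $\Pc_j$ and every $j_t \in \Pc_j$ the set $\Tc_{j_t}\cap(\cup_{j_s \in \Pc_j\setminus j_t}\Tc_{j_s})^c \cap \bigcap_{i:\Am_{ij}=1}\Tc_i^c$ is non-empty, which is precisely the content of the reset rule in Definition~\ref{def:pp-anm}: whenever this intersection fails for some $j_t$ and $i$, the entire column $j$ is zeroed out, so only the surviving $1$-entries are counted toward $q$. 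Once this feasibility is in hand, the remainder of the argument reduces to the same generic-vector linear-independence calculation used in the proof of Theorem~\ref{theorem:Arb}.
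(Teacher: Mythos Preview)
Your proposal is correct and follows essentially the same construction as the paper: pick a transmitter $\tau(j_t)$ from the set guaranteed nonempty by~\eqref{eq:ppcond}, align all messages of portion $\Pc_j$ along a common vector $\Vm_j$, verify that each receiver sees its own portion's direction cleanly while every other portion collapses to one dimension, and for the refinement use the reset rule of Definition~\ref{def:pp-anm} exactly as you describe to certify a single global assignment making at least $q$ portions vanish at every receiver. One small omission: the first item of the theorem carries no coherence-time hypothesis, whereas your generic-vector construction implicitly needs $\tau_c \ge \kappa$; the paper closes this by remarking that taking $\Vm_j$ as the $j$-th column of $\Id_\kappa$ makes the scheme work even when $\tau_c=1$, at which point alignment degenerates to avoidance.
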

\begin{proof}
See Appendix~\ref{proof:Arb-more}.
\end{proof}

The same observation of alignment-feasible graphs can be obtained here. A proper portion in \eqref{eq:ppcond} implies the feasibility of a proper selective graph coloring in $\Gc_e^2$. Any two (or more) vertices in clusters $j_\Sc$ $(\Sc \subseteq \Kc)$ in $\Gc_e$ (corresponding to edges in $\Gc$ connecting Transmitter $i_s$ to Receiver $j_s$ $(\forall~s \in \Sc)$) that receive the same color are scheduled in a single time slot without causing interference, implying that the transmitted signals in the form of $\{X_{i_s}(W_{j_s}),s \in \Sc\}$ are alignment-feasible in the same subspace. Due to the fact that the required number of subspace can be less (according to alignment non-conflict matrix in Definition \ref{def:pp-anm}), interference alignment  based on proper partition performs no worse than interference avoidance.

\begin{example}
\normalfont
An example regarding proper partition is shown in Fig.~\ref{fig:partite}. Given the transmit sets $\Tc_1=\{1,4\}$, $\Tc_2=\{2,3\}$, $\Tc_3=\{2,3\}$, $\Tc_4=\{1,2,4\}$, $\Tc_5=\{3,5,6\}$, and $\Tc_6=\{4,5,6\}$, we have a proper partition $\{\{1,3,5\},\{2,4,6\}\}$ with $\kappa=2$, such that $\{X_1(d),X_3(b),X_5(f)\}$ and $\{X_2(c),X_4(a),X_6(e)\}$ are aligned in a subspace respectively. As shown in Fig.~\ref{fig:partite}(b), an interference alignment can be constructed to deliver one symbol per user within two time slots. Thus, symmetric DoF $\frac{1}{2}$ is achievable. In this example, $q=0$. \hfill$\square$ 

\begin{figure}[htb]
\centering
\includegraphics[width=0.5\columnwidth]{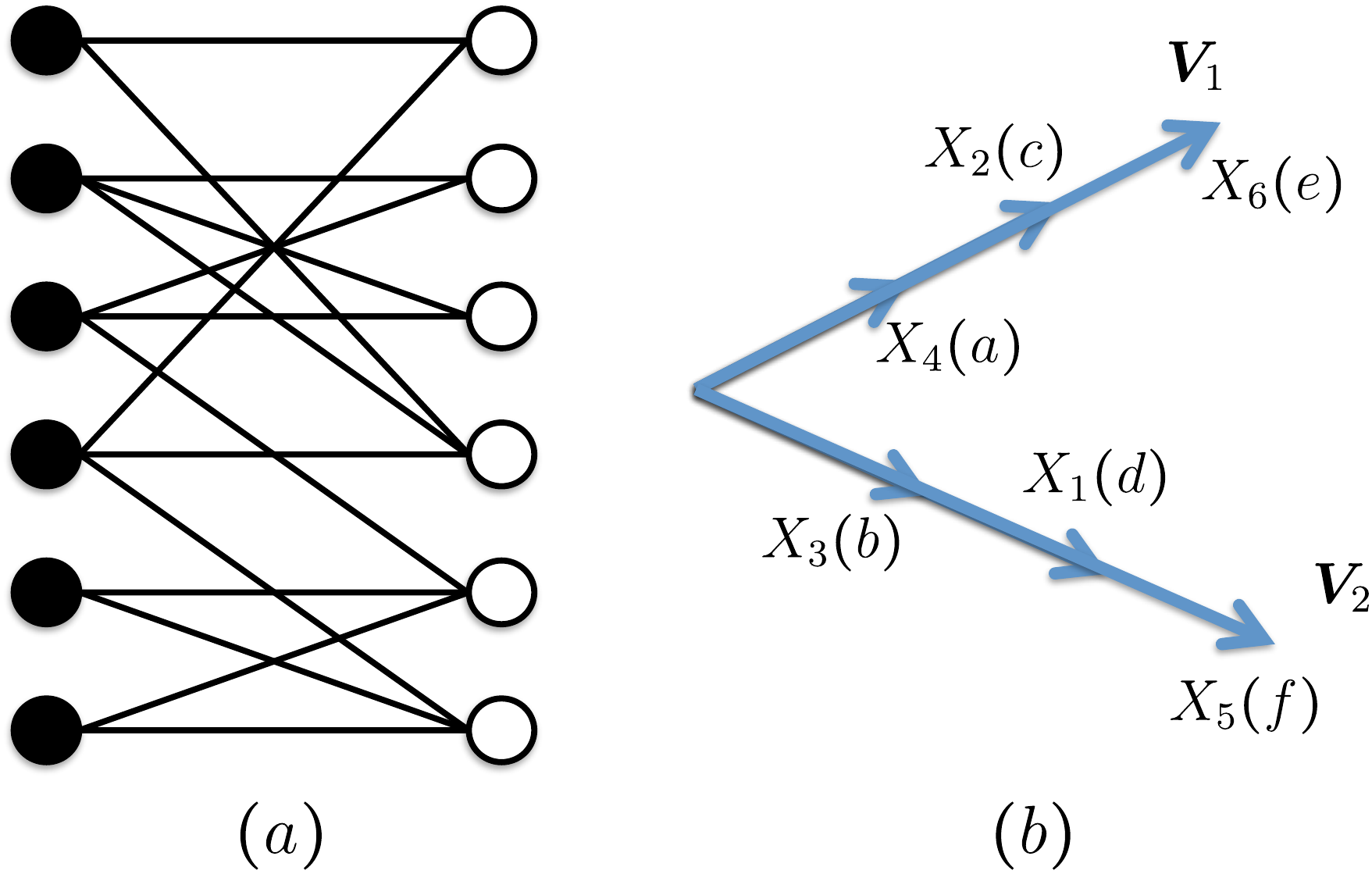}
\caption{ (a) An instance of TIM-CoMP problem $(K=6)$ with a proper partition $\{\{1,3,5\},\{2,4,6\}\}$. (b) An interference alignment scheme, where the messages whose transmitted signals are aligned in the same subspace belong to one portion.}
\label{fig:partite}
\end{figure}

\end{example}

From the previous theorems, we observe that the messages connected by an edge in $\Gc_{AFG}$ or belonged to the same portion of a proper partition are able to be scheduled at the same time slot or be aligned at the same direction. Inspired by this observation, we construct a hypergraph and translate our problem into a covering problem of this hypergraph. 

\begin{theorem} [Achievable DoF via Hypergraph Covering]
\label{theorem:Cov}
For the TIM-CoMP problem with arbitrary topologies, the symmetric DoF
\begin{align}
  d_\sym = \frac{1}{\tau_f(\Hc_\Gc)}
\end{align}
are achievable,
where $\tau_f(\Hc_\Gc)$ is the fractional covering number of the hypergraph $\Hc_\Gc=(\Kc,\Xc)$ with the vertex set $\Kc$ representing messages and the hyperedge set $\Xc$ including all satisfactory subsets $\Xc_i \defeq \{{i_1},{i_2},\dots,{i_{\abs{\Xc_i}}}\} \subseteq \Kc$ such that
\begin{align}
  \Tc_{{i_k}} \bigcap \left(\bigcup_{{i_j} \in \Xc_i \backslash {i_k}} \Tc_{{i_j}}\right)^c \neq \emptyset, \quad \forall~{i_k} \in \Xc_i.
\end{align}
\end{theorem}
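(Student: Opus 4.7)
My plan is to exhibit an explicit time-sharing scheme that realizes the LP defining $\tau_f(\Hc_\Gc)$, thereby generalizing the proper-partition achievability of Theorem~\ref{theorem:Arb-more} (which handles the integer case $\tau_f=\kappa$ arising from a single partition of $\Kc$). The key observation is that every hyperedge $\Xc_i$ in $\Hc_\Gc$ is, by definition, a ``scheduling-feasible'' group of messages: for each $i_k\in\Xc_i$, the defining condition furnishes a private transmitter $t_{i_k}\in\Tc_{i_k}\setminus\bigcup_{j\neq k}\Tc_{i_j}$ that possesses $W_{i_k}$ through message sharing and is disconnected from every other receiver in $\Xc_i$. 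Activating the collection $\{t_{i_k}\}_{i_k\in\Xc_i}$ in a single channel use therefore delivers one symbol per message in $\Xc_i$ free of in-group interference, exactly mirroring the single-block construction behind Theorem~\ref{theorem:Arb-more}.

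Given this building block, I would proceed in three steps. First, fix an optimal fractional cover $\{x_i\}$ of $\Hc_\Gc$, i.e., $\sum_i x_i = \tau_f(\Hc_\Gc)$ with $\sum_{i:\,v\in\Xc_i} x_i\ge 1$ for every $v\in\Kc$; by rationality of the covering LP I may assume $x_i\in\mathbb{Q}$, and rescaling by a common denominator $m$ yields integers $m_i = m x_i$ with $L \defeq \sum_i m_i = m\,\tau_f(\Hc_\Gc)$. Second, construct the scheme over $L$ channel uses (obtained by symbol extension if the coherence time is shorter than $L$), partitioning them into disjoint blocks of sizes $m_i$, one per hyperedge; inside the block allocated to $\Xc_i$, let each private transmitter $t_{i_k}$ send $m_i$ fresh symbols of $W_{i_k}$ while all other transmitters remain silent. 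Third, argue decodability: receiver $i_k$ collects $m_i$ interference-free observations of $W_{i_k}$ in this block (no other $t_{i_j}$, $j\neq k$, reaches it), and blocks with $v\notin\Xc_i$ are simply ignored by receiver $v$, so cross-block orthogonality is automatic regardless of any overheard interference in those slots.

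Counting interference-free observations across blocks, receiver $v$ accumulates $\sum_{i:\,v\in\Xc_i} m_i\ge m$ clean samples of $W_v$ over $L$ channel uses, giving an achievable per-user rate of $(m/L)\log P + o(\log P)$ and hence symmetric DoF $m/L = 1/\tau_f(\Hc_\Gc)$; letting $m\to\infty$ (and approximating any irrational optimum by rational covers with $\sum_i x_i\to\tau_f$) completes the achievability argument. The main obstacle is conceptual rather than calculational: one has to recognize that the hypergraph $\Hc_\Gc$ correctly encodes all alignment-feasible subsets beyond those captured by proper partitions, so that the DoF problem reduces cleanly to the fractional covering LP. Once that viewpoint is adopted, the in-block private-transmitter assignment is handed to us by the defining condition of $\Xc$, cross-block interference is eliminated by orthogonal scheduling, and all that remains is the routine LP-rationality and symbol-extension limiting argument.
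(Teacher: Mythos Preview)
Your proposal is correct and follows essentially the same route as the paper: both identify each hyperedge $\Xc_i$ with a one-slot scheduling block in which a private transmitter $z_{i_k}\in\Tc_{i_k}\setminus\bigcup_{j\neq k}\Tc_{i_j}$ serves $W_{i_k}$ interference-free, and then time-share across hyperedges so that every message is served at least $t$ (resp.\ $m$) times over $\tau_t$ (resp.\ $L=m\,\tau_f$) slots. The only cosmetic difference is that the paper invokes the $t$-fold covering characterization $\tau_f=\inf_t \tau_t/t$ directly, whereas you reach the same limit via LP rationality and a common-denominator argument.
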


\begin{proof}
See Appendix~\ref{proof:Cov}.
\end{proof}

Note that the relation of the vertices of a hyperedge is similar to that of the portion of a proper partition as in \eqref{eq:ppcond}, indicating that the messages that belong to any hyperedge are alignment feasible.
The characterization of the fractional hypergraph covering number $\tau_f(\Hc_\Gc)$ can also be performed by the following integer linear programming relaxation
\begin{align}
\tau_f(\Hc_\Gc) = \min & \quad \sum_{i \in \Kc} \rho_i\\
s.t.& \quad \sum_{i \in \Kc : j \in \Xc_i} \rho_i \ge 1, \quad \forall~j \in \Kc \\
& \quad \rho_i \in [0,1], \quad \forall~i\in \Kc
\end{align}
where $\rho_i$ is an indicator variable associated with the hyperedge $\Xc_i \in \Xc$ with value between 0 and 1 indicating the weight assigned to $\Xc_i$ accounts for the total weight, the first constraint ensures that every vertex in $\Kc$ is covered at least once, and the last constraint specifies a fractional $\rho_i$, which is the relaxation of integers $\{0,1\}$. Although the optimization of this linear program is NP-hard, the connection of our problem and hypergraph covering bridges the TIM-CoMP problem and the hypergraph covering problem, such that the progress on one problem is automatically transferrable to the other one. 

Essentially, the above hypergraph covering aided approach relies on the one-to-one alignment.
As known in TIM problems, subspace alignment is a generalized version of one-to-one alignment and the former usually performs better than the latter. In what follows, we show that, with message sharing, subspace alignment boils down to one-to-one alignment with proper message and subspace splitting.
\begin{example}
\normalfont
Consider a network topology shown in Fig.~\ref{fig:hycover}(a). Without message sharing, the optimal symmetric DoF value is $\frac{1}{3}$, which is achieved by a subspace alignment scheme. Every transmitter sends message in a one-dimensional subspace out of in total three-dimensional space. At receiver 1, the interference from Transmitter 4 lies in the subspace spanned by the interference caused by Transmitters 2 and 3. As such, the desired message of Receiver 1 can be recovered almost surely. At Receivers 2, 3, and 4, the interference occupies one-dimensional subspace, leaving two-dimensional interference-free subspace to desired messages. Thus, the symmetric DoF of $\frac{1}{3}$ are achievable.

In contrast, with message sharing and proper message splitting, a one-to-one alignment scheme can achieve symmetric DoF of $\frac{2}{5}$. Intuitively, every transmitter sends two messages occupying a two-dimensional subspace in a five-dimensional space. Denote by $\Vm_i$ the subspace occupied by Transmitter $i$, where $\dim(\Vm_i)=2$ and $\dim(\cup_{i=1}^4 \Vm_i)=5$. 
At Receiver 1, the interfering subspaces associated with Transmitters 2 and 3 are overlapped with one-dimensional subspace, i.e., $\dim(\Vm_2 \cup \Vm_3)=3$ and $\dim(\Vm_2 \cap \Vm_3)=1$. In addition, the interfering symbols from Transmitter 4 lie in the subspace spanned by the interference from Transmitters 2 and 3, i.e., $\Vm_4 \in span \{\Vm_2, \Vm_3\}$. It would seem subspace alignment is required. In fact, it can be done by a one-to-one alignment scheme by splitting subspace into, e.g., 
\begin{align}
\Vm_1 = [\vv_1 \ \vv_2], \quad \Vm_2 = [\vv_3 \ \vv_4], \quad \Vm_3 = [\vv_3 \ \vv_5], \quad \Vm_4 = [\vv_4 \ \vv_5]
\end{align}
where $\{\vv_i, i=1,\dots,5\}$ are $5 \times 1$ linearly independent vectors,
and by splitting messages and sending
\begin{align} \label{eq:hyperedge}
\Xm_1 = \Vm_1 \Bmatrix{a_1 \\ a_2} , \quad \Xm_2 = \Vm_2 \Bmatrix{b_2 \\ c_1}, \quad \Xm_3=\Vm_3 \Bmatrix{d_1 \\ c_2}, \quad \Xm_4=\Vm_4 \Bmatrix{d_2 \\ b_1} 
\end{align}
from four transmitters within five time slots, respectively. The concept of interference alignment is illustrated in Fig.~\ref{fig:hycover}(b).

\begin{figure}[htb]
\centering
\includegraphics[width=0.7\columnwidth]{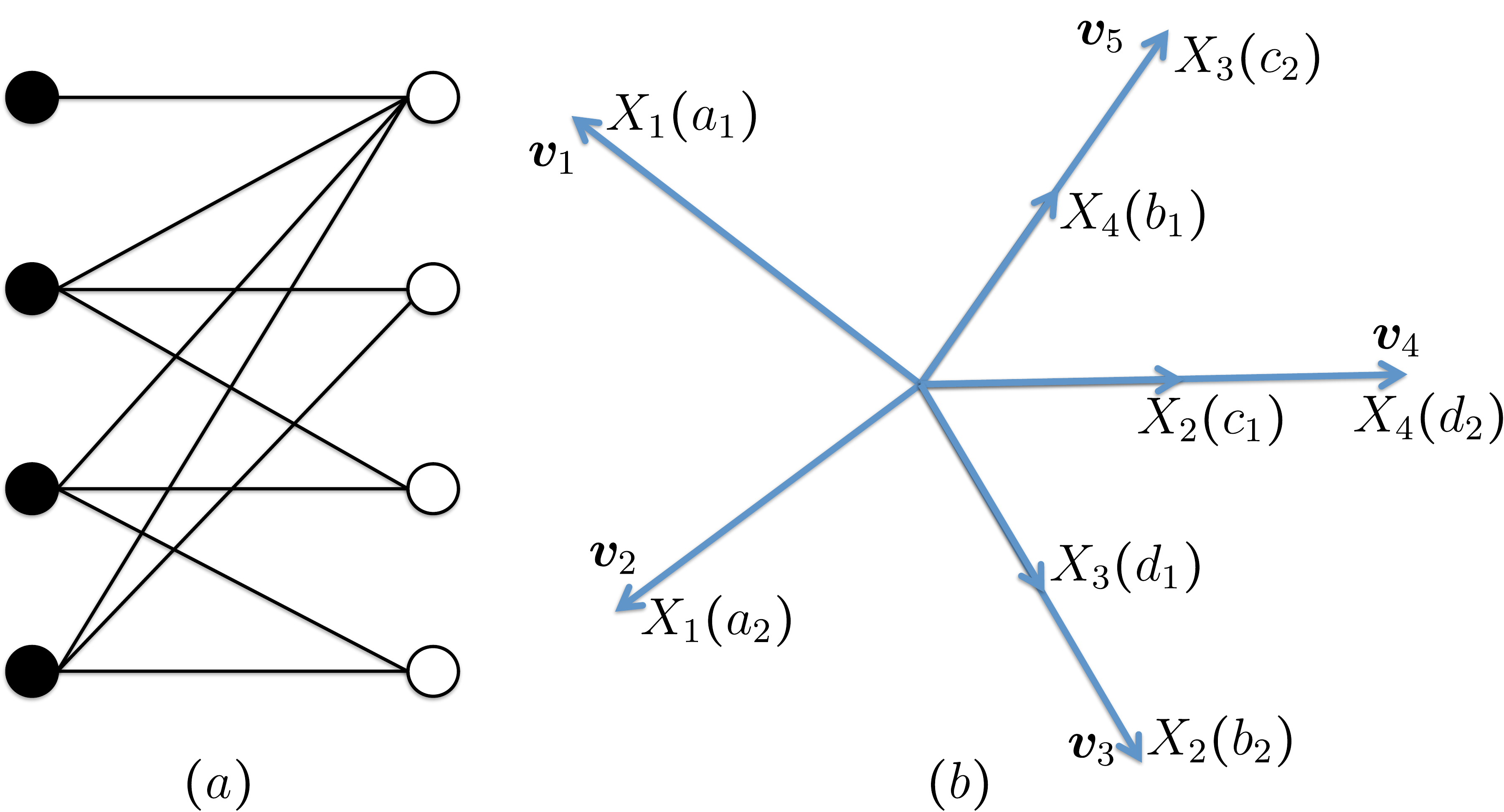}
\caption{ (a) An instance of TIM-CoMP problem $(K=4)$. (b) An one-to-one interference alignment scheme.}
\label{fig:hycover}
\end{figure}

For TIM problem where both the source and destination of one message are determined {\em a priori}, subspace alignment is necessary to align the interference from Transmitter 4 to the subspace spanned by interferences from Transmitters 2 and 3. In contrast, for TIM-CoMP problems, the source of one message can be any transmitter that it is connected, such that by proper message splitting and subspace splitting, it is possible to replace subspace alignment by one-to-one alignment.

Let us look at the above subspace and one-to-one alignment schemes from a hypergraph covering perspective. According to the condition of hyperedges in \eqref{eq:hyperedge}, we have following hyperedges
\begin{align}
\{1\}, \{2\},\{3\},\{4\},\{5\}, \{2,3\}, \{3,4\},\{4,5\}
\end{align}
A proper fractional hypergraph covering is to choose the following hyperedges
\begin{align}
\{1\}, \{1\}, \{2,3\}, \{3,4\},\{4,5\}
\end{align}
which gives fractional hypergraph covering number of $\frac{5}{2}$ and thus yields the symmetric DoF of $\frac{2}{5}$. \hfill$\square$ 
\end{example}

\section{Relation to Index Coding Problems}

Knowing that the TIM problem was nicely bridged to the index coding problem~\cite{Jafar:2013TIM}, one may wonder if there exist relations between our problem and index coding. Indeed, our problem can also be related to the index coding problem. Before presenting this relation, we first define the index coding problem and its demand graph similarly to those in \cite{Jafar:2013TIM,DynamicIndex}.
\begin{definition} [Index Coding]
  A multiple unicast {\bf index coding} problem, denoted as $\IC(k | \Sc_k)$, is comprised of a transmitter who wants to send $K$ messages $W_k,k \in \Kc$ to their respective receivers over a noiseless link, and $K$ receivers, each of which has prior knowledge of $W_{\Sc_k}$ with $\Sc_k \subseteq \Kc\backslash k$. Its {\bf demand graph} is a directed bipartite graph $\Gc_d=(\Wc,\Kc,\Ec)$ with vertices of Message $W_k \in \Wc$ and Receiver $k$ $(k\in \Kc)$, and there exists a directed forward edge $i \to j$ from Message $W_i$ to Receiver $j$ if $W_i$ is demanded by Receiver $j$ and a backward edge $k \gets j$ from Receiver $j$ to Message $W_k$ if Receiver $j$ has the knowledge of $W_k$ as side information.
\end{definition}

\begin{theorem} [Outer Bound via Index Coding]
\label{theorem:Index}
  For the TIM-CoMP problem, given the topological information $\{\Tc_k,\Rc_k,\forall~k \in \Kc\}$, the DoF region is outer bounded by the capacity region of a multiple unicast index coding problem $\IC(k | \Sc_k)$, where
  \begin{align}
    \Sc_k \defeq \bigcup_{j \in \Tc_k^c} \Rc_j.
  \end{align}
\end{theorem}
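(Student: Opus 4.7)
The plan is to prove Theorem~\ref{theorem:Index} by a genie-augmentation followed by a direct simulation, closely paralleling Jafar's TIM-to-index-coding reduction~\cite{Jafar:2013TIM}, adapted to the present setting with message sharing at the transmitters. Before anything else, I would verify that $k \notin \Sc_k$, so that the associated index coding instance is well defined: $k \in \Sc_k$ would require some $j \in \Tc_k^c$ with $k \in \Rc_j$, but by the definitions of $\Tc_k$ and $\Rc_j$ the condition $k \in \Rc_j$ is equivalent to $j \in \Tc_k$, contradicting $j \in \Tc_k^c$, so $W_k$ is never supplied as side information to receiver $k$ in $\IC(k\mid \Sc_k)$.

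Next I would enhance the TIM-CoMP problem by handing receiver $k$ the side information $W_{\Sc_k}$ for free. Since a genie can only enlarge the capacity region, it suffices to bound the DoF of this enhanced problem. The key observation is that $\Sc_k = \bigcup_{j\in \Tc_k^c}\Rc_j$ collects precisely those messages that a disconnected transmitter $j\in\Tc_k^c$ is allowed to know under the cooperation rule $W_{\Rc_j}$, so every signal $X_j$ with $j\in\Tc_k^c$ becomes a deterministic function of the augmented side information $W_{\Sc_k}$ (plus the common topology $\Gc$), i.e., the signals originating outside $\Tc_k$ are rendered transparent to receiver $k$ once the genie acts.

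I would then simulate the enhanced TIM-CoMP problem by an instance of $\IC(k\mid\Sc_k)$. The index-code transmitter holds all $K$ messages and can therefore synthesise the entire signal block $(X_1^n,\ldots,X_K^n)$ via the TIM-CoMP encoders $f_{i,t}(W_{\Rc_i},\Gc)$. It broadcasts, over the noiseless index-coding link, a sufficiently fine quantization of these signals, whose rate overhead is $o(\log P)$ per transmitted dimension and hence vanishes in the DoF normalisation $R/\log P$. Each receiver $k$ combines the broadcast with its side information $W_{\Sc_k}$ and the channel realisations $\Hc^n$ to reproduce $Y_k^n = \sum_{i\in\Tc_k} h_{ki}(t) X_i(t) + Z_k(t)$ up to negligible error, and then applies the TIM-CoMP decoder $g_k$ to extract $\hat W_k$. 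Any DoF tuple achievable in TIM-CoMP is therefore achievable, at matching normalised rate, for $\IC(k\mid\Sc_k)$, yielding the claimed inclusion of capacity regions.

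The main technical obstacle is to reconcile the noiseless single-transmitter nature of the index coding problem with the noisy, multi-transmitter TIM-CoMP channel. I would resolve this by working in the DoF (high-SNR) limit and invoking the fact that the capacity region of a fixed-topology index coding problem is SNR-independent, so that both the quantization overhead above and the bounded Gaussian noise penalty become negligible after normalisation by $\log P$. This legitimises treating the index coding capacity region as a clean outer bound on $d_\sym$ for the TIM-CoMP problem, and completes the reduction.
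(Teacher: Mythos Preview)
Your genie step and the observation that $W_{\Sc_k}$ renders every $X_j$ with $j\in\Tc_k^c$ a deterministic function of the side information are exactly right and coincide with the paper's key insight. The gap is in your simulation step. As written, the index-coding transmitter broadcasts a quantization of the full block $(X_1^n,\ldots,X_K^n)$; describing $K$ power-$P$ signals costs about $K\log P$ bits per TIM channel use, not $\log P$, so under the DoF normalisation your argument only shows that $(d_1/K,\ldots,d_K/K)$ lies in the index-coding capacity region, which is off by a factor $K$ and does not yield the claimed bound. Note also that receiver $k$ needs precisely the $X_i$ with $i\in\Tc_k$ to form $Y_k$, and these are \emph{not} determined by $W_{\Sc_k}$; hence the side information does not shrink the broadcast payload in your scheme.

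The missing idea, which the paper supplies, is to collapse the $K$ transmitted signals into a \emph{single} scalar per channel use before invoking index coding. Concretely: (i) since the encoders see only $\Gc$ and not $\Hc$, specialising the nonzero coefficients $h_{kj}$ to receiver-independent constants cannot shrink the capacity region; (ii) one then connects the missing links $j\in\Tc_k^c$ with the same constants, which is harmless because the newly introduced interference $\sum_{j\in\Tc_k^c} h_j X_j$ is computable from $W_{\Sc_k}$ and can be subtracted; (iii) after (i)--(ii) every receiver sees the common scalar $\sum_i h_i X_i + Z$, so the channel is a single MISO pipe of capacity $\approx \log P$, which is replaced by a unit-capacity noiseless link via the network-equivalence theorem. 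Your observation is precisely what makes step~(ii) work; you should use it to cancel added interference from a \emph{common} broadcast signal, rather than to try to rebuild each $Y_k$ from individually broadcast $X_i$'s.
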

\begin{proof}
See Appendix~\ref{proof:Index}.
\end{proof}

The above theorem implies that the outer bounds of the multiple unicast index coding problem in literature are still applicable to our problem, but with the modified side information sets. While the DoF region of TIM problem is outer bounded by the capacity region of the index coding problem $\IC(k | \Tc_k^c)$, our problem with transmitter cooperation is outer bounded by $\IC(k | \cup_{j \in \Tc_k^c} \Rc_j)$. In general, this bound is loose, because the side information might be over-endowed to the receivers. Nevertheless, we obtain in the following corollary that this outer bound is tight to identify the necessary and sufficient condition of the optimality of TDMA.

\begin{corollary}
\label{cor:1-K-optimal}
For the $K$-cell TIM-CoMP problem, the symmetric DoF value $d_\sym = \frac{1}{K}$ is optimal, if and only if the demand graph of the index coding problem $\IC(k | \bigcup_{j \in \Tc_k^c} \Rc_j)$ is acyclic, and more specifically, if and only if $\Gc_{AFG}$ is an empty graph.
\end{corollary}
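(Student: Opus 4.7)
The plan is to prove the three-way equivalence among (i) optimality of $d_\sym = 1/K$, (ii) acyclicity of the demand graph of $\IC(k | \Sc_k)$, and (iii) emptiness of $\Gc_{AFG}$, where $\Sc_k \defeq \bigcup_{j\in\Tc_k^c}\Rc_j$.

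First I would handle the combinatorial equivalence (ii) $\Leftrightarrow$ (iii). Since $m\in\Rc_j$ iff $j\in\Tc_m$, the side-information condition simplifies to
\[
m\in\Sc_k \ \Longleftrightarrow\ \Tc_m\not\subseteq\Tc_k,
\]
so the backward edge $k\to W_m$ in the demand graph exists exactly when $\Tc_m\not\subseteq\Tc_k$. A $2$-cycle $W_i\to i\to W_j\to j\to W_i$ therefore corresponds precisely to both $\Tc_i\not\subseteq\Tc_j$ and $\Tc_j\not\subseteq\Tc_i$, which is Definition~\ref{def:afg}'s edge condition in $\Gc_{AFG}$. Conversely, if $\{\Tc_k\}$ is totally ordered by inclusion (equivalently, $\Gc_{AFG}$ is empty), then any putative cycle $W_{a_1}\to a_1\to W_{a_2}\to\cdots\to a_l\to W_{a_1}$ would force $\Tc_{a_t}\subsetneq\Tc_{a_{t+1}}$ around the cycle, producing the absurdity $\Tc_{a_1}\subsetneq\Tc_{a_1}$. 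Hence (ii) $\Leftrightarrow$ (iii).

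For the DoF equivalence, I would invoke the index-coding outer bound of Theorem~\ref{theorem:Index}: when the demand graph is acyclic, the classical Maximum Acyclic Induced Subgraph (MAIS) bound for index coding gives $\sum_k d_k \le 1$, hence $d_\sym \le 1/K$; TDMA trivially matches this, so (ii) $\Rightarrow$ (i). Conversely, if $\Gc_{AFG}$ contains an edge $i\leftrightarrow j$, pick $t_i\in\Tc_i\setminus\Tc_j$ and $t_j\in\Tc_j\setminus\Tc_i$ (nonempty by \eqref{eq:afgcond}). In one slot, activating only $t_i$ and $t_j$ to transmit $W_i$ and $W_j$ respectively delivers both symbols free of interference, since $t_j\notin\Tc_i$ and $t_i\notin\Tc_j$. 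Scheduling the remaining $K-2$ messages in $K-2$ further TDMA slots yields $d_\sym\ge 1/(K-1)>1/K$, so $\neg$(iii) $\Rightarrow$ $\neg$(i).

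The main technical hurdle is pinning down the equivalence between acyclicity of the paper's bipartite demand graph and acyclicity of the classical side-information digraph (edge $i\to j$ iff $W_j\in\Sc_i$), so that the MAIS sum-rate bound transfers verbatim. Contracting the forward edges $W_k\to k$ provides a bijection between bipartite cycles and pure receiver cycles, after which MAIS is a standard tool; the achievability side is a short two-transmitter avoidance argument already in the spirit of the paper's coloring framework.
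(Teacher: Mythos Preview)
Your proof is correct and, for the combinatorial half, noticeably cleaner than the paper's. The key move is your identity $m\in\Sc_k \Leftrightarrow \Tc_m\not\subseteq\Tc_k$, which immediately recasts ``$\Gc_{AFG}$ is empty'' as ``$\{\Tc_k\}_k$ is a chain under inclusion'' and lets you kill \emph{all} cycles in the demand graph by a single strict-chain argument $\Tc_{a_1}\subsetneq\Tc_{a_2}\subsetneq\cdots\subsetneq\Tc_{a_1}$. The paper instead proves the contrapositive (cyclic demand graph $\Rightarrow$ $\Gc_{AFG}\neq\emptyset$) by a case split on the cycle length: for $2$-cycles it passes through the receive sets $\Rc_{j_1},\Rc_{j_2}$ before arriving at an AFG edge, and for longer minimal cycles it uses the minimality to extract $\Tc_{i_m}\not\subseteq\Tc_{i_{m\pm 1}}$ for every $m$ on the cycle. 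Both routes end at the same place, but yours avoids the case analysis entirely. On the DoF side the two proofs are essentially identical: the paper cites the acyclic index-coding bound (Corollary~7 of \cite{Jafar:2013TIM}) where you invoke MAIS, and your two-transmitter one-slot scheme is exactly the paper's proper-partition argument with $\kappa=K-1$ (Theorem~\ref{theorem:Arb-more}). Your final paragraph about contracting forward edges is not strictly needed, since in the multiple-unicast setting the only forward edges are $W_k\to k$, so every bipartite cycle already has the alternating form you used.
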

\begin{proof}
  See Appendix~\ref{proof:1-K-optimal}.
\end{proof}
\begin{remark}
For the triangular network, the alignment-feasible graph is empty and thus the symmetric DoF value is $\frac{1}{K}$, which coincides with Corollary~\ref{cor:tri-cell}. Note that this triangular network is the minimum graph with empty alignment-feasible graph.
\end{remark}
 
In what follows, an example is presented to illustrate this corollary.

\begin{example}
\normalfont
We consider in Fig.~\ref{fig:Index_all}(a) a four-cell network with transmit sets $\Tc_1=\Tc_2=\{1,2\},\Tc_3=\Tc_4=\{1,2,3,4\}$ and receive sets $\Rc_1=\Rc_2=\{1,2,3,4\},\Rc_3=\Rc_4=\{3,4\}$. By providing Receivers 1 and 2 with $W_{3,4}$, we connect the missing links as shown in Fig.~\ref{fig:Index_all}(b) without reducing the capacity region. Allowing full CSIT, the problem now is equivalent to the index coding problem (as in Fig.~\ref{fig:Index_all}(c)) where messages $W_{1,2,3,4}$ are sent from one transmitter to Receiver $j$ $(j=1,2,3,4)$ who demands $W_j$, and both Receivers 1 and 2 have the side information $W_{3,4}$. This index coding problem has no cycles in its demand graph as shown in Fig.~\ref{fig:Index_all}(d), such that the optimal symmetric DoF value is $\frac{1}{K}$. It is also readily verified that the alignment-feasible graph is also empty, because $\Tc_i \subseteq \Tc_j$ or $\Tc_j \subseteq \Tc_i$ for any $i \ne j \in \{1,2,3,4\}$. \hfill$\square$ 

\begin{figure}[htb]
\centering
\includegraphics[width=\columnwidth]{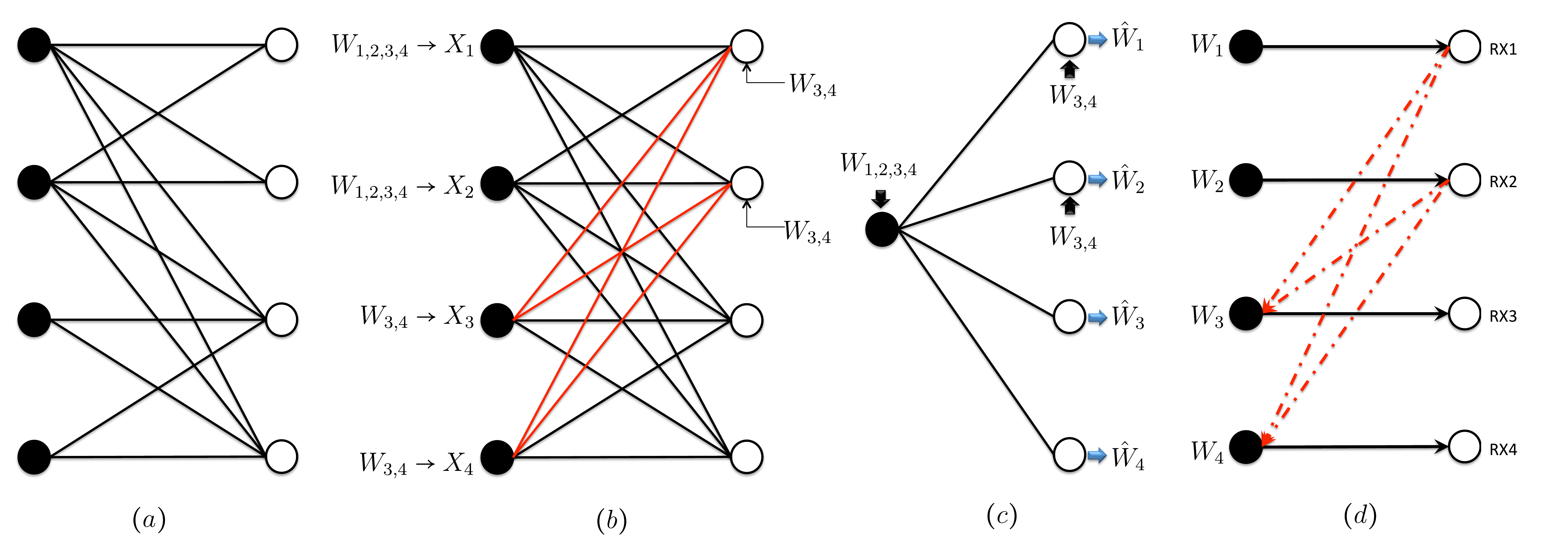}
\caption{ (a) An instance of TIM-CoMP problem $(K=4)$. By providing the side information $W_{3,4}$ to Receivers 1 and 2, the network becomes fully connected as shown in (b). Thus, the DoF region is outer bounded by the capacity region of an index coding problem with side information as in (c), whose corresponding directed demand graph is shown in (d). There exist no directed cycles in this directed graph in (d).}
\label{fig:Index_all}
\end{figure}
\end{example}

\section{Discussion}
The topological interference management problem with transmitter cooperation (i.e., TIM-CoMP problem), where a subset of messages is routed to transmitters before transmission and the transmitters only know the network topology, has been considered in this paper. 
This is the first time in our knowledge that this problem is studied and a number of preliminary results have been obtained which lay down groundwork and illustrate the potential. 
Particularly, interference management techniques under this TIM-CoMP setting are unveiled from graph theoretic and interference alignment perspectives, which exploit the benefits of both topological knowledge and transmitter cooperation. The achievable symmetric DoF are identified for a class of network topologies. The outer bounds build upon the concepts of generator sequence and compound settings to show the optimality of symmetric DoF for some special networks. The relation to index coding problem has been also investigated, with which the necessary and sufficient condition of the optimality of TDMA is also identified. 

Yet, fundamental limits of transmitter cooperation in TIM-CoMP settings are not fully understood. The optimality was only proven for some special topologies, while it demands more innovative achievability and outer bounding techniques to identify the optimality for a wider class of networks. 
 As a low-complexity achievable scheme, orthogonal access has been shown optimal for some special cases, and its optimality for general topologies is an interesting open problem. The complexity of fractional selective graph coloring prohibits the enumeration of all non-isomorphic topologies even for four-cell case, such that a potential indirect solution might be identifying the sufficient condition when orthogonal access is optimal on the network topology.
 Although there is no evidence so far showing subspace alignment outperforms one-to-one alignment, whether these two alignment strategies are equivalent or not is also an interesting problem.

Additionally, the benefit of full message sharing, where the desired message of one receiver is present at all transmitters even if some of them are disconnected to this receiver, is still unclear, although none of the findings shows gains in this regard. Further, the TIM-CoMP problems are similar to TIM problems in X networks, in which each receiver demands a message from the transmitters to which it is connected such that every message at any transmitter is useful. Nevertheless, in TIM-CoMP settings, to achieve a certain symmetric DoF, some messages are never transmitted even if they are present at the transmitters. A natural question then arises as to how much message sharing is really necessary. This question is also of practical interest, as the buffering and offloading of users' data at base stations could be significantly reduced. 

Last but not the least, the current relation to index coding problems is a bit loose in general, as the side information is overly endowed at receivers. 
A tighter relation between TIM-CoMP and index coding problems is still unclear, interesting and challenging. In addition, for TIM-CoMP problem, the necessity of nonlinear schemes is still an open problem due to the lack of tight outer bounds.

\appendix
\label{sec:proofs}

\subsection{Definitions in Graph Theory}
Throughput this paper, the graphs are simple and finite. Unless otherwise specified, the graphs are undirected. A few basic definitions pertaining to graph theory \cite{GT,Graph,Fractional2011} are now recalled.

The {\em distance} between two vertices in a graph is the minimum number of edges connecting them.
A {\em line graph} of $\Gc=(\Vc,\Ec)$ is another graph, denoted by $\Gc_e=(\Vc_e,\Ec_e)$, that represents the adjacencies of the edges in $\Gc$. In particular, each vertex $v_{ei} \in \Vc_e$ corresponds to the edge $e_i \in \Ec$ in $\Gc$, and two vertices $v_{ei}, v_{ej} \in \Ec_e$ are adjacent if and only if two edges $e_i,e_j \in \Ec$ are shared with a common endpoint in $\Gc$. A subgraph of $\Gc=(\Vc,\Ec)$ containing a subset of vertices $\Sc$ $(\Sc \subseteq \Vc)$ is said to be an {\em induced subgraph}, denoted by $\Gc[\Sc]$, if for any pair of vertices $u$ and $v$ in $\Sc$, $uv$ is an edge of $\Gc[\Sc]$ if and only if $uv$ is an edge of $\Gc$.

A $(K,d)$-{\em regular bipartite graph} $\Gc=(\Uc,\Vc,\Ec)$ is such that $\abs{\Uc}=\abs{\Vc}=K$ and $\abs{\Tc_k}=\abs{\Rc_k}=d,~\forall~k$.
A {\em Hamiltonian cycle} for a graph is a cycle that visits all vertices exactly once.
A {\em matching} of the graph is a set of edges with no common vertices between any two edges. A {\em perfect matching} is a matching contains all vertices.
The complete graph is a graph that any two vertices are joint with an edge.

A graph $\Gc$ is said to be {\em $n:m$-colorable} if each vertex in $\Gc$ can be assigned a set of of $m$ colors in which the colors are drawn from a palette of $n$ colors, such that any adjacent vertices have no colors in common. When $m=1$, $n:m$-colorable is also called $n$-colorable.
Denote by $\chi_m(\Gc)$ the minimum required number of $n$, such that the {\em fractional chromatic number} $\chi_f(\Gc)$ can be
defined as
\begin{align}
\chi_f(\Gc) = \lim_{m \to \infty} \frac{\chi_m(\Gc)}{m} = \inf_m \frac{\chi_m(\Gc)}{m}.
\end{align}

Given a graph $\Gc=(\Vc,\Ec)$ with a partition of vertices $\mathbb{V}=\{\Vc_1, \Vc_2, \cdots,\Vc_p\}$ where $\Vc_i \cap \Vc_j =\emptyset$ and $\cup_{i=1}^p \Vc_i = \Vc$, a selection of vertices $\Vc' \subseteq \Vc$ is such that $\abs{\Vc' \cap \Vc_i}=1$, $\forall~i \in \{1,2,\dots,p\}$. For an integer $k \ge 1$, $\Gc$ is {\em selectively $k$-colorable} if the induced subgraph by $\Vc'$, i.e., $\Gc[\Vc']$, is $k$-colorable.

As a {\em reference graph}, the regular bipartite graph $\Gc_r=(\Uc_r,\Vc_r,\Ec_r)$ with topology matrix $\Bm_r$ is characterized by
  \begin{align}
    [\Bm_r]_{ji} = \left\{ \Pmatrix{1, & 0 \le i-j \le d-1 \\ 0, & \text{otherwise}} \right.,
  \end{align}
  which implies $\Tc_j = \{j,j+1,\dots,j+d-1\}$.
Two bipartite graphs are said to be {\em similar}, denoted as $\Gc \simeq \Gc_r$, if their topology matrices $\Bm$ and $\Bm_r$ satisfy
$
\Bm = \Pm^T \Bm_r \Qm
$,
where $\Pm$ and $\Qm$ are permutation matrices. Accordingly, it implies that $\Uc$ and $\Vc$ in $\Gc$ can be obtained by reordering the vertices of $\Uc_r$ and $\Vc_r$ in $\Gc_r$ with $\Uc=\Uc_r$ and $\Vc=\Vc_r$.

A hypergraph $\Hc_\Gc=(\Sc,\Xc)$ associated with $\Gc$ is composed of the vertex set $\Sc \subseteq \Kc$ being a finite set, and the hyperedge set $\Xc$ being a family of subsets of $\Sc$, where $\Xc_i \defeq \{x_{i_1},x_{i_2},\dots,x_{i_{\abs{\Xc_i}}}\} \subseteq \Sc$ is called a hyperedge, i.e., $\Xc_i \in \Xc$.
A covering of a hypergraph $\Hc_\Gc$ is a collection of hyperedges $\Xc_1,\Xc_2,\dots,\Xc_\tau$ such that $\Sc \subseteq \cup_{j=1}^\tau \Xc_j$, and the least number of $\tau$ is called {\em hypergraph covering number}, denoted by $\tau(\Hc_\Gc)$. A $t$-fold covering is a multiset $\{\Xc_1,\dots,\Xc_\tau\}$ such that each $s \in \Sc$ is in at least $t$ of the $\Xc_i$'s, and correspondingly $\tau_t(\Hc_\Gc)$ is referred to as the $t$-fold covering number. Accordingly, the {\em hypergraph fractional covering number} is defined to be
\begin{align}
\tau_f(\Hc_\Gc) \defeq \lim_{t \to \infty} \frac{\tau_t(\Hc_\Gc)}{t} = \inf_{t} \frac{\tau_t(\Hc_\Gc)}{t}.
\end{align}

\subsection{Proof of Theorem~\ref{theorem:IAvoid}}
\label{proof:IAvoid}
To prove this achievability, we first build a connection between interference avoidance of TIM-CoMP problems and link scheduling problems, and then solve the link scheduling problems through graph coloring.

With transmitter cooperation enabled, it requires to schedule links rather than transmitters to avoid mutual interference.
Without transmitter cooperation, the message $W_j$ can only be sent from Transmitter $j$ for all $j$, whose activation will cause interferences to Receiver $k$ $(k \in \Rc_j)$, and consequently inactivate Transmitter $k$ $(k\in \Rc_j)$, because $W_k$ cannot be delivered from Transmitter $k$ to Receiver $k$ free of interference. The interference avoidance in this case is a matter of activating or inactivating transmitters.
In contrast, with transmitter cooperation (i.e., message sharing), the message $W_j$ can be sent from any Transmitter $i$ with $i \in \Tc_j$, and thus, it is not sufficient to schedule transmitters only. In fact, the link - rather than the transmitter - scheduling is of interest, because both the scheduling of the transmitters and the receivers does matter.\footnote{In fact, transmitter scheduling can also be regarded as link scheduling, yet only the direct links (i.e., the links from Transmitter $j$ to Receiver $j$) are candidates of link scheduling.}
For instance, if the link $e_{ij}$ (i.e., from Transmitter $i$ to Receiver $j$) is scheduled, the links adjacent to $e_{ij}$ (i.e., $e_{ik_1}$ and $e_{k_2j}$ with $k_1 \in \Rc_i \backslash j$ and $k_2 \in \Tc_j \backslash i$) as well as the links adjacent to $e_{ik_1}$ and $e_{k_2j}$ should not be scheduled, because activating Transmitter $k_2$ will interfere Receiver $j$ and Receiver $k_1$ will overhear interferences from Transmitter $i$, such that any delivery from Transmitter $k_2$ or to Receiver $k_1$ causes mutual interferences.

Such a link scheduling problem is usually solved through graph edge-coloring, while the nature of our problem calls for a more specific graph coloring solution. Let us represent the cellular network as a bipartite graph $\Gc=(\Uc,\Vc,\Ec)$, where the sets $\Uc$ and $\Vc$ denote transmitters and receivers, respectively.
The links are assigned with distinct colors if they should be scheduled at different time slots. Suppose the edge $e_{ij} \in \Ec$ receives a color. Analogously, the edges $e_{ik_1}$ and $e_{k_2j}$ with $k_1 \in \Rc_i \backslash j$ and $k_2 \in \Tc_j \backslash i$ should not be assigned the same color. Moreover, the edges adjacent to $e_{ik_1}$ and $e_{k_2j}$ should not receive the same color either. In a word, the edges within two-hop should be assigned with distinct colors. In addition, as we aim at symmetric DoF, the total number of scheduled times of the links connecting a common receiver is of interest. Thus, the number of colors received by one message should be counted by the cluster of edges that have a common vertex in $\Vc$.

As such, our problem calls for a distance-2 fractional clustered-graph edge-coloring scheme, which consists of the following ingredients:
\begin{itemize}
  \item {\em Distance-2 fractional coloring}: Both the adjacent links and the adjacency of the adjacent links (resp.~edges less than two hops) should be scheduled in difference time slots (resp.~assigned with different colors).
  \item {\em Clustered-graph coloring}: Only the total number of messages delivered via links with the common receiver (resp.~colors assigned to the edges with the same vertex) matters. Thus, the number of assigned colors should be counted by the clusters of edges.
\end{itemize}

Further, we translate the above edge-coloring of network topology $\Gc$ into vertex-coloring of its line graph $\Gc_{e}$. Accordingly, we group the vertices in $\Gc_{e}$ for which the corresponding edges in $\Gc$ have a vertex $v_j \in \Vc$ in common as a cluster, such that the number of colors is counted by clusters in $\Gc_e$. The above two-hop condition is therefore translated to a distance-2 constraint, where two vertices in $\Gc_e$ with distance less than 2 should receive different colors, and equivalently two adjacent vertices in the square of its line graph, i.e., $\Gc_e^2$, should be assigned distinct colors. Thus, the above link scheduling problem is transferable to a distance-2 selective vertex coloring problem on its line graph $\Gc_e$, and thus to a selective vertex coloring problem over $\Gc_e^2$, in which the vertices are clustered into $\mathbb{V}_e = \{\Vc_1,\dots,\Vc_K\}$ with $\Vc_k = \{e_{jk},~j \in \Tc_k\}$. Specifically, a proper selective coloring of $\Gc_e^2$ over $\mathbb{V}_e$ is a proper color assignment such that each cluster $\Vc_i$ receives $m$ colors out of in total $n$ colors and any two adjacent vertices in $\Gc_e^2$ receive distinct colors. As such, $\Gc_e^2$ is selectively $n:m$ colorable over $\mathbb{V}_e$, indicating that the links in each cluster can be scheduled $m$ times within overall $n$ time slots without causing mutual interference. Consequently, according to Definition \ref{def:dis-color}, the achievable symmetric DoF can be given by
\begin{align}
d_{\sym} = \sup_{m} \frac{m}{s\chi_m(\Gc_e^2,\mathbb{V}_e)} = \frac{1}{s\chi_f(\Gc_{e}^2,\mathbb{V}_e)}
\end{align}
where $s\chi_f$ is the fractional selective chromatic number as in Definition \ref{def:dis-color}.

\subsection{Proof of Theorem~\ref{theorem:Gen}}
\label{proof:Gen}
According to the definition of symmetric DoF, the outer bound of symmetric DoF obtained for any subset of receivers should serve as the outer bound in general. In other words, the general outer bound is the minimum value of all possible outer bounds for any subset of receivers.

Let us take a subset of receivers $\Sc \subseteq \Kc$ with received signals $Y_\Sc$ into account. For those receivers who are not considered, we switch off their desired messages from the transmitted signal, i.e., the constituent messages in transmitted signal $X_i^n$ is now comprised of message $W_j$ where $j \in \Rc_i \backslash \Sc^c$.
Define $\tilde{\Xm}^\T \defeq \big[ h_1 X_1 \ \dots \ h_K X_K \big]$, where $h_i$ $(i \in \Kc)$ is independent and identically distributed as the nonzero $h_{ji}$, and a set of virtual signals in the compact form
\begin{align}
\tilde{\Ym}_{\Ic} &\defeq \Bm_{\Ic} \tilde{\Xm}  + \tilde{\Zm}_{\Ic}\\
\bar{\Ym}_{\Ic} &\defeq \Bm_{\Ic} \Id^{\pm} \tilde{\Xm}  + \tilde{\Zm}_{\Ic}
\end{align}
for a set of receivers in $\Ic$, where $\Bm_{\Ic}$ is the submatrix of $\Bm$ with the rows out of $\Ic$ removed, $\Id^{\pm}$ is the same as the identity matrix up to the sign of elements, and $\tilde{\Ym}_{\Ic}$, $\bar{\Ym}_{\Ic}$, $\tilde{\Zm}_{\Ic}$ are vectors compacted by $\tilde{Y}_{\Ic}$, $\bar{Y}_{\Ic}$, and $\tilde{Z}_{\Ic}$, respectively. Note that $\tilde{Y}_{\Ic}$ and $\bar{Y}_{\Ic}$ are statistically equivalent to $Y_{\Ic}$, because the distribution of channel gain is symmetric around zero.
We assume there exists a generator sequence $\{\Ic_0, \Ic_1, \dots, \Ic_S\}$ with $\cup_{s=0}^S \Ic_s = \Sc$ and $\Ic_i \cap \Ic_j = \emptyset$ $\forall~i\neq j$, such that
\begin{align}
\Bm_{\Ic_s} \subseteq^{\pm} {rowspan} \left\{ \Bm_{\Ic_0}, \Id_{\Ac_{s}} \right\}, \quad \forall~s=1,\dots,S.
\end{align}
This implies that there exist $\Cm_s \in \CC^{\abs{\Ic_s} \times \abs{\Ic_0}}$ and $\Dm_s \in \CC^{\abs{\Ic_s} \times \abs{\Ac_s}}$, such that
\begin{align}
\Bm_{\Ic_s} = (\Cm_s \Bm_{\Ic_0}  + \Dm_s \Id_{\Ac_s}) \Id^{\pm} .
\end{align}
Multiplying $\Id^{\pm} \tilde{\Xm}$ at both sides yields
\begin{align}
\Bm_{\Ic_s} \Id^{\pm} \tilde{\Xm} &=  \Cm_s \Bm_{\Ic_0} \tilde{\Xm} +  \Dm_s \Id_{\Ac_s} \tilde{\Xm} \\
\Rightarrow \bar{\Ym}_{\Ic_s} &= \Cm_s \tilde{\Ym}_{\Ic_0} +  \Dm_s \Id_{\Ac_s} \tilde{\Xm} + \tilde{\Zm}_{\Ic_s} - \Cm_s \tilde{\Zm}_{\Ic_0} \\
&=  \Cm_s \tilde{\Ym}_{\Ic_0} +  \Dm_s  \tilde{\Xm}_{\Ac_s} + \tilde{\Zm}_{\Ic_s} - \Cm_s \tilde{\Zm}_{\Ic_0}\\
&= \Cm_s \tilde{\Ym}_{\Ic_0} +  \Dm_s  \tilde{\Xm}_{\Ac_s} - \bar{\Zm}_s \label{eq:gen0}
\end{align}
with $\bar{\Zm}_s \defeq \Cm_s \Zm_{\Ic_0} - \Zm_{\Ic_s} $ being the entropy-bounded noise term \cite{Avestimehr:2013TIM}.
Thus, according to the mapping $\mathbf f_{\idx}: \Bc \mapsto \{0,1\}^K$ and the definition of $\Ac_s$, we have
\begin{align}
H(W_{\Ic_s} | \tilde{\Ym}_{\Ic_0}^n, \cup_{r=0}^{s-1} W_{\Ic_r}, \Hc^n, \Gc ) &= H(W_{\Ic_s} | \tilde{\Ym}_{\Ic_0}^n, \cup_{r=0}^{s-1} W_{\Ic_r}, X_{\Ac_s}, \Hc^n, \Gc ) \label{eq:gen1} \\
&= H(W_{\Ic_s} | \tilde{\Ym}_{\Ic_0}^n, \bar{\Ym}_{\Ic_s}^n+\bar{\Zm}_s^n, \cup_{r=0}^{s-1} W_{\Ic_r}, X_{\Ac_s}, \Hc^n, \Gc ) \label{eq:gen2}\\
&\le H(W_{\Ic_s} | \bar{\Ym}_{\Ic_s}^n + \bar{\Zm}_s^n, \Hc^n, \Gc ) \label{eq:gen3}\\
&= H(W_{\Ic_s} | \tilde{\Ym}_{\Ic_s}^n + \bar{\Zm}_s^n, \Hc^n, \Gc ) \label{eq:gen4} \\
&\le n \epsilon_n + n \cdot O(1)
\end{align}
where \eqref{eq:gen1} is due to the fact that $X_i^n$ is encoded only from $W_{\Rc_i \backslash \Sc^c}$, \eqref{eq:gen2} comes from \eqref{eq:gen0} where $\tilde{\Xm}_{\Ac_s}$ can be constructed from $X_{\Ac_s}$, \eqref{eq:gen3} is because removing conditioning does not reduce entropy, \eqref{eq:gen4} is due to the argument that $\tilde{Y}$ and $\bar{Y}$ are statistically equivalent, and the last inequality is obtained by following the fact that $H(W|Y^n+\bar{Z}^n) \le n \epsilon_n + n \cdot O(1)$, if $H(W|Y^n+Z^n) \le n \epsilon_n$ \cite{Avestimehr:2013TIM}, since $\bar{\Zm}_s$ is bounded noise term.
Further, we have
\begin{align}
n\sum_{i \in \Sc} R_i &= H(W_{\Sc} | \Hc^n, \Gc)\\
&= I(W_{\Sc}; \tilde{\Ym}_{\Ic_0}^n | \Hc^n, \Gc) + H(W_{\Sc} | \tilde{\Ym}_{\Ic_0}^n, \Hc^n, \Gc)\\
&= I(W_{\Sc}; \tilde{\Ym}_{\Ic_0}^n | \Hc^n, \Gc) + H(W_{\Ic_0} | \tilde{\Ym}_{\Ic_0}^n, \Hc^n, \Gc) + H(W_{\Sc \backslash \Ic_0} | \tilde{\Ym}_{\Ic_0}^n, W_{\Ic_0}, \Hc^n, \Gc)  \\
&\le n |\Ic_0| \log P + n \cdot O(1) +n \epsilon_n + \sum_{s=1}^S H(W_{\Ic_s} | \tilde{\Ym}_{\Ic_0}^n, \cup_{r=0}^{s-1} W_{\Ic_r}, \Hc^n, \Gc) \\
&\le n |\Ic_0| \log P + n \cdot O(1) + n \epsilon_n.
\end{align}
By the definition of symmetric DoF, we have
\begin{align}
d_\sym \le  \lim_{P \to \infty} \frac{R_\sym}{\log P} = \frac{\abs{\Ic_0}}{\abs{\Sc}}.
\end{align}
Among all possible subsets of $\Sc$ and initial generator $\Ic_0$, the symmetric DoF should be outer-bounded by the minimum of them. Thus, we have
\begin{align}
d_\sym \le \min_{\Sc \subseteq \Kc} \min_{\Ic_0 \subseteq \Jc(\Sc)} \frac{\abs{\Ic_0}}{\abs{\Sc}}.
\end{align}

\subsection{Proof of Corollary~\ref{cor:3-cell}}
\label{proof:3-cell}
Enumerating all the possible topologies of three-cell networks, we verify the optimality of symmetric DoF by comparing the achievability in Theorem \ref{theorem:IAvoid} and the outer bound in Theorem \ref{theorem:Gen}. It is readily verified that all but two topologies have enhanced symmetric DoF, compared to the case without transmitter cooperation \cite{graph_scheduling,Jafar:2013TIM,Avestimehr:2013TIM}. As shown in Fig.~\ref{fig:3-cell}, message sharing improves the symmetric DoF from $\frac{1}{2}$ to $\frac{2}{3}$ for the topology $(i)$ and from $\frac{1}{3}$ to $\frac{1}{2}$ for the topology $(m)$.
\begin{figure}[htb]
\centering
\includegraphics[width=1\columnwidth]{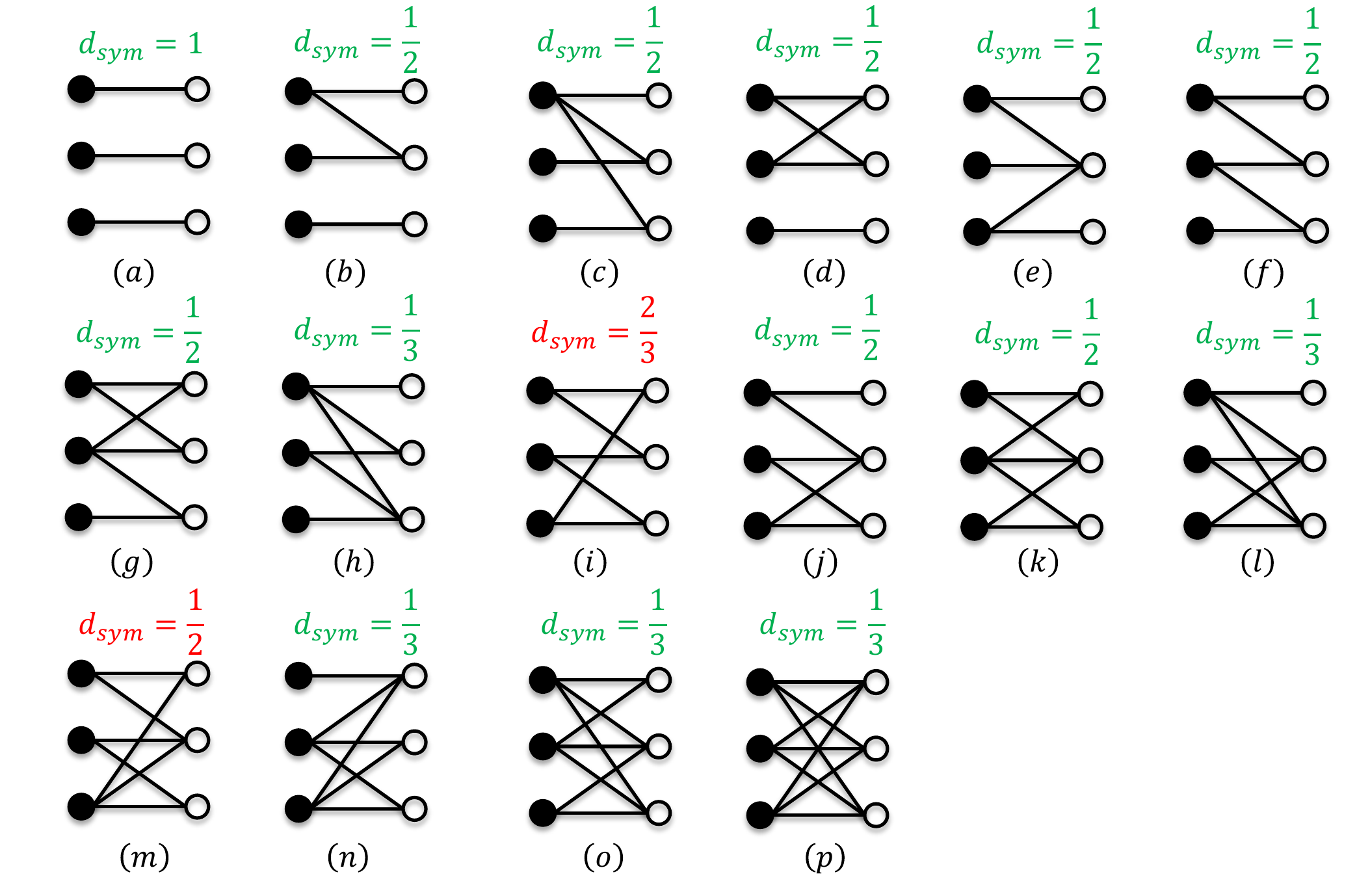}
\caption{The three-cell TIM-CoMP problem, where all non-isomorphic topologies are enumerated. The symmetric DoF improvement over the noncooperation case is due to topologies $(i)$ and $(m)$.}
\label{fig:3-cell}
\end{figure}

For the achievability, two graph coloring realizations are illustrated in Fig. \ref{fig:3-cell-exs} concerning the topologies of $(i)$ and $(m)$. Specifically, every cluster receives two out of three colors in total in $(i)$, and one out of two colors in $(m)$, where the conditions of distance-2 fractional selective graph coloring are satisfied, yielding achievable symmetric DoF $d_\sym = \frac{2}{3}$ and $d_\sym = \frac{1}{2}$, respectively. For other topologies, the achievability can be similarly obtained.
\begin{figure}[htb]
\centering
\includegraphics[width=0.6\columnwidth]{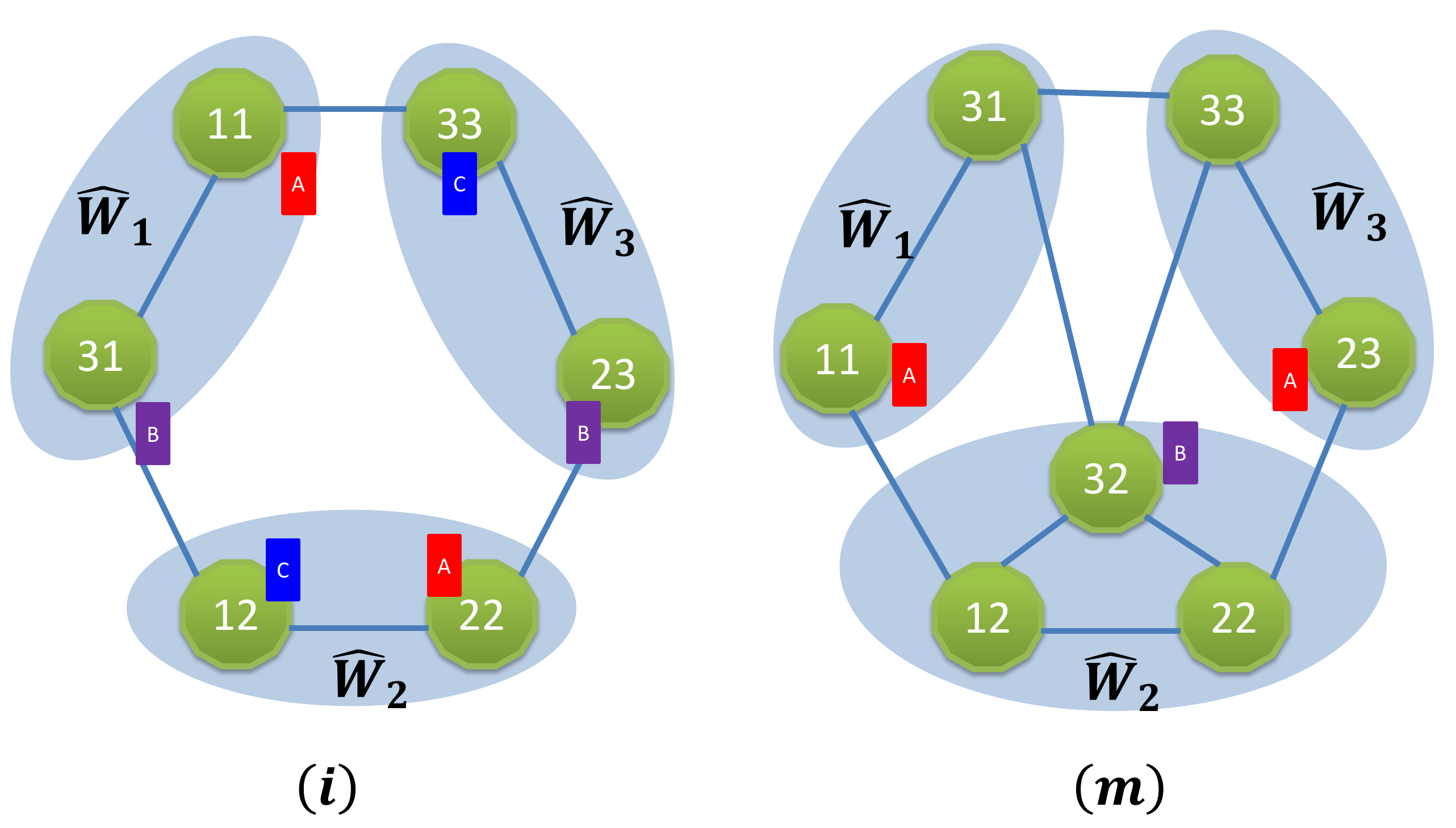}
\caption{Fractional selective graph coloring of the topologies $(i)$ and $(m)$. It requires three colors to ensure every cluster receive two in $(i)$, and two colors are sufficient to offer every cluster one color in $(m)$.}
\label{fig:3-cell-exs}
\end{figure}

Regarding the converse, we apply the outer bound via generator sequence here. Again, we take those two topologies for example. For topology-$(i)$, we have a generator sequence $\{\{1,2\},\{3\}\}$ with $\Ic_0=\{1,2\}$ and $\Ic_1=\{3\}$. By generating the virtual signals $\tilde{Y}_1^n = h_1^n X_1^n + h_3^n X_3^n + \tilde{Z}_1^n$ and $\tilde{Y}_2^n = h_1^n X_1^n + h_2^n X_2^n + \tilde{Z}_2^n$, which are statistically equivalent to $Y_1^n$ and $Y_2^n$ respectively, we obtain $\tilde{Y}_3^n = \tilde{Y}_1^n - \tilde{Y}_2^n =  h_3^n X_3^n -h_2^n X_2^n + \tilde{Z}_1^n-\tilde{Z}_2^n $ that is statistically equivalent to $Y_3^n$ with a bounded noise difference \cite{Avestimehr:2013TIM}. Thus, according to Theorem \ref{theorem:Gen}, we have $d_\sym \le \frac{\abs{\Ic_0}}{\abs{\Sc}} = \frac{2}{3}$. Similarly for topology-$(m)$, we have a generator sequence $\{\{2\},\{1\}\}$ with $\Ic_0=\{2\}$ and $\Ic_1 = \{1\}$. Note that we ignore the received signal at Receiver 3, and therefore eliminate the message $W_3$ from the message sets of the respective transmitted signals. Thus, the message sets of Transmitters 1, 2, and 3 become
$\{W_1,W_2\}, \{W_2\}$, and $\{W_1,W_2\}$, 
respectively.
Following the generator sequence approach, we initiate the generator sequence by a virtual signal $\tilde{Y}_2^n = h_1^n X_1^n + h_2^n X_2^n + h_3^n X_3^n + \tilde{Z}_2^n$, and successively generate $\tilde{Y}_1^n = \tilde{Y}_2^n - h_2^n X_2^n$, where $X_2$ can be encoded from the message $W_2$.
Hence, the symmetric DoF outer bound is $d_\sym \le \frac{\abs{\Ic_0}}{\abs{\Sc}} = \frac{1}{2}$.

Being aware of the coincidence of the achievability and the outer bounds, we conclude that the interference avoidance achieves the optimal symmetric DoF. The optimality verification of other topologies can be similarly done. 

\subsection{Proof of Corollary~\ref{cor:tri-cell}}
\label{proof:tri-cell}

For the converse proof, since the lower and upper triangular matrices are similar, it suffices to consider the lower triangular matrix $\Bm$ without loss of generality, where $\Tc_j = \{1,\dots,j\}$ for all $j \in \Kc$. Thus, the message sets to $X_j$ with transmitter cooperation are comprised of $W_{\{j,\cdots,K\}}$. It is readily verified that $\{\{K\},\{K-1\},\dots,\{1\}\}$ forms a generator sequence with $\Ic_0=\{K\}$ and $\Sc=\Kc$. Thus, we have the outer bound $d_\sym \le \frac{\abs{\Ic_0}}{\abs{\Sc}} = \frac{1}{K}$, which is achievable by time division. This completes the proof.

\subsection{Proof of Theorem~\ref{theorem:Arb}}
\label{proof:Arb}
\subsubsection{$d_\sym = \frac{2}{K}$ is achievable}

First, we consider the case when there exists a Hamiltonian cycle, say without loss of generality $1 \leftrightarrow 2 \leftrightarrow \dots \leftrightarrow K \leftrightarrow 1$, in the alignment-feasible graph $(\Gc_{AFG})$.
According to the definition of $\Gc_{AFG}$,  it follows that, there exist $z_{j}^1$ and $z_{j+1}^2$, such that
\begin{align}
z_{j}^1 \in \Tc_{j} \cap \Tc_{{j+1}}^c, \quad \text{and} \quad z_{j+1}^2 \in \Tc_{{j+1}} \cap \Tc_{j}^c
\end{align}
with $z_{j}^1, z_{j}^2 \in \Tc_{j}$ and $z_{j-1}^1, z_{j+1}^2 \notin \Tc_{j}$, for $j \in \Kc$.
Thus, we send along the direction $\Vm_j \in \CC^{K \times 1}$ two signals $X_{z_j^1}(W_{j}^1)$ and $X_{z_{j+1}^2}(W_{{j+1}}^2)$ from Transmitter $z_j^1$ and Transmitter $z_{j+1}^2$, respectively, for $j \in \Kc$. 

On one hand, if the channel coefficients are constant during the communication, the received signals at Receiver $j$ during $K$ time slots can be given as a compact form by
{\small
\begin{align}
\Ym_{j} &= \sum_{s=1}^K \Vm_s \left(h_{j,z_s^1} X_{z_s^1} (W_{s}^1) \mathbf{1}(z_s^1 \in \Tc_{j}) + h_{j,z_{s+1}^2} X_{z_{s+1}^2} (W_{{s+1}}^2)  \mathbf{1}(z_{s+1}^2 \in \Tc_{j}) \right) \\
&= \underbrace{\Vm_j h_{j,z_j^1} X_{z_j^1} (W_{j}^1) + \Vm_{j-1} h_{j,z_{j}^2} X_{z_{j}^2} (W_{j}^2)}_{\DS} \nn \\ & + \underbrace{ \sum_{s=1,s \neq j-1,j}^K \Vm_s \left(h_{j,z_s^1} X_{z_s^1} (W_{s}^1) \mathbf{1}(z_s^1 \in \Tc_{j}) + h_{j,z_{s+1}^2} X_{z_{s+1}^2} (W_{{s+1}}^2)  \mathbf{1}(z_{s+1}^2 \in \Tc_{j}) \right)}_{\AI}
\end{align}
}
where $ \mathbf{1}(\cdot)$ is the indicator function with value 1 if the parameter is true and 0 otherwise. It is readily verified that two symbols $W_{j}^1$ and $W_{j}^2$ can be retrieved almost surely, yielding symmetric DoF of $\frac{2}{K}$. On the other hand, if the channel is time-varying, we can simply choose $\Vm_j$ as the $j$-th column of $\Id_K$, and the same symmetric DoF can be achieved. In this case, interference alignment boils down to interference avoidance.

Second, we consider a perfect matching in $\Gc_{AFG}$ where $K$ is even, say $1 \leftrightarrow 2, \dots, {K-1} \leftrightarrow K$. Similarly, there exist $z_{j}$ and $z_{j+1}$, such that
\begin{align}
z_{j} \in \Tc_{j} \cap \Tc_{{j+1}}^c, \quad \text{and} \quad z_{j+1} \in \Tc_{{j+1}} \cap \Tc_{j}^c, \quad j=1,3,\dots,K-1
\end{align}
with $z_{j} \in \Tc_{j}$ and $ z_{j+1} \notin \Tc_{j}$.
Thus, during in total $\frac{K}{2}$ time slots, we send two signals $X_{z_j}(W_{j})$ and $X_{z_{j+1}}(W_{{j+1}})$ from Transmitter $z_j$ and Transmitter $z_{j+1}$, respectively, with the same precoder $\Vm_j \in \CC^{\frac{K}{2} \times 1}$. The received signals at Receiver $j$ during $\frac{K}{2}$ time slots can be similarly written as
{\small
\begin{align}
\Ym_{j} &= \sum_{s=1}^{\frac{K}{2}} \Vm_s \left(h_{j,z_s} X_{z_s} (W_{s}) \mathbf{1}(z_s \in \Tc_{j}) + h_{j,z_{s+1}} X_{z_{s+1}} (W_{{s+1}})  \mathbf{1}(z_{s+1} \in \Tc_{j}) \right) \\
&= \underbrace{\Vm_j h_{j,z_j} X_{z_j} (W_{j}) }_{\DS}  \nn \\ &+ \underbrace{ \sum_{s=1,s \neq j}^{\frac{K}{2}} \Vm_s \left(h_{j,z_s} X_{z_s} (W_{s}) \mathbf{1}(z_s \in \Tc_{j}) + h_{j,z_{s+1}} X_{z_{s+1}} (W_{{s+1}})  \mathbf{1}(z_{s+1} \in \Tc_{j}) \right)}_{\AI}
\end{align}
}
with which the message $W_{j}$ is recovered, yielding $\frac{2}{K}$ DoF per user. This completes the proof.

\subsubsection{$d_\sym = \frac{2}{K-q}$ is achievable}
The achievability is similar to the previous case, but the duration of transmission is shortened. Without loss of generality, we assume the Hamiltonian cycle $1 \leftrightarrow 2  \leftrightarrow \dots \leftrightarrow K \leftrightarrow 1$ for the brevity of presentation. According to the definition of alignment-feasible graph, there exist $z_{s}^1$ and $z_{s+1}^2$, such that
\begin{align}
z_{s}^1 \in \Tc_{s} \cap \Tc_{s + 1}^c, \quad &\text{and} \quad z_{s+1}^2 \in \Tc_{s+1} \cap \Tc_{s}^c
\end{align}
with $z_{s}^1 \in \Tc_{s}$ and $z_{s+1}^2 \notin \Tc_{s}$, for $s \in \Kc$.
Assuming
\begin{align}
k_0 \in \arg \min_{k} \sum_j \Am_{kj},
\end{align}
we have $ \sum_{j} \Am_{k_0j} = q$ and thus
\begin{align}
\mathbf{f}_{\idx}^{-1}(\Am_{k_0}^\T) = \{j_1,\dots,j_q\} \defeq \Jc_q
\end{align}
where $\mathbf{f}_{\idx}^{-1}: \{0,1\}^K \mapsto \Bc$ is the inverse function of $\mathbf{f}_{\idx}$.

According to the definition of alignment non-conflict matrix,  if $\Am_{k_0j}=1$, then
\begin{align}
 \Tc_{i_j} \bigcap \Tc_{i_{j+1}}^c \bigcap_{k:\Am_{kj}=1} \Tc_{k_0}^c \neq \emptyset, \quad \text{and} \quad \Tc_{i_{j+1}} \bigcap \Tc_{i_{j}}^c \bigcap_{k:\Am_{kj}=1} \Tc_{k_0}^c \neq \emptyset,
\end{align}
meaning that there is non-conflict to make $W_{i_j}$ and $W_{i_{j+1}}$ aligned with the occupied subspace absent to Receiver $k_0$.
It follows that, there exist $z_{j_t}^1$ and $z_{j_t+1}^2$ $(j_t \in \Jc_q)$, such that
\begin{align}
z_{j_t}^1 \in \Tc_{j_t} \cap \Tc_{j_t + 1}^c \cap \Tc_{k_0}^c , \quad &\text{and} \quad z_{j_t+1}^2 \in \Tc_{j_t+1} \cap \Tc_{j_t}^c \cap \Tc_{k_0}^c
\end{align}
with $z_{j_t}^1,z_{j_t+1}^2 \notin \Tc_{k_0}$.
We send $X_{z_s^1}(W_{s}^1)$ and $X_{z_{s+1}^2}(W_{s+1}^2)$ at Transmitter $z_s^1$ and Transmitter $z_{s+1}^2$, respectively, along with the subspace spanned by $\Vm_s \in \CC^{(K-q) \times 1}$. Given channel coherence time $\tau_c \ge K-q$, the received signal at Receiver $k_0$ can  be written as

\vspace{-15pt}
{\small
\begin{align}
\Ym_{k_0} &= \sum_{s=1}^{K} \Vm_s \left(h_{k_0,z_s^1} X_{z_s^1} (W_{s}^1) \mathbf{1}(z_s^1 \in \Tc_{k_0}) + h_{k_0,z_{s+1}^2} X_{z_{s+1}^2} (W_{s+1}^2)  \mathbf{1}(z_{s+1}^2 \in \Tc_{k_0}) \right) \nn \\
&= \sum_{s=1,s \notin \Jc_q}^{K} \Vm_s \left(h_{k_0,z_s^1} X_{z_s^1} (W_{s}^1) \mathbf{1}(z_s^1 \in \Tc_{k_0}) + h_{k_0,z_{s+1}^2} X_{z_{s+1}^2} (W_{s+1}^2)  \mathbf{1}(z_{s+1}^2 \in \Tc_{k_0}) \right)\\
&= \Vm_{k_0} h_{k_0,z_{k_0}^1} X_{z_{k_0}^1} (W_{k_0}^1) + \Vm_{k_0-1} h_{k_0,z_{k_0}^2} X_{z_{k_0}^2} (W_{k_0}^2) \nn \\ &+ \sum_{\substack{s=1,s \notin \Jc_q,\\ s \neq k_0-1, k_0}}^{K} \Vm_s \left(h_{k_0,z_s^1} X_{z_s^1} (W_{s}^1) \mathbf{1}(z_s^1 \in \Tc_{k_0}) + h_{k_0,z_{s+1}^2} X_{z_{s+1}^2} (W_{s+1}^2)  \mathbf{1}(z_{s+1}^2 \in \Tc_{k_0}) \right)
\end{align}
}
It follows that the desired messages by Receiver $k_0$ can be recovered in a $K-q$ dimensional space with two interference-free subspace and $K-q-2$ dimensional subspace with interferences aligned. According to the definition of $q$, we conclude that the overall $K-q$ dimensional space is sufficient to support other receivers with $\sum_j \Am_{kj} \ge q$. As such, the symmetric DoF $\frac{2}{K-q}$ are achievable.

\subsection{Proof of Theorem~\ref{theorem:Reg}}
\label{proof:Reg}
According to the definition of $(K,d)$-regular networks, we have $\abs{\Tc_j}=d$, $\forall~j \in \Kc$. As we know, when $d=K$, the network is fully connected and therefore the optimal symmetric DoF value is $\frac{1}{K}$ by time division. So, in what follows, we will consider the general achievability proof when $d \le K-1$.

Since the cellular network graph is assumed to be similar to the reference one by reordering the transmitters and/or receivers, we directly consider the referred network topology, because they are equivalent in terms of symmetric DoF with transmitter cooperation. For the referred network topology, the transmit set of Receiver $j$ is given by
\begin{align}
  \Tc_j = \{j,j+1,\dots,j+d-1\},
\end{align}
where all the receiver indices are modulo $K$, e.g., $j-K=j$ and $0=K$.
Thus, at Transmitter $i$ we send symbols with careful design
\begin{align*}
  \Xm_i = \Vm_{i+1} X_{i}(W_{i}^1) + \Vm_{i+2} X_{i}(W_{i-d+1}^2),  \forall~i=1,\dots,K
\end{align*}
where $\{\Vm_i,~i=1,\dots,K\}$ are $(d+1) \times 1$ random vectors, and linearly independent among any $(d+1)$ vectors, almost surely, $X_i(W_j)$ is the signal transmitted from Transmitter $i$ carrying on message $W_j$, and $W_j^1$, $W_j^2$ are two realizations (symbols) of message $W_j$.
The signals at Receiver $j$ during $d+1$ time slots, with coherence time $\tau_c \ge d+1$, can be compacted as
\begin{align*}
  \Ym_j &= \sum_{i \in \Tc_j} h_{ji} \Xm_i + \Zm_j\\
  &= \sum_{i =j}^{j+d-1} h_{ji} (\Vm_{i+1} X_{i}(W_{i}^1) + \Vm_{i+2} X_{i}(W_{i-d+1}^2)) + \Zm_j\\
  &= h_{j,j} \Vm_{j+1} X_{j}(W_{j}^1) + h_{j,j+d-1} \Vm_{j+d+1} X_{j+d-1}(W_{j}^2) \nn \\ & \qquad \qquad \qquad+ \sum_{i =j+1}^{j+d-1} h_{ji} \Vm_{i+1} X_{i}(W_{i}^1) + \sum_{i =j}^{j+d-2} h_{ji} \Vm_{i+2} X_{i}(W_{i-d+1}^2) + \Zm_j\\
  &= \underbrace{h_{j,j} \Vm_{j+1} X_{j}(W_{j}^1) + h_{j,j+d-1} \Vm_{j+d+1} X_{j+d-1}(W_{j}^2)}_{\DS}  \nn \\ & \qquad \qquad \qquad+ \underbrace{\sum_{i =j}^{j+d-2} \Vm_{i+2} (h_{j,i+1}  X_{i+1}(W_{i+1}^1) +  h_{j,i} X_{i}(W_{i-d+1}^2))}_{\AI} + \Zm_j.
\end{align*}
It is readily shown that the interferences occupy $d-1$ dimensional subspace out of the total $d+1$ dimensional space, leaving 2-dimensional interference-free subspace spanned by $\{\Vm_{j+1}, \Vm_{j+d+1}\}$ to the desired signals, such that the desired messages for Receiver $j$, $W_j^1$ and $W_j^2$, can be successfully recovered. This philosophy applies to all other receivers. During $d+1$ time slots, every receiver can decode two messages, yielding symmetric DoF of $\frac{2}{d+1}$.

\begin{figure}[htb]
\centering
\includegraphics[width=0.55\columnwidth]{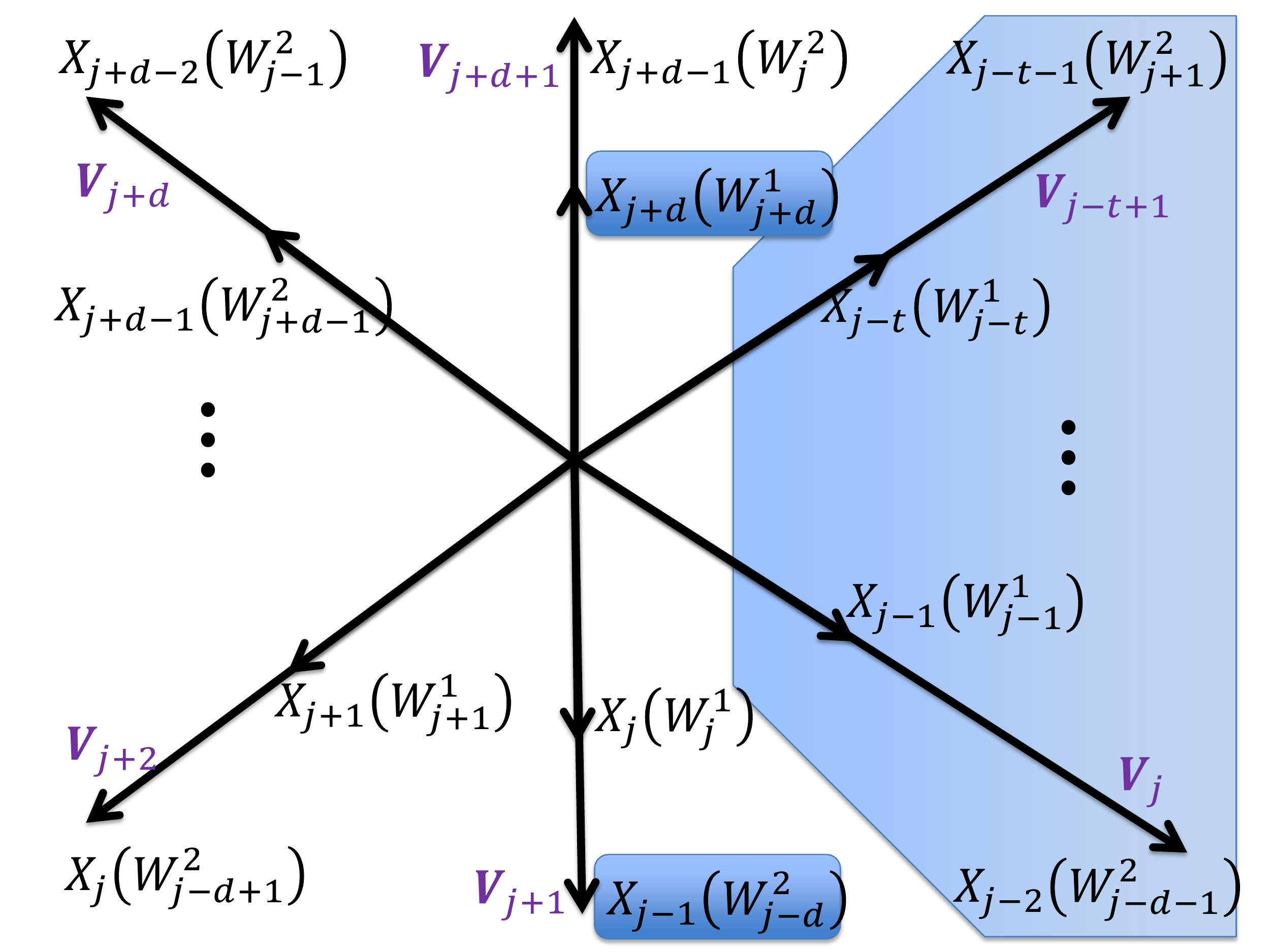}
\caption{Interference alignment for the general $(K,d)$ regular cellular networks.}
\label{fig:general}
\end{figure}

Geometrically, the interference alignment can be shown in Fig.~\ref{fig:general}, and also interpreted as follows. Transmitted signals $X_{j-1}(W_{j-1}^1)$ and $X_{j-2}(W_{j-d-1}^2)$ are aligned in the same subspace spanned by vector $\Vm_j$, which is absent to Receiver $k$ $(k \in \{j, \dots, j+K-3\})$. Note that $t \defeq K-d-1$ and $j-t=j+d+1$ modulo $K$. By deduction, the subspace spanned by $\{\Vm_{j+d+2}, \dots, \Vm_j\}$ are absent to Receiver $j$ (i.e., the shadow in Fig.~\ref{fig:general}), leaving $d+1$ linearly independent vectors $\{\Vm_{j+1},\dots,\Vm_{j+d+1}\}$ to span the space.  As such, the signal carrying $X_j(W_j^1)$ is aligned with $X_{j-1}(W_{j-d}^2)$ in the subspace spanned by $\Vm_{j+1}$, and $X_{j+d-1}(W_j^2)$ is aligned with $X_{j+d}(W_{j+d}^1)$ in the subspace spanned by $\Vm_{j+d+1}$. Note that the signals from Transmitter $j-1$ and $j+d$ cannot be heard by Receiver $j$ according to the network topology, such that $W_j^1$ and $W_j^2$ are free of interference, and retrievable from overall $d+1$ dimensional subspace.

It is worth noting that, although the message $W_j$ is shared among the transmitters $i~(\forall~i \in \Tc_j)$, its two realizations $W_j^1$ and $W_j^2$ are only utilized in this scheme by Transmitter~$j$ and Transmitter~$(j-d+1)$, respectively.

\subsection{Proof of Theorem~\ref{theorem:Com}}
\label{proof:Com}
In what follows, we present an outer bound with the aid of compound settings. As illustrated in Example \ref{ex:compound}, it is necessary to determine the least required compound receivers such that the noisy versions of $X_i$ can be recovered. Thus, we first look into this problem, given that a subset of messages is known {\em a priori}.

Consider a set of receivers $\Sc \subseteq \Kc$ satisfying $\cup_{j \in \Sc} \Tc_j = \Kc$. The received signals $Y_j$ $(j \in \Sc)$ at Receiver $j$ is a linear combination of $\{X_i, i \in \Tc_j\}$ polluted by noise. To recover the noisy versions of $\{X_i, i \in \Kc\}$, it requires at most $K-\abs{\Sc}$ extra linearly independent equations, which can be provided by compound receivers that are assume to be possessing the same topology as the original receivers and demanding the same messages. In the rest of the proof, we do not distinguish the original from the compound receivers explicitly.

In fact, in the present of a set of messages $W_\Sc$, the required number of compound receivers can be further reduced. According to transmitter cooperation, the transmitted signal $X_i^n$ is encoded with the messages $W_{\Rc_i}$. Being aware of $W_{\Rc_i}$, we are able to reconstruct the transmitted signals $X_{\Sc'}^n$, where
\begin{align}
\Sc' = \{i | {\Rc_i} \subseteq \Sc \}.
\end{align}
In other words, the knowledge of $W_{\Sc}$ is equivalent to that of $X_{\Sc'}^n$. With $X_{\Sc'}^n$, we can remove their contributions from the received signals. Denote by $Y_{j,i}$ and $\tilde{Y}_{j,i}$ the received signals of the $i$-th compound receiver of Receiver $j$ before and after removing the contribution of $X_{\Sc'}$, respectively, i.e.,
\begin{align}
Y_{j,i} &= \sum_{k \in \Tc_i} h_{j,i,k} X_k + Z_{j,i}\\
\tilde{Y}_{j,i} &= \sum_{k \in \Tc_i \backslash \Sc'} h_{j,i,k} X_k + Z_{j,i}.
\end{align}
where $h_{j,i,k}$ is drawn from the same distribution and independent of $h_{j,k}(t)$.
Let $\Tc_j'$ be the set of the least required compound receivers associated with Receiver $j$.
Thus, we collect all the compound signals and compact them as
\begin{align}
\tilde{\Ym}_{\Tc'_\Sc} = \Hm_{\Tc'_\Sc} \Xm_{\Kc\backslash\Sc'} + \Zm_{\Tc'_\Sc}
\end{align}
where $\Hm_{\Tc'_\Sc} \in \CC^{\sum_{j \in \Sc}\abs{\Tc_j'}  \times (K-\abs{\Sc'})}$ is the reduced channel matrix with the columns indexed by $\Sc'$ removed. It suffices to recover $X_{\Kc \backslash \Sc'}$ from $\tilde{\Ym}_{\Tc'_\Sc}$  as long as $\sum_{j \in \Sc} \abs{\Tc_j'} \ge K-\abs{\Sc'} $. We conclude that the required number of compound receivers can be reduced to $K-\abs{\Sc'}-\abs{\Sc}$, given the knowledge of $W_\Sc$.

Secondly, we proceed to present the outer bound of achievable rates of compound receivers. For the $i$-th compound receiver of Receiver $j$, by Fano's inequality, we have
\begin{align}
n (R_{j,i} - \epsilon_n) &\le I(W_j, Y_{j,i}^n | \Hc^n, \Gc) \\
&= h(Y_{j,i}^n | \Hc^n, \Gc) - h(Y_{j,i}^n | W_j, \Hc^n, \Gc)\\
&\le n \log P - h(Y_{j,i}^n | W_j, \Hc^n, \Gc) + n \cdot O(1)
\end{align}
where $R_{j,i}$ denotes the achievable rate of the $i$-th compound receiver, and is the same as $R_j$. Let $\sum_{j \in \Sc} \abs{\Tc_j'} = K-\abs{\Sc'} $. By adding all achievable rates of all compound receivers, we have
\begin{align}
\MoveEqLeft n \left(\sum_{j \in \Sc} \sum_{i \in \Tc_j'} R_{j,i} - \epsilon_n \right) \\
&\le n \sum_{j \in \Sc} \abs{\Tc_j'} \log P - h(\{Y_{j,i}^n, j\in \Sc, i\in \Tc_j'\} | W_\Sc, \Hc^n, \Gc) + n \cdot O(1) \label{eq:proof_com1}\\
&= n \sum_{j \in \Sc} \abs{\Tc_j'} \log P - h(\{Y_{j,i}^n, j\in \Sc, i\in \Tc_j'\} | W_\Sc, X_{\Sc'}^n, \Hc^n, \Gc) + n \cdot O(1) \label{eq:proof_com2}\\
&= n \sum_{j \in \Sc} \abs{\Tc_j'} \log P - h(\tilde{\Ym}_{\Tc'_\Sc}^n | W_\Sc, X_{\Sc'}^n, \Hc^n, \Gc) + n \cdot O(1) \label{eq:proof_com3}\\
&= n \sum_{j \in \Sc} \abs{\Tc_j'} \log P - h(\Xm_{\Kc \backslash \Sc'}^n + \Hm_{\Tc'_\Sc}^{-1} \Zm_{\Tc'_\Sc}^n | W_\Sc, X_{\Sc'}^n, \Hc^n, \Gc) + n \cdot O(1) \label{eq:proof_com4}\\
&= n \sum_{j \in \Sc} \abs{\Tc_j'} \log P - h(X_{\Kc\backslash\Sc'}^n + \bar{Z}_{\Kc\backslash\Sc'}^n | W_\Sc, X_{\Sc'}^n, \Hc^n, \Gc) + n \cdot O(1) \label{eq:proof_com5}\\
&= n \sum_{j \in \Sc} \abs{\Tc_j'} \log P - n \sum_{j \in  \Sc^c} R_j + n \cdot O(1)
\end{align}
where \eqref{eq:proof_com2} is due to the fact that the knowledge of $W_\Sc$ is equivalent to the knowledge of $X_{\Sc'}$ given topological information, \eqref{eq:proof_com3} is because translation does not change the differential entropy, \eqref{eq:proof_com4} is obtained because $\Hm_{\Tc'_\Sc}$ is a square matrix and has full rank almost surely, in \eqref{eq:proof_com5}, $\bar{Z}_{\Kc\backslash\Sc'}^n$ is the bounded noise terms, and the last inequality is from the decodable condition similar to that in \eqref{eq:decodable}.
By the definition of the symmetric DoF, it follows that
\begin{align}
d_\sym &\le \frac{\sum_{j \in \Sc} \abs{\Tc_j'}}{\sum_{j \in \Sc} \abs{\Tc_j'} + \abs{\Sc^c}}\\
&= \frac{K-\abs{\Sc'}}{2K-\abs{\Sc'} - \abs{\Sc}}.
\end{align}
Among all the possible $\Sc$, we have the outer bound of symmetric DoF
\begin{align}
d_\sym \le \min_{\Sc \subseteq \Kc} \frac{K-\abs{\Sc'}}{2K-\abs{\Sc'} - \abs{\Sc}}
\end{align}
where $\Sc$ and $\Sc'$ are subject to two constraints: $\cup_{j \in \Sc} \Tc_j = \Kc$ and $\Sc' = \{i |  \Rc_i \subseteq \Sc \}$.

\subsection{Proof of Corollary~\ref{cor:Wyner}}
\label{proof:Wyner}
When $K=2$, the network is fully connected and $d_\sym=\frac{1}{2}$ is optimal. So, in the rest of the proof, we focus on $K \ge 3$. From the graph theoretic perspective, any two $(K,2)$-regular networks are similar, because they are in fact the same cycle with rearranged vertices. Hence, it suffices to consider one typical topology of the $(K,2)$-regular networks, e.g., a $K$-cell cyclic Wyner network, for the convenience of presentation. The received signal at Receiver $j$ of the $K$-cell cyclic Wyner model can be given as
\begin{align}
  Y_j = h_{j,j-1} X_{j-1} + h_{j,j} X_j + Z_j
\end{align}
where the indices are modulo $K$, and $W_i, W_{i+1}$ are the only accessible messages to Transmitter $i$. In what follows, we will present first the converse, followed by the achievability proof.

\subsubsection{Converse}
We consider two cases when $K$ is even or odd.
\begin{itemize}
  \item $K$ is even: Let $\Sc=\{1,3,\dots,K-1\}$ and $\Sc'=\emptyset$. Consider the received signals $Y_\Sc$ and the signals of their respective compound receivers $\tilde{Y}_\Sc$. Following the proof of the general case, we have
      \begin{align}
        2n \sum_{j \in \Sc} R_j 
       &\le n K \log P - h(Y_\Sc^n,\tilde{Y}_\Sc^n | W_\Sc, \Hc^n, \Gc)\\
        &= n K \log P - n (R_2 + R_4 + \dots + R_K) + n \cdot O(1)
      \end{align}
      where the noisy version $\{X_i^n+\bar{Z}_i^n, i \in \Kc\}$ can be recovered from $K$ linearly independent equations.
      Thus, with $\abs{\Sc}=\frac{K}{2}$ and $\abs{\Sc'}=0$, it follows that
      \begin{align}
        d_\sym \le \frac{K}{K+K/2} = \frac{2}{3}.
      \end{align}
  \item $K$ is odd: Let $\Sc=\{1,3,\dots,K-2,K\}$ and $\Sc'=\{K\}$. Consider here the received signals $Y_\Sc$ and the signals of their respective compound receivers $\tilde{Y}_{\Sc \backslash \{K-2,K\}}$. Similarly, we have
      \begin{align}
        \MoveEqLeft 2n \sum_{j \in {\Sc \backslash \{K-2,K\}}} R_j + n R_{K-2} + n R_K \\
        &\le (K-1) \log P - h(Y_\Sc^n, \tilde{Y}_{\Sc \backslash \{K-2,K\}}^n | W_\Sc, \Hc^n, \Gc)\\
        &= n(K-1) \log P - h(Y_\Sc^n, \tilde{Y}_{\Sc \backslash \{K-2,K\}}^n | W_\Sc, X_K^n, \Hc^n, \Gc)\\
        &= n(K-1) \log P - n(R_2+R_4+\dots+R_{K-1})
      \end{align}
      where $X_K^n$ is reproducible with $W_1$ and $W_k$, and the noisy version $\{X_i^n+\bar{Z}_i^n, i \in \Kc \backslash K\}$ can be recovered from $K-1$ linearly independent equations.
      Thus, with $\abs{\Sc}=\frac{K+1}{2}$ and $\abs{\Sc'}=1$, it follows that
      \begin{align}
        d_\sym \le \frac{K-1}{K-1 + \frac{K-1}{2}} = \frac{2}{3}.
      \end{align}
\end{itemize}
To sum up, we have $d_\sym \le \frac{2}{3}$ whenever $K$ is even or odd.

\subsubsection{Achievability}
Although the general achievability proof has been presented with general $d$, we make it concrete here for $d=2$.
During three time slots, we send at Transmitter~$i$
\begin{align}
\Xm_i = \Vm_{i-1} X_i(W_{i+1}^1) + \Vm_{i} X_i(W_i^2)
\end{align}
where $\{\Vm_i, i=1,\dots,n\}$ are $3 \times 1$ vectors satisfy that any three of them are linearly independent, almost surely. At Receiver~$j$, we have
\begin{align}
  \Ym_{j} &= h_{j,j-1} \Xm_{j-1} + h_{j,j} \Xm_{j} + \Zm_{j}\\
  &= h_{j,j-1} (\Vm_{j-2} X_{j-1}(W_{j}^1) + \Vm_{j-1} X_{j-1}(W_{j-1}^2)) \nn \\ & \qquad + h_{j,j} (\Vm_{j-1} X_{j}(W_{j+1}^1) + \Vm_{j} X_{j}(W_{j}^2)) + \Zm_{j}\\
  &= \underbrace{h_{j,j-1} \Vm_{j-2} X_{j-1}(W_{j}^1) + h_{j,j} \Vm_{j} X_{j}(W_{j}^2)}_{\DS}  \nn \\ &\qquad + \underbrace{\Vm_{j-1}(h_{j,j-1} X_{j-1}(W_{j-1}^2) + h_{j,j} X_{j}(W_{j+1}^1))}_{\AI} + \Zm_{j}.
\end{align}
The interferences carrying messages $W_{j-1}$ and $W_{j+1}$ are aligned together in the direction of $\Vm_{j-1}$, leaving two-dimensional interference-free subspace for desired signals carrying on message realizations $W_j^1$ and $W_j^2$. Therefore, two messages are delivered during three time slots, yielding $\frac{2}{3}$ DoF per user, which coincides with the outer bound. This completes the proof of optimality.

\subsection{Proof of Theorem~\ref{theorem:Arb-more}}
\label{proof:Arb-more}
\subsubsection{$d_\sym = \frac{1}{\kappa}$ is achievable}
\label{proof:pp}
According to the definition of proper partition, for a portion $\Pc_i=\{i_1,\dots,i_{p_i}\}$, we assume with $k=1,\dots,p_i$ that
\begin{align}
       z_{i_k} \in   \Tc_{i_k} \bigcap \left(\bigcup_{i_j \in \Pc_i \backslash i_k} \Tc_{i_j}\right)^c , \quad \forall~i_k \in \Pc_i.
\end{align}
with $z_{i_k} \in \Tc_{i_k}$ and $z_{i_k} \notin \Tc_{i_j}, ~\forall~j \ne k$.
Thus we send $\{X_{z_{i_k}}(W_{i_k}), k=1,\dots,p_i\}$ at Transmitter $z_{i_k}$ via the same precoder $\Vm_i \in \CC^{\kappa \times 1}$, yielding the receiver signal at Receiver $i_k$ in a block fading channel (e.g., $\tau_c \ge \kappa$) as
\begin{align}
\Ym_{i_k} &= \sum_{j=1}^\kappa \Vm_j \left( \sum_{s=1}^{p_j} h_{i_k,z_{j_s}} X_{z_{j_s}}(W_{j_s}) \mathbf{1}(z_{j_s} \in \Tc_{i_k}) \right) \\
&= \Vm_i  h_{i_k,z_{i_k}} X_{z_{i_k}}(W_{i_k}) + \Vm_i \left( \sum_{s=1,s \neq k}^{p_i} h_{i_k,z_{i_s}} X_{z_{i_s}}(W_{i_s}) \mathbf{1}(z_{i_s} \in \Tc_{i_k}) \right) \nn \\ & \qquad+ \sum_{j=1,j \neq i}^\kappa \Vm_j \left( \sum_{s=1}^{p_j} h_{i_k,z_{j_s}} X_{z_{j_s}}(W_{j_s}) \mathbf{1}(z_{j_s} \in \Tc_{i_k}) \right) \\
&= \underbrace{ \Vm_i  h_{i_k,z_{i_k}} X_{z_{i_k}}(W_{i_k})}_{\DS} + \underbrace{\sum_{j=1,j \neq i}^\kappa \Vm_j \left( \sum_{s=1}^{p_j} h_{i_k,z_{j_s}} X_{z_{j_s}}(W_{j_s}) \mathbf{1}(z_{j_s} \in \Tc_{i_k}) \right)}_{\AI}
\end{align}
with which the desired signal can be retrieved with high probability during $\kappa$ time slots. This applies to all messages and offers $\frac{1}{\kappa}$ DoF per user. For the time-varying channel (i.e., $\tau_c=1$), by setting $\Vm_i$ to be the $i$-th column of $\Id_{\kappa}$, the same symmetric DoF are still achievable. This confirms our argument that interference alignment is a general form of interference avoidance.

\subsubsection{$d_\sym = \frac{1}{\kappa-q}$ is achievable}

The achievability is similar to the previous case, but the required number of subspace dimension is reduced. By assuming similarly
\begin{align}
m \in \arg \min_{i} \sum_j \Am_{ij},
\end{align}
we have $\sum_{j} \Am_{mj} = q$
and
$
\mathbf{f}_{\idx}^{-1}(\Am_{m}^\T) = \Jc_q.
$

According to the definition of proper partition, there exists $z_{i_k}$ with $i \in \{1,\dots,\kappa\}$ such that
\begin{align}
z_{i_k} &\in \Tc_{i_k} \bigcap \left(\bigcup_{i_j \in \Pc_i \backslash i_k} \Tc_{i_j}\right)^c, \quad \forall~i_k \in \Pc_i
\end{align}
with $z_{i_k} \in \Tc_{i_k}, \forall~i$,
and according to the alignment non-conflict matrix, if $\Am_{mj}=1$, then
\begin{align}
 \Tc_{j_{t}} \bigcap \left(\bigcup_{j_{s} \in \Pc_{j} \backslash j_{t}} \Tc_{j_{s}}\right)^c \bigcap_{i:\Am_{mj}=1} \Tc_{m_k}^c \neq \emptyset, \quad \forall~j_t \in \Pc_j, \forall~m_k \in \Pc_m,
\end{align}
meaning that this is non-conflicting to make the messages in portion $\Pc_j$ aligned with the spanned subspace absent to all the receivers in $\Pc_m$.
It follows that,  there exists $z_{j_t}$ with $j \in \Jc_q$, such that
\begin{align}
z_{j_t} &\in \Tc_{j_{t}} \bigcap \left(\bigcup_{j_{s} \in \Pc_{j} \backslash j_{t}} \Tc_{j_{s}}\right)^c \bigcap \Tc_{m_{k}}^c, \quad \forall~m_{k} \in \Pc_{m}, j_t \in \Pc_j
\end{align}
with $z_{j_t} \notin \Tc_{m_k}$, $\forall~m_k \in \Pc_{m} , j_t \in \Pc_j$. With channel coherence time $\tau_c \ge \kappa-q$, the channel coefficients keep constant throughout the communication. As such, the received signal at Transmitter $m_k$ with $m_k \in \Pc_{m}$ can be given as
\begin{align}
\Ym_{m_k} &= \sum_{l=1}^\kappa \Vm_l \left( \sum_{s=1}^{p_l} h_{m_k,z_{l_s}} X_{z_{l_s}}(W_{l_s}) \mathbf{1}(z_{l_s} \in \Tc_{m_k}) \right) \\
&= \sum_{l=1,l \notin \Jc_q}^\kappa \Vm_l \left( \sum_{s=1}^{p_l} h_{m_k,z_{l_s}} X_{z_{l_s}}(W_{l_s}) \mathbf{1}(z_{l_s} \in \Tc_{m_k}) \right) \\
&= \Vm_{m} h_{m_k,z_{m_k}} X_{z_{m_k}}(W_{m_k}) \mathbf{1}(z_{m_k} \in \Tc_{m_k}) \nn \\& \qquad \qquad  + \Vm_{m} \left( \sum_{s=1,s \neq k}^{p_{m}} h_{m_k,z_{m_s}} X_{z_{m_s}}(W_{m_s}) \mathbf{1}(z_{m_s} \in \Tc_{m_k}) \right) \nn \\ & \qquad \qquad \qquad \qquad+ \sum_{l=1, l \neq m, l \notin \Jc_q}^\kappa \Vm_l \left( \sum_{s=1}^{p_l} h_{m_k,z_{l_s}} X_{z_{l_s}}(W_{l_s}) \mathbf{1}(z_{l_s} \in \Tc_{m_k}) \right) \\
&= \Vm_{m} h_{m_k,z_{m_k}} X_{z_{m_k}}(W_{m_k}) \nn \\ &\qquad \qquad + \sum_{l=1, l \neq m, l \notin \Jc_q}^\kappa \Vm_l \left( \sum_{s=1}^{p_l} h_{m_k,z_{l_s}} X_{z_{l_s}}(W_{l_s}) \mathbf{1}(z_{l_s} \in \Tc_{m_k}) \right)
\end{align}
where $\Vm_l \in \CC^{(\kappa-q) \times 1}$ is sufficient to recover desired message $W_{m_k}$, yielding $\frac{1}{\kappa-q}$ DoF. According to the definition of $q$, this $\kappa-q$ dimensional space suffices to support all other receivers. Thus, symmetric DoF of $\frac{1}{\kappa-q}$ are achievable, almost surely.  

\subsection{Proof of Theorem~\ref{theorem:Cov}}
\label{proof:Cov}
In this theorem, we represent the achievable symmetric DoF of the TIM-CoMP problem by a graph-theoretic parameter, i.e., fractional covering number. To this end, we will bridge our problem to the hypergraph fractional covering problem, which is in general a set covering problem.

First of all, we construct such a hypergraph $\Hc_\Gc$ according to the network topology.
From the definition of proper partition, it follows that if a set $\Xc_i \defeq \{{i_1},{i_2},\dots,{i_{\abs{\Xc_i}}}\} \subseteq \Kc$ satisfies
\begin{align}
  \Tc_{{i_k}} \bigcap \left(\bigcup_{{i_j} \in \Xc_i \backslash {i_k}} \Tc_{{i_j}}\right)^c \neq \emptyset, \quad \forall~{i_k} \in \Xc_i, \label{eq:cover_con}
\end{align}
then any two messages in $W_{\Xc_i}$ are mutually alignment-feasible. The messages $\{W_{{i_k}},{i_k} \in \Xc_i\}$ can be sent from the transmitters $\{z_{i_k},{i_k} \in \Xc_i\}$, respectively, in the form of $X_{z_{i_k}}(W_{{i_k}})$ with the same precoding vector $\Vm_{i}$ (Alternatively, the links from Transmitter $z_{i_k}$ to Receiver ${i_k}$ $(k=1,\dots,\abs{\Xc_i})$ can be scheduled at the same time slot), where
\begin{align}
  z_{i_k} \in \Tc_{{i_k}} \bigcap \left(\bigcup_{{i_j} \in \Xc_i \backslash {i_k}} \Tc_{{i_j}}\right)^c.
\end{align}
As such, only one transmitted signal $X_{z_{i_k}}(W_{{i_k}})$ is active in subspace spanned by $\Vm_i$ at Receiver ${{i_k}}$, and hence $W_{{i_k}}$ is recoverable from this subspace. The presence of the subspace spanned by $\Vm_{i}$ carrying on messages with indices in $\Xc_i$ guarantees the successful delivery of the messages in $W_{\Xc_i}$. Thus, the set $\Xc_i$ can serve as a hyperedge of $\Hc_\Gc$.
Any set of elements in $\Kc$ that satisfies the condition \eqref{eq:cover_con} serves as a hyperedge. As a result, the hypergraph $\Hc_\Gc$ is constructed with vertex set $\Kc$ and hyperedge set being enumeration of all possible sets of elements that satisfy condition \eqref{eq:cover_con}.

Our problem to find the symmetric DoF is equivalent to the covering problem of this hypergraph to find the minimum number of hyperedges $\{\Xc_i, i=1,2,\dots,\tau_t\}$ such that each $j \in \Kc$ appears at least $t$ of the $\Xc_i$'s.
According to the definition of hypergraph covering in Appendix A, the minimum number of hyperedges that meets the covering problem can be represented by the $t$-fold covering number. With hyperedge cover of $\tau_t(\Hc_\Gc)$ times, each vertex in $\Kc$ is covered at least $t$ times, meaning that within a $\tau_t(\Hc_\Gc)$ dimensional subspace spanned by $\Vm_i \in \CC^{\tau_t \times 1}, i=1,\dots,\tau_t$, each $W_j, j\in \Kc$ can be delivered $t$ times free of interference. As a consequence, the achievable symmetric DoF can be represented by
\begin{align}
  d_\sym = \sup_t \frac{t}{\tau_t(\Hc_\Gc)} = \frac{1}{\tau_f(\Hc_\Gc)},
\end{align}
where $\tau_f(\Hc_\Gc)$ is the hypergraph fractional covering number as defined in Appendix A.

\subsection{Proof of Theorem~\ref{theorem:Index}}
\label{proof:Index}
The proof follows the channel enhancement approach in \cite{Jafar:2013TIM} with slight modification by taking transmitter cooperation (i.e., message sharing) into account.
We brief the steps of the channel enhancement as follows.
\begin{itemize}
\item
Denote by $\Cc_1$ the capacity region of the TIM-CoMP problem, where Transmitter $i$ is endowed with the messages desired by its connected receivers, i.e., $W_{\Rc_i}$, for all $i \in \Kc$.
  \item $\forall~k,j \in \Kc$, if $j \in \Tc_k$, we specify
  \begin{align}
    h_{kj} = \sqrt{\frac{SNR}{P_j}}
  \end{align}
  which will not impact on the reliability of the capacity-achieving coding scheme.
  \item $\forall~k,j \in \Kc$, if $j \notin \Tc_k$, we provide $W_{\Rc_j}$ to Receiver $k$ as side information, and connect the missing link by setting the channel coefficient as a non-zero value
  \begin{align}
    h_{kj} = \sqrt{\frac{SNR}{P_j}},
  \end{align}
  where the newly enabled interferences from Transmitter $j$ can be eliminated given the side information $W_{\Rc_j}$.
  \item Allowing full transmitter cooperation and full CSIT, the channel turns out an MISO channel to each receiver, where all received signals are statistically equivalent. Denote by $\Cc_2$ the capacity region of current channel. The capacity region is not diminished, i.e., $\Cc_1 \subseteq \Cc_2$.
  \item With the network equivalence theorem \cite{Koetter}, the MISO channel can be replaced by a noise-free link with finite capacity, as the bottleneck link of index coding problem with capacity region $\Cc_3$.
\end{itemize}
It is noticed that all the above steps do not reduce the capacity region, i.e., $\Cc_1 \subseteq \Cc_2 \subseteq \Cc_3$, such that the capacity region of the index coding problem with side information $\cup_{j \in \Tc_k^c} W_{\Rc_j}$ can serve as an outer bound of our problem.

\subsection{Proof of Corollary~\ref{cor:1-K-optimal}}
\label{proof:1-K-optimal}

First, we prove the sufficient condition that, if the demand graph of index coding problem $\IC(k | \Sc_k)$ with $\Sc_k=\cup_{j \in \Tc_k^c} \Rc_j$ is acyclic or $G_{AFG}$ is an empty graph, then the optimal symmetric DoF value is $\frac{1}{K}$. To this end, we only need to prove the following chain:
\begin{align*}
\text{$\Gc_{AFG}=\emptyset$ $\Rightarrow$ acyclic demand graph $\IC(k | \Sc_k)$ $\Rightarrow$ $\frac{1}{K}$ is optimal}.
\end{align*}

Being aware of the fact that the symmetric DoF $\frac{1}{K}$ can be trivially achieved by time division, we only have to prove $\frac{1}{K}$ is also an outer bound. From Corollary 7 in \cite{Jafar:2013TIM}, the necessary and sufficient condition to achieve symmetric capacity of $\frac{1}{K}$ per message is that the message demand graph is acyclic. Thus, if the demand graph is acyclic, then the TIM-CoMP problem is upper bounded by $\frac{1}{K}$. That is, the second part of the chain is true. By this, the sufficiency can be proved if the first part of the chain is also true.
We construct the proof of this statement by contraposition, i.e.,
\begin{align*}
&\text{if the demand graph of the index coding problem $\IC(k|\Sc_k)$ is not acyclic,} \\  &\text{then $\Gc_{AFG} \neq \emptyset$, i.e., $\exists~i,~j$, such that $\Tc_i \nsubseteq \Tc_j$ and $\Tc_j \nsubseteq \Tc_i$.}
\end{align*}

To prove this contraposition, we first consider there exists a cycle involving only two messages, e.g., $W_m$ and $W_n$, in the demand graph. Thus, we have $m \in \Sc_n = \cup_{j \in \Tc_{n}^c} \Rc_j$ and $n \in \Sc_m = \cup_{j \in \Tc_{m}^c} \Rc_j$, while $m \notin \Sc_m = \cup_{j \in \Tc_{m}^c} \Rc_j$ and $n \notin \Sc_n = \cup_{j \in \Tc_{n}^c} \Rc_j$, such that there exist $j_1 \in \Tc_{n}^c$ and $j_2 \in \Tc_{m}^c$ where $m \in \Rc_{j_1}$ and $n \in \Rc_{j_2}$ whereas $m \notin \Rc_{j_2}$ and $n \notin \Rc_{j_1}$. This leads to $\Rc_{j_1} \nsubseteq \Rc_{j_2}$ and $\Rc_{j_2} \nsubseteq \Rc_{j_1}$. Equivalently, there exist $t_1 \in \Rc_{j_1}$ and $t_2 \in \Rc_{j_2}$, such that $\Tc_{t_1} \nsubseteq \Tc_{t_2}$ and $\Tc_{t_2} \nsubseteq \Tc_{t_1}$, because both conditions imply the same alignment feasibility, where $X_{j_1}(W_{t_1})$ and $X_{j_2}(W_{t_2})$ can be aligned in the same subspace. Consequently, two messages $W_{t_1}$ and $W_{t_2}$ are alignment-feasible, and therefore connected in $\Gc_{AFG}$. Thus, $\Gc_{AFG} \neq \emptyset$ is proven.

Furthermore, we consider the smallest cycle involving more than two messages, i.e., $i_1 \to i_2 \to \dots \to i_s \to i_1$ with directed edge from Message $i_m$ to Receiver $i_{m}$ then via Message $i_{m+1}$ to Receiver $i_{m+1}$ and so on, for $m=1,2,\dots,s$, with modulo applied to the indices. Given the smallest cycle in the directed demand graph, we have 
\begin{align}
i_{m+1} &\in \Sc_{i_m} = \cup_{j \in \Tc_{i_m}^c} \Rc_j, \label{eq:indx-in-set} \\
i_{m+1} & \notin \Sc_{i_n} = \cup_{j \in \Tc_{i_n}^c} \Rc_j, \quad \forall~n \in \{1,2,\dots,s\}, ~~n \ne m. \label{eq:indx-notin-set}
\end{align}
From \eqref{eq:indx-in-set}, it is readily verified that there must exist $j_m \in \Tc_{i_m}^c$, such that $i_{m+1} \in \Rc_{j_m}$. By set $n=m-1$ and $n=m+1$ respectively in \eqref{eq:indx-notin-set}, we have
\begin{align}
\forall&~j_{m-1} \in \Tc_{i_{m-1}}^c, \quad i_{m+1} \notin \Rc_{j_{m-1}},\\
\forall&~j_{m+1} \in \Tc_{i_{m+1}}^c, \quad i_{m+1} \notin \Rc_{j_{m+1}}.
\end{align}
It follows that $\Tc_{i_m}^c \nsubseteq \Tc_{i_{m-1}}^c$ and $\Tc_{i_m}^c \nsubseteq \Tc_{i_{m+1}}^c$  for all $m$, and in turn
\begin{align}
\Tc_{i_{m-1}} \nsubseteq \Tc_{i_{m}}, \quad \text{and} \quad \Tc_{i_{m+1}} \nsubseteq \Tc_{i_{m}}, \quad \forall~m.
\end{align}
Otherwise, it results in contradictions with $i_{m+1} \in \Rc_{j_m}$. Recalling that $i_1 \to i_2 \to \dots \to i_s \to i_1$ forms a cycle, we conclude that $\Tc_{i_m} \nsubseteq \Tc_{i_{m+1}}$ and $\Tc_{i_{m+1}} \nsubseteq \Tc_{i_{m}}$ for all $m \in \{1,2,\dots,s\}$, and therefore messages $W_{i_m}$ and $W_{i_{m+1}}$ are joint with an edge. Thus, we conclude that, if there exist a cycle in demand graph, then $\Gc_{AFG} \neq \emptyset$. 

Consequently, its contraposition is equivalently proven: if $\Gc_{AFG} = \emptyset$, then the corresponding demand graph is acyclic. Thus, the first part of the chain is true, and in turn the necessity is proven.

Second, we prove the necessary condition that, if the optimal symmetric DoF value is $\frac{1}{K}$, then the demand graph is acyclic and $\Gc_{AFG}$ is an empty graph. We achieve this goal by constructing a proof by contraposition, i.e., if $\Gc_{AFG}$ is not empty, or the demand graph of index coding problem $\IC(k | \Sc_k)$ is cyclic, then the symmetric DoF value $\frac{1}{K}$ is sub-optimal. To this end, we only need to prove the following chain
\begin{align*}
\text{ Cyclic demand graph $\IC(k | \Sc_k)$ $\Rightarrow$ $\Gc_{AFG} \neq \emptyset$ $\Rightarrow$ $\frac{1}{K}$ is suboptimal.}
\end{align*}
As proved above, if $\Gc_{AFG} = \emptyset$, then the demand graph of $\IC(k|\Sc_k)$ is acyclic. By contraposition, if the demand graph is cyclic, then $\Gc_{AFG} \neq \emptyset$. That is, the first part of the chain is true. Let us focus on the second part of the chain.
Assume there exists an edge $e_{ij}$ in $\Gc_{AFG}$, which implies that $W_i$ and $W_j$ are alignment feasible, i.e., $\Tc_i \nsubseteq \Tc_j$ and $\Tc_i \nsubseteq \Tc_j$. 
According to the definition of proper partition, we have a partition with size $K-1$ where $W_i$ and $W_j$ belong to one portion and the rest $K-2$ messages form $K-2$ portions, respectively, such that $d_\sym=\frac{1}{K-1}$ is achievable. Thus, the statement of the second part of the chain is automatically implied. Had proven the chain, the necessity is obtained. 

Given the necessity and sufficiency, the proof is completed.

\section*{acknowledgement}
Insightful discussion with Prof. Syed A.~Jafar and Hua Sun at The University of California, Irvine is thankfully acknowledged. The authors would like to gratefully acknowledge many insightful discussion with the Associate Editor and timely, insightful and constructive feedback received from the anonymous reviewers to help improve both the technical content and the presentation of this paper.



\end{document}